\documentclass[peer review]{IEEEtran}

\IEEEoverridecommandlockouts

\usepackage[cmex10]{amsmath}
\usepackage{amssymb,amsthm,color,graphicx,algorithm,algorithmic}

\hyphenation{net-works ana-logy}

\newtheorem{thm}{Theorem}
\newtheorem{lem}[thm]{Lemma}

\newtheorem{cor}[thm]{Corollary}
\newtheorem{defini}[thm]{Definition}
\newenvironment{defi}{\begin{defini}\rm}{\end{defini}}

\theoremstyle{definition}
\newtheorem{ex}[thm]{Example}
\newtheorem{rema}[thm]{Remark}
\newenvironment{rem}{\begin{rema}\rm}{\end{rema}}

\newcommand{\field}[1]{\mathbb{#1}}
\newcommand{\F}{\field{F}}

\newcommand{\cM}{{\mathcal M}}

\newcommand{\PG}{\mathcal{P}_{q}(n)}
\newcommand{\Gr}{\mathcal{G}_{q}(k,n)}
\newcommand{\G}{\mathcal{G}_{q}(k,n)}
\newcommand{\Uvs}{\mathcal{U}}
\newcommand{\Vvs}{\mathcal{V}}
\newcommand{\Cvs}{\mathcal{C}}
\newcommand{\C}{\mathcal{C}}
\newcommand{\Rvs}{\mathcal{R}}
\newcommand{\rs}{\mathrm{rs}}
\newcommand{\rk}{\mathrm{rank}}
\newcommand{\enc}{\mathrm{enc}}
\newcommand{\GL}{\mathrm{GL}}
\newcommand{\stab}{\mathrm{stab}}
\newcommand{\ord}{\mathrm{ord}}

\begin{document}

\title{Message Encoding and Retrieval for Spread and Cyclic Orbit Codes}

\author{\IEEEauthorblockN{Anna-Lena Horlemann-Trautmann}\\
\IEEEauthorblockA{Algorithmics Laboratory\\
EPF Lausanne, Switzerland\\
Email: anna-lena.horlemann@epfl.ch}\\
\thanks{Parts of this work were presented at the IEEE International Symposium on Information Theory (ISIT) 2014 in Honolulu, USA, and appear in its proceedings \cite{tr14p}.
The author was partially supported by Swiss National Science Foundation Fellowship No.\ 147304.}
}

\maketitle

\begin{abstract}
Spread codes and cyclic orbit codes are special families of constant dimension subspace codes. These codes have been well-studied for their error correction capability, transmission rate and decoding methods, but the question of how to encode and retrieve messages has not been investigated. In this work we show how a message set of consecutive integers can be encoded and retrieved for these two code families.
\end{abstract}

\begin{IEEEkeywords}
Message encoding, network coding, constant dimension codes, subspace codes, Grassmannian, enumerative coding, orbit codes, finite spreads, discrete logarithm.
\end{IEEEkeywords}

\section{Introduction}

Random network coding has received much attention in the last decade. \emph{Subspace codes}, first introduced in \cite{ko08}, are a class of codes used for random network coding. 
They are defined to be sets of subspaces of some given ambient space $\F_q^n$ of dimension $n$ over the finite field $\F_q$ with $q$ elements. When we restrict ourselves to subspace codes, whose codewords all have the same dimension $k$, we talk about \emph{constant dimension codes}.

A closely related area of research are rank-metric codes. These codes have already been studied before subspace codes, and it is known that one can construct optimal rank-metric codes, called maximum rank distance (MRD) codes, for any set of parameters. In \cite{de73,ga85a} a general construction for MRD codes was given. These codes are also called Gabidulin codes and they can be represented as a linear block code over some extension field of the underlying field.

Since the main idea of coding theory is to transmit information through a communication channel, any code should be able to encode information, or, in other words, \emph{encode messages} from a given message set. For generality, we will choose as message set the first non-negative integers $\mathcal{M}=\{0,1,2, \dots \}$. A message encoding map is an injective map from $\mathcal{M}$ to the code. The corresponding \emph{message retrieval} map is the inverse of the encoding map. From an application point of view it is very important that a code has an efficiently computable message encoding and corresponding retrieval map, since these need to be computed for every information transmission. 

In the seminal paper \cite{ko08} a class of Reed-Solomon-like constant dimension codes is proposed, which was later on shown to be equivalent to the lifting of Gabidulin codes \cite{si08j}. Due to the linearity of the Gabidulin code over an extension field, there exist efficient message encoding and retrieval maps for these codes, in analogy to the encoding and retrieval maps of linear block codes. 

During the last years other constructions of subspace codes were developed, e.g.\ in \cite{bo09,et09,et12,et11,ga11,ga10,gl14,go14,ko08p,ma08p,si15,sk10,tr11a,tr10}. Some of these constructions have the mere purpose of giving an improved transmission rate (i.e., larger cardinality of the code for the same parameters), while others have algebraic structure that can be used e.g.\ for decoding. 
The constructions from \cite{et09,si15} are based on the Reed-Solomon-like construction from \cite{ko08}, hence message encoding and retrieval can still be done based on the linearity of the underlying Gabidulin codes. 

However, for most of the other known subspace code constructions, the corresponding codes cannot be represented as a linear block code over some extension field, and it is not obvious how message encoding and message retrieval can be done for these codes.  
Surprisingly, this problem of message encoding and retrieval has received little attention in the before-mentioned and other related papers and will be the topic of this paper. We want to study this problem for two classes of subspace codes, namely \emph{spread codes} \cite{go14,ma08p} and \emph{orbit codes} \cite{gl14,tr11a,tr10}.
These two classes are of particular interest, since spread codes are optimal (see e.g.\ \cite{tr13phd}) with respect to their rate for a given error correction capability (and thus achieve a better rate than the codes from \cite{ko08}), and orbit codes have a lot of structure, which gives rise to code constructions and efficient error correcting decoding algorithms (see e.g.\ \cite{tr11a}).

The paper is organized as follows: In the following section we will give some preliminaries about finite fields and subspace codes, among others the definitions and constructions of spread codes and orbit codes. In Section \ref{sec:comprelim} we derive some preliminary results on computational complexities of tasks that we need later on in our message encoding and retrieval algorithms. In Section \ref{sec:spread} we derive an efficient encoding map for Desarguesian spread codes. In Section \ref{sec:orbit} we investigate message encoding for cyclic orbit codes. In Section \ref{sec:hybrid} we briefly describe how a message encoding and retrieval algorithm for a given subspace code can be combined with an error correcting decoding algorithm for another, semi-linearly isometric, code.
We conclude this work in Section \ref{conclusion}.


\section{Preliminaries}\label{sec:preliminaries}

In this section we give all the preliminaries we will need later on in the paper. We will first introduce finite fields and recall known results related to finite fields. Then we will do the same for subspace codes, where we also define spread and orbit codes. In the third subsection we define message encoding and retrieval maps and give a short overview of known results.

\subsection{Finite Fields}

In this subsection we recall some known facts about finite fields. The definitions and results can be found in any textbook on finite fields, e.g.\ in \cite{li94}.

Let $q$ be a prime power. We denote the finite field with $q$ elements by $\F_q$. 

\begin{defi}
A polynomial $p(x) \in \F_q[x]$ is called \emph{irreducible}, if it cannot be factored into the product of two non-constant polynomials of $\F_q[x]$. 
\end{defi}

\begin{lem}
\begin{enumerate}
\item If $\deg(p(x)) = k$, $p(x)$ is irreducible and $\alpha$ is a root of $p(x)$, then $\F_{q^k}\cong \F_q[\alpha]$.
\item If furthermore $\ord(\alpha)=q^k-1$, we say that $\alpha$ and $p(x)$ are \emph{primitive}.  In this case, $\F_{q^k}^* \cong \langle \alpha \rangle$. 
\end{enumerate}
 \end{lem}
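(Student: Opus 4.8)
The plan is to treat the two parts separately, invoking only the standard structure theory of finite fields. For part (1), I would first observe that since $p(x)$ is irreducible of degree $k$ over $\F_q$, the principal ideal $(p(x))$ is maximal in the principal ideal domain $\F_q[x]$, so the quotient $\F_q[x]/(p(x))$ is a field. Evaluation at the root $\alpha$ gives a surjective ring homomorphism $\F_q[x] \to \F_q[\alpha]$ whose kernel is exactly $(p(x))$ (it contains $p(x)$ and is a proper ideal, hence equals the maximal ideal generated by $p(x)$), so $\F_q[\alpha] \cong \F_q[x]/(p(x))$ is a field. As an $\F_q$-vector space this field has basis $1, \alpha, \dots, \alpha^{k-1}$, hence cardinality $q^k$. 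Finally I would appeal to the uniqueness of the finite field of a given order: any field with $q^k$ elements is isomorphic to $\F_{q^k}$, giving $\F_q[\alpha] \cong \F_{q^k}$.

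For part (2), I would note that $\F_{q^k}^*$ is a group of order $q^k - 1$, and by hypothesis $\alpha$ has multiplicative order $q^k - 1$. Hence the cyclic subgroup $\langle \alpha \rangle$ generated by $\alpha$ has exactly $q^k - 1$ elements, so it must be all of $\F_{q^k}^*$; that is, $\F_{q^k}^* = \langle \alpha \rangle$ is cyclic with generator $\alpha$. (One may alternatively derive existence of such an $\alpha$ from the classical fact that $\F_{q^k}^*$ is cyclic, but for the stated implication only the order-counting argument is needed.)

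Since both parts are recollections of textbook material, there is no genuine obstacle; the only thing to be careful about is citing the correct standard results — primality/maximality of $(p(x))$ in $\F_q[x]$, the dimension count for $\F_q[\alpha]$, and uniqueness of finite fields of a given cardinality — all of which are available in the reference \cite{li94} already cited for this subsection.
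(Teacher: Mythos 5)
Your argument is correct and is exactly the standard textbook proof (quotient by the maximal ideal $(p(x))$ plus uniqueness of finite fields for part (1), and order counting in the cyclic group $\F_{q^k}^*$ for part (2)); the paper itself gives no proof but simply cites \cite{li94} for these facts, so your write-up fills in precisely the reference being invoked. No gaps.
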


\begin{lem}
The following map $\psi_k$ is a vector space isomorphism between $\F_q^k$ and $\F_q[\alpha]$:
\begin{align*}
\psi_k: \quad \F_q^{k} &\longrightarrow \F_{q}[\alpha]\\
(u_1,\dots, u_{k}) &\longmapsto \sum_{j=1}^{k} u_{j} \alpha^{j-1}
 \end{align*}
 \end{lem}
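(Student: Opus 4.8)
The plan is to verify the three defining properties of a vector space isomorphism in turn: $\F_q$-linearity, injectivity, and surjectivity. Linearity is immediate from the formula: for $u=(u_1,\dots,u_k)$, $v=(v_1,\dots,v_k)\in\F_q^k$ and $\lambda\in\F_q$ one has $\psi_k(u+v)=\sum_{j=1}^k(u_j+v_j)\alpha^{j-1}=\psi_k(u)+\psi_k(v)$ and $\psi_k(\lambda u)=\sum_{j=1}^k(\lambda u_j)\alpha^{j-1}=\lambda\,\psi_k(u)$, so $\psi_k$ is an $\F_q$-linear map.

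Next I would establish injectivity by showing the kernel is trivial. Suppose $\psi_k(u)=\sum_{j=1}^k u_j\alpha^{j-1}=0$. By the previous lemma, $\F_{q^k}\cong\F_q[\alpha]$ where $\alpha$ is a root of an irreducible polynomial $p(x)\in\F_q[x]$ of degree $k$; hence the minimal polynomial of $\alpha$ over $\F_q$ has degree exactly $k$, which means $1,\alpha,\dots,\alpha^{k-1}$ are $\F_q$-linearly independent. Therefore $u_1=\dots=u_k=0$, i.e.\ $\ker\psi_k=\{0\}$, and $\psi_k$ is injective.

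Finally, for surjectivity I would argue by dimension count: $\F_q[\alpha]\cong\F_{q^k}$ has dimension $k$ as an $\F_q$-vector space (again by the previous lemma), and $\F_q^k$ also has dimension $k$, so an injective $\F_q$-linear map between them is automatically bijective. (Alternatively, one can argue surjectivity directly: any element of $\F_q[\alpha]$ is $f(\alpha)$ for some $f\in\F_q[x]$, and reducing $f$ modulo $p(x)$ yields a representative of degree at most $k-1$, which is visibly of the form $\sum_{j=1}^k u_j\alpha^{j-1}=\psi_k(u_1,\dots,u_k)$.) Combining the three steps shows $\psi_k$ is a vector space isomorphism. There is no real obstacle here; the only nontrivial input is the linear independence of $1,\alpha,\dots,\alpha^{k-1}$, which is already supplied by the preceding lemma on irreducible polynomials.
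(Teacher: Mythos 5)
Your proof is correct: linearity is immediate, injectivity follows from the linear independence of $1,\alpha,\dots,\alpha^{k-1}$ (since the minimal polynomial of $\alpha$ is the irreducible $p(x)$ of degree $k$), and surjectivity follows by the dimension count or by reduction modulo $p(x)$. The paper states this lemma without proof, treating it as a standard finite-field fact from the textbook literature, and your argument is precisely the standard one, so there is nothing to add or compare.
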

 
 \begin{defi}
The \emph{companion matrix} $P$ of some monic polynomial $p(x)=\sum_{i=0}^k p_i x^i$ is defined as
$$ P= \left( \begin{array}{ccccc} 
0 & 1& 0 & \dots &0 \\
0 & 0 & 1&  \dots &0 \\
\vdots &&& \ddots \\
0 & 0& 0 & \dots &1 \\
-p_0 & -p_1 & -p_2 &\dots & -p_{k-1}
 \end{array} \right).$$
 \end{defi}
Note that one often finds the transpose definition of a companion matrix in the literature.  However, in this work we will use the above row-wise definition. One can verify that multiplication with $P$, respectively $\alpha$, commutes with the vector space isomorphism $\psi_k$, which is stated in the following lemma. 

\begin{lem}
Let $p(x) \in \F_q[x]$ be monic, irreducible of degree $k$. Moreover, let $\alpha$ be a root of $p(x)$ and $P\in \F_q^{k\times k}$ the corresponding companion matrix. Then
$$ \psi_k(u)\alpha = \psi_k(uP)$$ for any $u\in\F_q^k$. This implies that $$\F_q[\alpha] \cong \F_q[P] .$$
\end{lem}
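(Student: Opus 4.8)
The plan is to establish the identity $\psi_k(u)\alpha = \psi_k(uP)$ by a direct componentwise computation of the vector $uP$, and then to bootstrap this single relation up to a ring isomorphism. First I would write out $uP$ explicitly for $u=(u_1,\dots,u_k)$. Since the first $k-1$ rows of $P$ are the shifted unit vectors $e_2,\dots,e_k$ and the last row is $(-p_0,-p_1,\dots,-p_{k-1})$, one gets $(uP)_1 = -u_k p_0$ and $(uP)_j = u_{j-1} - u_k p_{j-1}$ for $j=2,\dots,k$. Applying $\psi_k$ and collecting terms yields
\[
\psi_k(uP) \;=\; \sum_{i=1}^{k-1} u_i\alpha^{i} \;-\; u_k\sum_{i=0}^{k-1} p_i\alpha^{i}.
\]
Now I invoke the defining relation $p(\alpha)=0$: because $p$ is monic of degree $k$, this says $\sum_{i=0}^{k-1}p_i\alpha^{i} = -\alpha^k$, so the expression above collapses to $\sum_{i=1}^{k} u_i\alpha^{i} = \alpha\sum_{i=1}^k u_i\alpha^{i-1} = \alpha\,\psi_k(u)$, which is exactly the claim. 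Nothing beyond the minimality relation of $\alpha$ and the $\F_q$-linearity of $\psi_k$ enters here.

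For the second assertion I would first upgrade the identity to arbitrary polynomials in $P$: by induction on $j$ one has $\psi_k(uP^{j}) = \alpha^{j}\psi_k(u)$, and hence, by $\F_q$-linearity in the first argument, $\psi_k\big(u\,f(P)\big) = f(\alpha)\,\psi_k(u)$ for every $f\in\F_q[x]$ and every $u\in\F_q^k$. Taking $u=e_1$, for which $\psi_k(e_1)=1$, this reads $\psi_k\big(e_1 f(P)\big) = f(\alpha)$. I would then define $\varphi\colon\F_q[P]\to\F_q[\alpha]$ by $\varphi\big(f(P)\big) = f(\alpha)$. It is well defined and $\F_q$-linear since $\psi_k$ is, and it is multiplicative because $\psi_k\big(e_1 f(P)g(P)\big) = f(\alpha)\,\psi_k\big(e_1 g(P)\big) = f(\alpha)g(\alpha)$ by the polynomial identity just derived. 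Surjectivity is immediate. For injectivity I would use that $p(P)=0$ — which follows from $\psi_k\big(u\,p(P)\big)=p(\alpha)\psi_k(u)=0$ for all $u$ together with injectivity of $\psi_k$, or may simply be quoted as a standard property of companion matrices — and that $p$, being irreducible with root $\alpha$, is the minimal polynomial of $\alpha$: if $f(\alpha)=0$ then $p\mid f$, so $f(P)=0$. Hence $\varphi$ is a ring isomorphism.

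I do not expect a serious obstacle. The computation in the first part is pure bookkeeping; the only delicate points are keeping the companion-matrix indices aligned and remembering to use the monic normalization when substituting $\alpha^{k}=-\sum_{i<k}p_i\alpha^{i}$. In the second part the one place that genuinely needs irreducibility of $p$ is the injectivity of $\varphi$ (equivalently, that $\F_q[P]$ is a field of dimension $k$ over $\F_q$), so I would state that step explicitly rather than treat it as automatic.
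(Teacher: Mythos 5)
Your proof is correct: the componentwise computation of $uP$ with the paper's row-wise companion matrix, the substitution $\alpha^k=-\sum_{i<k}p_i\alpha^i$, and the bootstrapping via $\varphi(f(P))=\psi_k\bigl(e_1 f(P)\bigr)=f(\alpha)$ (with injectivity resting on $p$ being the minimal polynomial of $\alpha$ and $p(P)=0$) together give a complete argument. The paper states this lemma as a standard finite-field fact and offers no proof of its own, so there is nothing to contrast with; what you wrote is exactly the routine verification the paper leaves to the reader.
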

Throughout the paper we will denote by $\rho: \F_q[\alpha] \rightarrow  \F_q[P]$ the isomorphism given by $\rho(\alpha^i) = P^i$ and $\rho(0) = 0_{k\times k }$.

To set up our message encoding and retrieval maps later on, we need the following bijections between vector spaces over finite fields and integers sets. 
\begin{defi}
Let $p$ be prime and $m$ any positive integer. The map
\begin{align*}
\phi'_m : \quad \F_p^{m} &\longrightarrow \{0,1,\dots,p^m-1\}\\
(u_1,\dots, u_{m}) &\longmapsto \sum_{i=1}^{m} u_{i} p^{i-1}
\end{align*}
is called the inverse \emph{$p$-adic expansion}.
\end{defi}
It is well-known that, for a prime number $p$, $\phi'_m$ is a bijection. This map can be extended to a $q$-adic expansion, for a prime power $q=p^r$, by fixing a bijection $\varphi'$ between $\F_q=\F_{p^r}$ and $\{0,\dots,q-1\}$. To do so we represent $\F_{p^r}\cong \F_p[\beta]$ for a suitable $\beta$, and choose the bijection 
\begin{align*} 
\varphi'   : \quad  \F_{p}[\beta] &\longrightarrow \{0,1,\dots, q-1\} \\
\sum_{i=1}^{r} u_i \beta^{i-1} & \longmapsto \sum_{i=1}^{r} u_i p^{i-1}  .
\end{align*}
One can easily see that $\varphi'(1)=1$ and $\varphi'(0)=0$.

\begin{defi}
The inverse \emph{$q$-adic expansion} is given by
\begin{align*}
\phi''_m : \quad \F_q^{m} &\longrightarrow \{0,1,\dots,q^m-1\}\\
(u_1,\dots, u_{m}) &\longmapsto \sum_{i=1}^{m} \varphi'(u_{i}) q^{i-1}  .
\end{align*}
\end{defi}
To furthermore extend this map to a 
{$q^k$-adic expansion}, we need to fix a bijection $\varphi$ between $\F_{q^k}$ and $\{0,\dots,q^k-1\}$. To do so we represent $\F_{q^k}\cong \F_q[\alpha]$ for some suitable $\alpha$ and choose the bijection
\begin{align*} 
\varphi   : \quad  \F_{q}[\alpha] &\longrightarrow \{0,1,\dots, q^k-1\} \\
\sum_{i=1}^{k} u_i \alpha^{i-1} & \longmapsto \sum_{i=1}^{k} \varphi'(u_i) q^{i-1} .
\end{align*}
One can again easily see that $\varphi(1)=1$ and $\varphi(0)=0$.

\begin{defi}
The \emph{inverse $q^k$-adic expansion} is given by
\begin{align*}
\phi_{k,m} : \quad \F_{q^k}^{m} &\longrightarrow \{0,1,\dots,q^{km}-1\}\\
(u_1,\dots, u_{m}) &\longmapsto \sum_{i=1}^{m} \varphi(u_{i}) q^{k(i-1)}.
\end{align*}
\end{defi}
It can easily be verified that $\phi_{k,m}$ 
 is again a bijection. For computing the preimage $(u_1,\dots, u_{m})\in\F_{q^k}^m$ of some $j\in \{0,\dots,q^{km}-1\}$, i.e., the $q^k$-adic expansion of $j$, one recursively computes $\varphi(u_{\ell+1}) \equiv (i - \sum_{i=1}^{\ell}\varphi(u_{j}) q^{k(i-1)} )/q^{\ell k} \mod q^k$ with the initial congruence $\varphi(u_1) \equiv i \mod q^k$.



\subsection{Subspace Codes}

We denote the set of all subspaces of $\F_q^n$ by $\PG$ and the set of all subspaces of $\F_q^n$ of dimension $k$, called the \emph{Grassmannian}, by $\G$. We represent a vector space $\Uvs \in \G$ by a matrix $U\in \F_q^{k\times n}$ such that the row space of $U$, denoted by $\rs(U)$, is equal to $\Uvs$. 

\begin{defi}
A \emph{subspace code} is simply a subset of $\PG$ and a \emph{constant dimension code} is a subset of $\G$. 
\end{defi}

The following is a metric on $\PG$, and hence also on $\G$ (see e.g.\ \cite{ko08}). 
\begin{defi}
The \emph{subspace distance} is defined as
$$ d_S(\Uvs, \Vvs) := \dim(\Uvs) + \dim(\Vvs)- 2\dim(\Uvs\cap \Vvs) $$
for any $\Uvs, \Vvs \in \PG$. 
\end{defi}
The minimum distance $d_S(\Cvs)$ of a subspace code $\C \subseteq \PG$ is defined in the usual way, as the minimum of all pairwise distances of the codewords, i.e.,
$$d_S(\Cvs) :=\min \{d_S(\Uvs, \Vvs) \mid \Uvs, \Vvs \in \Cvs, \Uvs\neq \Vvs \} .$$ 
Since the dual of a subspace code $\C$ has the same minimum distance as $\C$ (see e.g.\ \cite{ko08}), it is customary to restrict oneself to $k\leq n/2$, which we will assume throughout the paper.

We will now introduce spread and orbit codes. These  families of constant dimension codes will be the focus of this paper.
\begin{defi}
A \emph{spread}, in $\G$ is defined as a set of elements of $\G$ that pairwise intersect only trivially and cover the whole space $\F_q^n$. 
\end{defi}
Spreads are well-known geometrical objects, see e.g.\ \cite{hi98}. Since spreads are subsets of $\G$, they can be used as constant dimension codes. In this case one also speaks of \emph{spread codes}, see e.g.\ \cite{ma08p}. 
The following properties of spread codes are well-known and can easily be derived.
\begin{lem} \cite{hi98,ma08p}
\begin{enumerate}
\item
Spreads in $\G$ exist if and only if $k| n$.
\item
A spread in $\G$ has minimum subspace distance $2k$ and cardinality $(q^n-1)/(q^k-1)$.
\item
A constant dimension code in $\G$ with minimum subspace distance $2k$ and cardinality $(q^n-1)/(q^k-1)$ is a spread.
\end{enumerate}
\end{lem}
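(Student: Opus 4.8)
The plan is to derive all three items from a single counting principle together with one explicit construction. The principle is: a $k$-dimensional $\F_q$-subspace of $\F_q^n$ contains exactly $q^k-1$ nonzero vectors, and two distinct $\Uvs,\Vvs\in\G$ intersect only trivially precisely when $d_S(\Uvs,\Vvs)=\dim(\Uvs)+\dim(\Vvs)-2\dim(\Uvs\cap\Vvs)=2k$. Hence a family of elements of $\G$ that pairwise intersect trivially is the same thing as a partition of a subset of the $q^n-1$ nonzero vectors of $\F_q^n$ into blocks of size $q^k-1$, and such a family is a spread exactly when that subset is all of the nonzero vectors.

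For item~1 I would first handle the ``only if'' direction: if a spread $\Cvs$ exists, its members partition all $q^n-1$ nonzero vectors into blocks of size $q^k-1$, so $(q^k-1)\mid(q^n-1)$; by the identity $\gcd(q^k-1,q^n-1)=q^{\gcd(k,n)}-1$ this is equivalent to $k\mid n$. For the ``if'' direction I would exhibit a spread: when $k\mid n$, identify $\F_q^n$ with $\F_{q^k}^{n/k}$ and let $\Cvs$ be the set of $1$-dimensional $\F_{q^k}$-subspaces; each of these is a $k$-dimensional $\F_q$-subspace, two distinct ones meet only in $0$, and every nonzero vector lies on exactly one of them, so $\Cvs$ is a spread. (An equivalent description uses the orbit of a single $k$-dimensional subspace under the order-$(q^n-1)/(q^k-1)$ subgroup of a Singer cycle, anticipating the orbit-code material of Section~\ref{sec:orbit}.)

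For item~2 the cardinality is immediate from the partition, $|\Cvs|\cdot(q^k-1)=q^n-1$, so $|\Cvs|=(q^n-1)/(q^k-1)$; and for the distance, any two distinct codewords intersect trivially by definition, so $d_S(\Uvs,\Vvs)=k+k-0=2k$ for every pair, whence $d_S(\Cvs)=2k$. I would remark in passing that, since $k\le n/2$, one always has $d_S(\Uvs,\Vvs)\le 2k$ in $\G$ (because $\dim(\Uvs\cap\Vvs)\ge 0$), so a spread code attains the trivial upper bound on the minimum distance. Item~3 is the converse packaging: minimum distance $2k$ forces $2k-2\dim(\Uvs\cap\Vvs)\ge 2k$, i.e.\ $\dim(\Uvs\cap\Vvs)=0$, for all distinct codewords, so they pairwise intersect trivially; the cardinality hypothesis then shows they cover $\frac{q^n-1}{q^k-1}\cdot(q^k-1)=q^n-1$ nonzero vectors, that is, all of them, so $\Cvs$ is a spread.

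I do not expect a genuine obstacle. The only points needing a little care are the elementary number-theoretic equivalence $(q^k-1)\mid(q^n-1)\iff k\mid n$ used in item~1, which I would justify via $\gcd(q^k-1,q^n-1)=q^{\gcd(k,n)}-1$, and making sure every double-counting step in items~1--3 is carried out on \emph{nonzero} vectors, so that the block size is $q^k-1$ and not $q^k$.
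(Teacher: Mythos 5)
Your proof is correct. The paper itself does not prove this lemma (it cites \cite{hi98,ma08p} and notes the facts are ``well-known and can easily be derived''), and your argument is exactly the standard one behind those references: the counting of nonzero vectors in blocks of size $q^k-1$ gives items~2 and~3 and the ``only if'' half of item~1 (via $\gcd(q^k-1,q^n-1)=q^{\gcd(k,n)}-1$), while your field-extension construction for the ``if'' direction is precisely the Desarguesian spread of the paper's Construction~I, viewing $\F_q^n$ as $\F_{q^k}^{n/k}$ and taking the $\F_{q^k}$-lines.
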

For more information on different constructions and decoding algorithms of spread codes, see \cite{go12,ma08p,ma11j,tr13phd}. We will use the following well-known construction, which gives rise to a \emph{Desarguesian spread code} in $\G$ \cite{ba11a,tr13phd}.

\noindent \textbf{Construction I:}
\begin{itemize}
\item Let $\alpha$ be a root of an irreducible polynomial $p(x)\in \F_q[x]$ of degree $k$ and let $P$ be the corresponding companion matrix. Denote by $\rho: \F_q[\alpha]\rightarrow \F_q[P]$ the isomorphism from the previous subsection.

\item Represent $\F_{q^k}$ as $\F_q[\alpha]$. Let $m:=n/k$ and consider $\mathcal G_{q^k} (1,m)$, which has $q^{k(m-1)}+q^{k(m-2)} + q^{k(m-3)} + \dots + 1 = (q^n-1)/(q^k-1)$ elements. Naturally, all these lines intersect only trivially.
\item Define the map 
\begin{align*}
\mathrm{des} : \quad \mathcal G_{q^k} (1,m) &\longrightarrow \G \\
 \rs(v_1,v_2,\dots,v_n) &\longmapsto \rs(\rho(v_1),\rho(v_2) \dots, \rho(v_m))  .
\end{align*}
Then the image of $\mathrm{des}$ is a Desarguesian spread in $\G$.
\end{itemize}

Note that the name \emph{Desarguesian} arises from the fact that the translation planes of these spreads are Desarguesian planes. For our purposes though, this fact is not needed, we simply use the name for the construction from above.

In our message encoding algorithms for spread codes we need to have a unique description of the elements of $\mathcal{G}_{q^{k}}(1,m)$. To do so we choose the normalized basis vector, i.e., the one element of the one-dimensional subspace whose first non-zero entry is equal to one. This normalized vector is then mapped by $\mathrm{des}$ to the reduced row echelon form of the respective codeword. 
%
The reader familiar with projective spaces will notice that $\mathcal G_{q^k} (1,m)$ corresponds exactly to the projective space over $\F_{q^k}$ of dimension $m-1$. The usage of normalized representatives of points is a common concept there.

\begin{ex}\label{ex1}
Let $q=k=m=2, n=4$ and $\alpha$ be a root of $x^2+x+1$, i.e., a  primitive element of $\F_{2^2}\cong \F_2[\alpha]$. The respective companion matrix is 
$$P=\left(\begin{array}{cc} 0&1\\1&1 \end{array}\right) .$$
Then $\mathcal G_{2^2} (1,2) = \{ \rs (1 , 0), \rs(1 , \alpha), \rs (1 , \alpha^2), \rs(1 , 1), \rs(0 , 1) \}$ and substituting all elements of $\F_2[\alpha]$ with its corresponding element from $\F_2[P]$ gives a spread in $\mathcal G_2(2,4)$:
\[\left\{\rs\left(\begin{array}{cccc}1&0&0&0\\0&1&0&0 \end{array} \right), 
\rs\left(\begin{array}{cccc}1&0&0&1\\0&1&1&1 \end{array} \right), 
\rs\left(\begin{array}{cccc}1&0&1&1\\0&1&1&0 \end{array} \right), 
\rs\left(\begin{array}{cccc}1&0&1&0\\0&1&0&1 \end{array} \right), \rs\left(\begin{array}{cccc}0&0&1&0\\0&0&0&1 \end{array} \right)  \right\}\]
\end{ex}

\vspace{0.5cm}

\emph{Orbit codes} \cite{tr10p} in $\G$ are defined to be orbits of a subgroup of the general linear group $\GL_n:=\{A \in \F_q^{n\times n} \mid \rk(A)=n \}$ of order $n$ over $\F_q$:
\begin{defi}
Let $\Uvs \in \G$ and $G$ be a subgroup of $\GL_{n}$. Then 
$$\Uvs G = \{\Uvs A \mid A\in G \}$$
is called the \emph{orbit code} generated by the initial point $\Uvs$ and the group $G$.
\end{defi}
As shown in \cite{tr11a}, orbit codes can be seen as the analog of linear codes in classical block coding. Their structure can be used for an easy computation of the minimum distance of a code, as well as for decoding algorithms (e.g.\ one can coset-leader decode them). For more information on orbit codes the interested reader is referred to \cite{ma11p,ro12j,tr13phd,tr11a}. 

\begin{ex}
Let $q=k=m=2$ and $n=3$. Moreover, let 
\[\Uvs = \rs\left(\begin{array}{cccc} 0&1&0\\0&0&1 \end{array}\right)\]
and $G$ be the cyclic group generated by the matrix
\[P= \left(\begin{array}{cccc} 0&1&0  \\ 1&0&0\\ 0&0&1 \end{array}\right) .\]
The corresponding orbit code 
$$\Uvs G =\{\Uvs P^i \mid i=0,\dots,|G|-1\}  = \left\{\rs\left(\begin{array}{cccc} 0&1&0\\0&0&1 \end{array}\right), \rs\left(\begin{array}{cccc} 1&0&0\\0&0&1 \end{array}\right) \right\}\subset \mathcal G_2(2,3)$$ 
has  $2$ elements and minimum subspace distance $2$. 
\end{ex}

One can also use the orbit code construction to construct spread codes, as illustrated in the following example.

\begin{ex}\label{ex2}
Let $q=k=m=2$ and $n=4$. Moreover, let 
\[\Uvs = \rs\left(\begin{array}{cccc} 1&0&0&0\\0&1&1&0 \end{array}\right)\]
and
\[P = \left(\begin{array}{cccc} 0&1&0&0 \\ 0&0&1&0 \\ 0&0&0&1 \\ 1&1&0&0 \end{array}\right)\]
be the companion matrix of the irreducible polynomial $x^4+x+1 \in \F_2[x]$. 
The  group $G=\langle P \rangle$ is a subgroup of $\GL_4$ of cardinality $15$. The corresponding orbit code $\Uvs G =\{\Uvs P^i \mid i=0,\dots,14\} \subset \mathcal G_2(2,4)$ has  $5$ elements and minimum subspace distance $4$. 
Hence, it is a spread code in $\mathcal G_2(2,4)$.
\end{ex}

The following lemma gives a general construction of a spread in $\G$ as a cyclic orbit code. This construction is again well-known and can be found e.g.\ in \cite{ba11a,tr11a}.
\begin{lem}
Let $P\in \GL_n$ be a companion matrix of a monic primitive polynomial $p(x) \in \F_q[x]$ of degree $n$. Moreover, let $\Uvs\in \G$ be the vector space representation of the subfield $\F_{q^k}$ of $\F_{q^n}$. Then $\Cvs=\Uvs \langle P\rangle$ is a spread code in $\G$.
\end{lem}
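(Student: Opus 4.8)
The plan is to show that $\Cvs = \Uvs \langle P \rangle$ satisfies the three properties characterizing a spread code: its codewords have dimension $k$, they intersect pairwise trivially, and they cover $\F_q^n$ — equivalently, by the cited lemma, that $\Cvs$ has the right cardinality $(q^n-1)/(q^k-1)$ and minimum distance $2k$. The key observation is the field-theoretic interpretation: since $P$ is the companion matrix of a monic primitive polynomial $p(x) \in \F_q[x]$ of degree $n$, we have $\F_q[P] \cong \F_q[\alpha] \cong \F_{q^n}$ by the lemmas of the Finite Fields subsection, with $\alpha$ a primitive element, so $\langle P \rangle$ is a cyclic group of order $q^n - 1$ acting on the nonzero vectors of $\F_q^n \cong \F_{q^n}$ exactly as multiplication by the nonzero elements of $\F_{q^n}$.

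First I would identify $\F_q^n$ with $\F_{q^n}$ via $\psi_n$ and note that, under this identification, the orbit of $\Uvs$ (the $\F_q$-subspace corresponding to the subfield $\F_{q^k} \subseteq \F_{q^n}$) under $\langle P \rangle$ is exactly the set $\{\, \beta \cdot \F_{q^k} \mid \beta \in \F_{q^n}^* \,\}$ of $\F_{q^n}$-translates of $\F_{q^k}$. This requires $k \mid n$ so that $\F_{q^k}$ really is a subfield of $\F_{q^n}$; I would state this as a (necessary) hypothesis, consistent with the existence criterion for spreads. Each such translate $\beta \F_{q^k}$ is an $\F_q$-subspace of dimension $k$, so every codeword lies in $\G$.

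Next I would count the distinct codewords. Two translates $\beta_1 \F_{q^k}$ and $\beta_2 \F_{q^k}$ coincide if and only if $\beta_1 \beta_2^{-1} \in \F_{q^k}^*$, because $\F_{q^k}$ is a field and hence closed under multiplication by its own nonzero elements (and conversely $\beta_1 \beta_2^{-1} \F_{q^k} = \F_{q^k}$ forces $\beta_1\beta_2^{-1} = \beta_1\beta_2^{-1} \cdot 1 \in \F_{q^k}$). Thus the stabilizer of $\Uvs$ in $\langle P \rangle$ is the cyclic subgroup corresponding to $\F_{q^k}^*$, of order $q^k - 1$, and by the orbit-stabilizer theorem $|\Cvs| = (q^n - 1)/(q^k - 1)$. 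For the intersection property: if $\beta_1 \F_{q^k} \neq \beta_2 \F_{q^k}$, suppose $x$ is a nonzero common element; then $x = \beta_1 a = \beta_2 b$ with $a, b \in \F_{q^k}^*$, giving $\beta_1 \beta_2^{-1} = b a^{-1} \in \F_{q^k}^*$, contradicting distinctness. Hence distinct codewords meet only in $\{0\}$, so $d_S(\Uvs_1, \Uvs_2) = k + k - 0 = 2k$ for all pairs, and $\Cvs$ has minimum subspace distance $2k$. By part (3) of the spread-properties lemma, a constant dimension code in $\G$ with minimum distance $2k$ and cardinality $(q^n-1)/(q^k-1)$ is a spread, so $\Cvs$ is a spread code.

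The main obstacle — really the only non-formal point — is pinning down the stabilizer of $\Uvs$ precisely, i.e., verifying the equivalence ``$\beta \F_{q^k} = \F_{q^k} \iff \beta \in \F_{q^k}^*$'' and making sure the group action of $\langle P \rangle$ on subspaces genuinely corresponds to multiplication in $\F_{q^n}$; this is where Lemma on $\psi_k$ commuting with $P$/$\alpha$ multiplication does the work, extended from $k$ to $n$. Everything else is bookkeeping with the orbit–stabilizer theorem and the distance formula. One should also remark that the construction implicitly requires $k \mid n$, which is exactly when the subfield $\F_{q^k}$ and hence the spread exist.
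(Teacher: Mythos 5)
Your proof is correct. The paper itself gives no proof of this lemma (it is stated as well known and delegated to the references on spreads and orbit codes), and your argument is precisely the standard one found there: identify $\F_q^n$ with $\F_{q^n}$ via $\psi_n$ so that $\langle P\rangle$ acts as multiplication by the primitive element, observe that the orbit of the subfield $\F_{q^k}$ is the set of translates $\beta\F_{q^k}$ with $\beta\in\F_{q^n}^*$, compute the stabilizer $\F_{q^k}^*$ and hence the cardinality $(q^n-1)/(q^k-1)$ by orbit--stabilizer, check pairwise trivial intersection (so minimum distance $2k$), and conclude via part (3) of the spread-characterization lemma. Your remark that $k\mid n$ is implicit in the hypothesis that $\F_{q^k}$ is a subfield of $\F_{q^n}$ is also accurate; I see no gaps.
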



\subsection{Message encoding and retrieval}
We can define message encoding and retrieval maps very general, for any type of code, as follows.

\begin{defi}
For a given code $\C$ in some space $X$ and some message space $\cM$, an \emph{encoding map} for the code $\Cvs$ 
$$\enc : \cM \longrightarrow X$$
is an injective map with image $\Cvs$. 
The inverse map 
$$\enc^{-1} : \C \longrightarrow \cM $$
is called the corresponding \emph{message retrieval map}.
\end{defi}

In our setting of subspace codes, $X=\PG$, or if we use only constant dimension codes, $X=\G$. 

Mostly in the information theory literature, a general message set is represented as $\cM=\{0,\dots,|\C|-1\}$. For classical linear block codes in $\F_q^n$ the usual message space is $\cM= \F_q^k$ for some integer $k\leq n$. With the $q$-adic expansion this can easily be translated to the message set $\{ 0, \dots, q^k-1\}$. In contrast, not any set of integers $\{0,\dots,j-1\}$ can be bijectively mapped to some linear vector space. In particular, if $j$ is not a prime power, there exists no linear vector space over a finite field of the same cardinality. In this paper, since the cardinalities of our codes are in general not prime powers, we derive encoding maps for  message sets of the form $\cM=\{0,\dots,|\C|-1\}$.

In the subspace coding case it is not obvious in general, how message encoding or message retrieval can be done. However, an elegant solution is given for the Reed-Solomon-like codes in \cite{ko08}. For such a code $\C \subseteq \G$ the message space is
$$\cM=\F_{q^{n-k}}^{k-\frac{d_S(\C)}{2}+1} ,$$
which is isomorphic (as a vector space) to $\F_{q}^{(n-k)(k-d_S(\C)/2+1)}$, and an encoding map for $\C$ is given by
\begin{align*}
 \enc : \quad \F_{q^{n-k}}^{k-\frac{d_S(\C)}{2}+1}  &\longrightarrow  \G \\
 (u_1,\dots, u_{k-\frac{d_S(\C)}{2}+1}) &\longmapsto \langle  (\psi^{-1}_{k}(\beta_j), \psi^{-1}_{n-k} (\sum_{i=0}^{k-\frac{d_S(\C)}{2}} u_{i+1} \beta_j^{q^i} ) )  \mid j=1,\dots, k\rangle
\end{align*}
where $\beta_1,\dots, \beta_k \in \F_{q^{n-k}}$ are linearly independent over $\F_q$, and we use the isomorphisms $\langle\beta_1,\dots, \beta_k\rangle \cong \F_q^{k}$ and $\F_{q^{n-k}}\cong \F_q^{n-k}$ for the two vector entries on the right side, respectively. 
Via interpolation this map is invertible and the inverse is computable in polynomial time. Hence, one gets a feasible message retrieval map as well. In fact, in the decoding algorithm of \cite{ko08}, error correction and message retrieval is done in one algorithm.

As already mentioned in the introduction, many subspace codes cannot be represented as a linear block code over some extension field, hence the idea from above is not necessarily adaptable to other subspace code constructions. Therefore we will use other, different approaches to derive encoding and retrieval maps for the two classes of codes we will investigate in this paper. 

One of the ideas we will pursue, is to use \emph{enumerative coding} for message encoding for constant dimension codes. Enumerative coding for the Grassmannian was studied in \cite{si11}, where the idea of enumerative source encoding of $q$-ary block codes of Cover \cite{co73} was translated to a subspace setting. In contrast to our contribution, the algorithms of \cite{si11} are only stated for the whole Grassmannian, and not for any error-correcting constant dimension codes. The idea of enumerative coding was also used in \cite{sc14} to encode subspace Gray codes. These codes are, however, no-error-correcting. To adapt the ideas of enumerative coding to spread or orbit codes, one needs an efficiently computable map that counts the number of subspaces whose reduced row echelon form fulfills certain requirements. This is feasible for Desarguesian spread codes and will be explained in Subsection \ref{sec:enuDes}. 
 For orbit codes however, it is not clear how such a map could efficiently be computed, which is why we will not pursue the idea of enumerative coding as an encoding map in this context.


\section{Computational Preliminary Results}\label{sec:comprelim}

In this section we derive complexity orders of tasks we will need in our main algorithms in Sections \ref{sec:spread} and \ref{sec:orbit}. 
For comparability with error decoding complexities we will do our complexity analyses over $\F_q$, which is why we will represent the messages $0,1,\dots, |\Cvs|-1$ in their $q$-adic expansion. For simplicity we represent these $q$-adic expansions in $\F_q^n$, although not all coordinates are necessarily needed. 

We use the Big-O notation for the computational complexities of our algorithms, where we use the index $q$ to specify that the given complexity order is over the base field $\F_q$ and not over some extension field. 

The following results are well-known.
\begin{lem}\label{lem:compbasic}
Consider $\F_q$ and an extension field $\F_{q^k}$. Represent $\F_{q^k}\cong \F_q[\alpha]$ for some suitable $\alpha$.
\begin{enumerate}
\item Multiplying two elements from $\F_{q^k}$ can be done with $\mathcal{O}_q(k^2)$ operations in $\F_q$. The same holds for division in $\F_{q^k}$.
\item Let $\beta \in \F_{q^k}$ and $0\leq i \leq q^k-1$. Computing the modular exponentiation $\beta^i$ of $\beta \in \F_{q^k}$, i.e., finding the representation of $\beta^i$ in the basis $\{1, \alpha, . . . , \alpha^k-1\}$ of $\F_{q^k}$, can be done with at most $\mathcal{O}_q(k^3)$ operations in $\F_q$.
\end{enumerate}
\end{lem}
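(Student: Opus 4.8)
The plan is to reduce everything to ordinary univariate polynomial arithmetic over $\F_q$, using the identification $\F_{q^k}\cong\F_q[\alpha]\cong\F_q[x]/(p(x))$, where $p(x)$ is the monic irreducible polynomial of degree $k$ having $\alpha$ as a root, and where each field element is stored as its coordinate vector in the basis $\{1,\alpha,\dots,\alpha^{k-1}\}$, i.e.\ as a polynomial of degree less than $k$.

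For the first part, I would multiply two such polynomials by the schoolbook algorithm; this costs $\mathcal{O}_q(k^2)$ multiplications and additions in $\F_q$ and yields a polynomial of degree at most $2k-2$. Reducing it modulo $p(x)$ by polynomial long division costs another $\mathcal{O}_q(k^2)$ operations, since each of the $\mathcal{O}(k)$ leading terms is eliminated at a cost of $\mathcal{O}_q(k)$ operations in $\F_q$; hence a field multiplication is $\mathcal{O}_q(k^2)$. For division $a/b$, I would first compute $b^{-1}$ via the extended Euclidean algorithm applied to $b(x)$ and $p(x)$: because $p(x)$ is irreducible and $\deg b<k$ we have $\gcd(b,p)=1$, the algorithm terminates after $\mathcal{O}_q(k^2)$ operations, and it returns polynomials $s,t$ with $s\,b+t\,p=1$, so that $b^{-1}\equiv s \bmod p$. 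A final multiplication by $a$ then gives $a/b$ in a further $\mathcal{O}_q(k^2)$ operations.

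For the second part, I would compute $\beta^i$ by square-and-multiply on the binary expansion of the exponent. Since $0\le i\le q^k-1$, that expansion has at most $\lceil\log_2(q^k)\rceil=\mathcal{O}(k)$ bits (the factor $\log_2 q$ being constant under the $\mathcal{O}_q$ convention, which fixes the base field $\F_q$), so the algorithm performs $\mathcal{O}(k)$ squarings and at most $\mathcal{O}(k)$ additional multiplications in $\F_{q^k}$, each of cost $\mathcal{O}_q(k^2)$ by the first part. The total is therefore $\mathcal{O}_q(k^3)$.

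I do not anticipate a genuine obstacle: the statement is a collection of textbook facts about polynomial arithmetic over $\F_q$, and the only care needed is bookkeeping. The two points worth being explicit about are that the constants hidden in the multiply-then-reduce step really are independent of $k$, and that the bound $\mathcal{O}(k)$ on the length of the exponent's binary expansion relies on reading the $\mathcal{O}_q$ notation as treating the base field $\F_q$ --- and in particular the quantity $\log_2 q$ --- as fixed. Faster multiplication routines (Karatsuba, or FFT-based methods) would lower the exponents, but they are unnecessary here, since the $\mathcal{O}_q(k^3)$ bound already suffices for the algorithms in Sections~\ref{sec:spread} and~\ref{sec:orbit}.
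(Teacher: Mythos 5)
Your part 1 is essentially the paper's argument made explicit: the paper simply cites \cite{ga03} for quadratic-cost polynomial multiplication and division modulo $p(x)$, and your schoolbook multiplication, reduction by long division, and inversion via the extended Euclidean algorithm supply exactly those facts, each in $\mathcal{O}_q(k^2)$. No issue there.

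Part 2 is where you diverge from the paper, and the divergence sits precisely on the point you flagged. Binary square-and-multiply needs about $\log_2 i \le k\log_2 q$ multiplications in $\F_{q^k}$, hence $\mathcal{O}_q(k^3\log_2 q)$ operations in $\F_q$; to arrive at $\mathcal{O}_q(k^3)$ you must treat $\log_2 q$ as a constant. The paper's convention does not support that reading: the subscript in $\mathcal{O}_q$ only records that operations are counted over the base field $\F_q$, and the paper carries $\log_2 q$ explicitly inside such bounds elsewhere (see Theorem \ref{thm:Pohlig} and the complexity quoted in the conclusion), so $q$ is not regarded as fixed. The paper instead invokes the normal-basis exponentiation result of \cite{ga91}: in a normal basis the map $x\mapsto x^q$ is a cyclic shift, so computing $\beta^i$ takes only about $k/\log_q k$ multiplications in $\F_{q^k}$, plus an $\mathcal{O}_q(k^2)$ change of basis, which together with part 1 gives $\mathcal{O}_q(k^3/\log_q k)\subseteq\mathcal{O}_q(k^3)$ with no $\log_2 q$ factor. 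Note that no purely multiplication-based addition-chain method can beat roughly $\log_2 i\approx k\log_2 q$ multiplications, so the cheap-Frobenius idea (or an equivalent device) is genuinely needed; as written, your argument establishes the lemma only under the fixed-$q$ reading, i.e.\ the weaker bound $\mathcal{O}_q(k^3\log_2 q)$ in the paper's sense.
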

\begin{proof}
\begin{enumerate}
\item Any element in $\F_{q^k}$ can be represented as a polynomial over $\F_q$ of degree less than $k$. Since polynomial multiplication and division of polynomials of degree at most $k$ can be done with $\mathcal{O}_q(k^2)$ operations (see e.g.\ \cite[Corollary 4.6]{ga03}), the first statement follows.
\item Using a normal basis of $\F_{q^k}$, it was shown in \cite{ga91} that modular exponentiation can be done with $k/( \log_q k)$ multiplications in $\F_{q^k}$. The change of basis to the normal basis is a linear map and can hence be done with  $\mathcal{O}_q(k^2)$ operations in $\F_q$. Together with 1) the statement follows.
\end{enumerate}
\end{proof}

\begin{lem}\label{lem:qadic}
Let $\alpha$ be a root of a monic irreducible polynomial $p(x) \in \F_q[x]$ of degree $k$, such that $\F_{q^k}\cong \F_q[\alpha]$. 
The complexity of computing the map $\psi_{k}: \F_{q}^k \rightarrow \F_q[\alpha]$, as well as computing its inverse, is in $\mathcal O_q(k)$. 
\end{lem}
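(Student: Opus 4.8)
The plan is to unwind the definition of $\psi_k$ and observe that it is essentially a relabelling of coordinates, so both the map and its inverse require no arithmetic at all beyond reading off coefficients. Concretely, $\psi_k$ sends the vector $(u_1,\dots,u_k)\in\F_q^k$ to the field element $\sum_{j=1}^k u_j\alpha^{j-1}\in\F_q[\alpha]$. Since we represent elements of $\F_q[\alpha]$ as polynomials in $\alpha$ of degree less than $k$ with coefficients in $\F_q$, the image is stored precisely as the coefficient tuple $(u_1,\dots,u_k)$; there is no multiplication by powers of $\alpha$ to be carried out, because the powers $1,\alpha,\dots,\alpha^{k-1}$ are exactly the basis in which field elements are recorded.

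First I would make explicit the data structure: an element of $\F_{q^k}\cong\F_q[\alpha]$ is stored as its vector of $k$ coefficients with respect to $\{1,\alpha,\dots,\alpha^{k-1}\}$. Then I would note that computing $\psi_k(u)$ amounts to taking the $k$ entries $u_1,\dots,u_k$ and writing them down as this coefficient vector, which costs $\mathcal{O}_q(k)$ (one pass over $k$ symbols of $\F_q$, with no field operations). For the inverse, given $w\in\F_q[\alpha]$ already stored as its coefficient vector $(w_1,\dots,w_k)$, the preimage $\psi_k^{-1}(w)$ is simply that same vector read as an element of $\F_q^k$, again $\mathcal{O}_q(k)$. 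So both directions are linear-time, and in fact one could even argue they are free if one identifies the two representations, but stating the bound as $\mathcal{O}_q(k)$ is the safe and honest accounting of copying $k$ field elements.

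There is essentially no obstacle here: the only thing to be careful about is the convention that the ambient representation of $\F_{q^k}$ is exactly the polynomial basis $\{1,\alpha,\dots,\alpha^{k-1}\}$, which is the standing assumption in the statement ($\F_{q^k}\cong\F_q[\alpha]$) and is how Lemma on the vector space isomorphism $\psi_k$ was set up. If instead one had chosen, say, a normal basis as the default internal representation, then $\psi_k$ would incur an $\mathcal{O}_q(k^2)$ change of basis; so the lemma is really asserting that with the natural polynomial representation the map is a no-op. I would close by remarking that this is why $\psi_k$ and its inverse can be applied freely and will not dominate the complexity of any of the later encoding and retrieval algorithms.
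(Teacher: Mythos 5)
Your argument is correct and matches the paper's proof: both simply observe that $\psi_k$ and its inverse amount to transferring the $k$ coordinates between the vector representation and the coefficient representation in the basis $\{1,\alpha,\dots,\alpha^{k-1}\}$, giving $\mathcal{O}_q(k)$. Your added remark about the polynomial-basis convention is a reasonable clarification but does not change the substance.
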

\begin{proof}
Let $u=(u_1,u_2,\dots,u_k)\in \F_q^k$. If we want to compute $\psi(u)$ we simply need to write the $k$ vector coordinates $u_1,\dots,u_k$ as polynomial coefficients in $\sum_{i=1}^k u_i \alpha^{i-1}$. The inverse map can be computed analogously, by writing the polynomial coefficients as vector entries. Therefore both maps can be computed with a complexity in $\mathcal{O}_q(k)$.
\end{proof}


\begin{lem}\label{lem:compphi}
The complexity of computing the map $\phi_{k,m}: \F_{q^k}^m \rightarrow \{0,1,\dots, q^{km}-1\}$, as well as computing its inverse map $\phi_{k,m}^{-1}$, 
is in $\mathcal O_q(km)=\mathcal O_q(n)$. 
\end{lem}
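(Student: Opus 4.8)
The plan is to recall the explicit recursive definition of $\phi_{k,m}$ and its inverse from the preliminaries, and to bound the cost of each elementary step. First consider the forward map $\phi_{k,m}(u_1,\dots,u_m) = \sum_{i=1}^m \varphi(u_i) q^{k(i-1)}$. Each application of $\varphi$ amounts, by its definition, to reading off the $k$ coefficients $u_{i,1},\dots,u_{i,k}\in\F_q$ of $u_i\in\F_q[\alpha]$ and re-interpreting them as the digits of the base-$q$ integer $\sum_{j=1}^k \varphi'(u_{i,j}) q^{j-1}$; since $\varphi'$ is a fixed bijection on a set of size $q$ (a table lookup, i.e.\ $\mathcal O_q(1)$ per coordinate), computing one value $\varphi(u_i)$ costs $\mathcal O_q(k)$. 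Then assembling $\sum_{i=1}^m \varphi(u_i) q^{k(i-1)}$ just concatenates these $m$ blocks of $k$ base-$q$ digits into the base-$q$ representation of the output integer, which is $\mathcal O_q(km)$ work. Summing over $i=1,\dots,m$ gives total complexity $\mathcal O_q(km)=\mathcal O_q(n)$, as $n=km$.

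For the inverse $\phi_{k,m}^{-1}$, I would use the recursion already indicated in the text: write the input $j\in\{0,\dots,q^{km}-1\}$ in its base-$q$ expansion (which is how the message is represented, per Section \ref{sec:comprelim}), chop it into $m$ consecutive blocks of $k$ digits, each block being the base-$q$ representation of some $\varphi(u_i)\in\{0,\dots,q^k-1\}$, then invert $\varphi$ on each block. Inverting $\varphi$ means applying $\varphi'^{-1}$ to each of the $k$ base-$q$ digits (again $\mathcal O_q(1)$ per digit by table lookup) and collecting the results as the $\F_q$-coordinates of $u_i\in\F_q[\alpha]$. Each block thus costs $\mathcal O_q(k)$, and there are $m$ of them, for a total of $\mathcal O_q(km)=\mathcal O_q(n)$.

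The two main observations that make the bound work are: (i) the digit re-groupings are free in the $q$-ary model, because both representations (the $q^k$-adic integer and the vector in $\F_{q^k}^m$) are, after expanding everything down to $\F_q$, the very same string of $km$ base-$q$ digits — only the bracketing differs; and (ii) the only genuine arithmetic, namely applying $\varphi'$ or $\varphi'^{-1}$, acts on an alphabet of fixed size $q$ and so is $\mathcal O_q(1)$ per symbol. There is no real obstacle here; the one point that needs a sentence of care is justifying that no carries or multi-precision multiplications by powers of $q$ are actually incurred — this is precisely because we keep everything in the $q$-adic (base-$q$) representation throughout rather than as abstract integers, so "multiplying by $q^{k(i-1)}$" is a shift of digit positions, not a multiplication.
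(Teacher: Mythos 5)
Your proposal is correct and follows essentially the same route as the paper: since messages are kept in their $q$-adic expansion in $\F_q^n$, both $\phi_{k,m}$ and $\phi_{k,m}^{-1}$ reduce to regrouping the $km$ base-$q$ digits into $m$ blocks of $k$ and converting each block coefficient-wise (the paper phrases this via $\psi_k$ and Lemma \ref{lem:qadic}, you via $\varphi$ and $\varphi'$ table lookups), giving $\mathcal O_q(km)=\mathcal O_q(n)$. Your explicit remark that no carries or multi-precision multiplications occur is a slightly more careful articulation of the same point the paper leaves implicit.
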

\begin{proof}
Recall that $\phi_{k,m}(u_1,u_2,\dots, u_m) = \sum_{i=1}^{m} \varphi(u_{i})q^{k(i-1)} $ and that we represent the integers in their $q$-adic expansion in $\F_q^n$. Denote by $(i_1,i_2,\dots,i_n)$ the $q$-adic expansion of the integer $i$. Then $ \phi_{k,m}^{-1}(i_1,i_2,\dots,i_n) = (\psi_k(i_1,i_2,\dots,i_k), \dots, \psi_k(i_{n-k+1},i_{n-k+2},\dots, i_n))$. 
It follows from Lemma \ref{lem:qadic} that the complexity of computing $ \phi_{k,m}^{-1}$ is in $\mathcal O_q(mk)=\mathcal O_q(n)$.

Similarly, we get $\phi_{k,m}(u_1,u_2,\dots, u_m) = (\psi_k^{-1}(u_1),\psi_k^{-1}(u_2),\dots, \psi_k^{-1}(u_m))$, which is the respective integer in its $q$-adic expansion. 
By Lemma \ref{lem:qadic} this can again be done with a complexity in $\mathcal O_q(n)$.
\end{proof}

The next task we will investigate is computing powers of companion matrices of irreducible polynomials.

\begin{lem}\label{lemPexp}
 Let $p(x)\in\F_q[x]$ be monic irreducible of degree $k$, $\alpha$ a root of $p(x)$,  and let $P\in\GL_k$ be its companion matrix. Then $P^i$ can be computed with a complexity of at most $\mathcal{O}_q(k^3)$ operations over $\F_q$. 
 
 If moreover 
  the representation of $\alpha^i$ in the basis $\{1,\alpha,\dots, \alpha^{k-1}\}$ of $\F_q[\alpha]$ is known, then $P^i$ can be computed with a complexity in  $\mathcal{O}_q(k^2)$.
\end{lem}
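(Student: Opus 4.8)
The plan is to treat the two claims separately, since the second one assumes that the $\F_q[\alpha]$-representation of $\alpha^i$ is already available, while the first one does not.

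For the first statement, I would invoke Lemma~\ref{lemPexp}'s companion result, Lemma~\ref{lem:compbasic}(2), after passing through the isomorphism $\rho: \F_q[\alpha] \to \F_q[P]$. Concretely: by the lemma relating $\psi_k$ and $P$ (and the definition of $\rho$), we have $P^i = \rho(\alpha^i)$, so computing $P^i$ amounts to computing the representation of $\alpha^i$ in the basis $\{1,\alpha,\dots,\alpha^{k-1}\}$ and then reading off the corresponding matrix in $\F_q[P]$. Lemma~\ref{lem:compbasic}(2) tells us that the modular exponentiation $\alpha^i$ in $\F_{q^k}\cong\F_q[\alpha]$ costs $\mathcal{O}_q(k^3)$. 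The remaining work is to assemble $P^i$ from its representation $\alpha^i = \sum_{j=0}^{k-1} c_j \alpha^j$, i.e.\ to form $\sum_{j=0}^{k-1} c_j P^j$; I would argue this is dominated by $\mathcal{O}_q(k^3)$ as well (see below), so the total is $\mathcal{O}_q(k^3)$.

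For the second statement, the modular exponentiation step is free by hypothesis, so the only cost is converting the known representation $\alpha^i = \sum_{j=0}^{k-1} c_j \alpha^j$ into the matrix $P^i = \sum_{j=0}^{k-1} c_j P^j \in \F_q^{k\times k}$. The key observation I would use here is that, because $P$ is a companion matrix, the matrix $P^i$ is completely determined in a simple way by the coefficient vector $(c_0,\dots,c_{k-1})$: one does \emph{not} need to compute all the powers $P^0,\dots,P^{k-1}$ and take a linear combination. Instead, the first row of $P^i$ is exactly $(c_0,\dots,c_{k-1})$ (this is the vector $\psi_k^{-1}(\alpha^i)$, using that $\psi_k(e_1 P^i) = \psi_k(e_1)\alpha^i = \alpha^i$ by the companion-matrix lemma), and row $\ell+1$ of $P^i$ equals row $\ell$ multiplied by $P$ on the right, which by the same lemma corresponds to multiplication of the associated field element by $\alpha$. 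Multiplying a degree-$<k$ polynomial by $\alpha$ and reducing mod $p(x)$ is a shift plus one subtraction of the coefficient vector of $p$, costing $\mathcal{O}_q(k)$ per row; doing this for $k$ rows gives $\mathcal{O}_q(k^2)$. Hence $P^i$ is built row by row in $\mathcal{O}_q(k^2)$ total. (The same row-filling argument, starting instead from the vector obtained after the $\mathcal{O}_q(k^3)$ exponentiation, also gives the assembly bound needed to close the first statement — in fact it shows the assembly is only $\mathcal{O}_q(k^2)$, comfortably within $\mathcal{O}_q(k^3)$.)

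The main obstacle, such as it is, is being careful about what "computing $P^i$" means and making sure the assembly step is not naively bounded: a lazy count (compute $P^2, P^3, \dots, P^{k-1}$ by repeated matrix multiplication, each $\mathcal{O}_q(k^3)$, then take a linear combination) would give $\mathcal{O}_q(k^4)$ and miss the claimed bound in the second part. The real content is therefore the structural remark that successive rows of a power of a companion matrix are related by a single $\alpha$-multiplication in $\F_q[\alpha]$, which collapses the assembly cost to $\mathcal{O}_q(k^2)$. Everything else is routine complexity bookkeeping citing Lemma~\ref{lem:compbasic}.
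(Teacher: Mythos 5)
Your proposal is correct and follows essentially the same route as the paper: compute (or assume known) the representation of $\alpha^i$ in the basis $\{1,\alpha,\dots,\alpha^{k-1}\}$ via Lemma~\ref{lem:compbasic}, take it as the first row of $P^i$, and generate each subsequent row by one multiplication by $P$ (a shift plus a scalar multiple of $(-p_0,\dots,-p_{k-1})$), costing $\mathcal{O}_q(k)$ per row and $\mathcal{O}_q(k^2)$ in total. Your remark that the naive assembly via powers $P^0,\dots,P^{k-1}$ would be too slow, and that the row-by-row construction is the real content, matches the paper's argument exactly.
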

\begin{proof}
 Recall from Section \ref{sec:preliminaries} that $\psi_k(uP)=\psi_k(u)\alpha$ for any $u\in\F_q^k$. Thus, if we apply $\psi_k$ on the rows of $P^i$, we get
\[\psi_k(P^i)=\psi_k(P)\alpha^{i-1}=
\left(\begin{array}{ccccc}
                   \alpha\\ \alpha^{2}\\ \alpha^{3}\\\vdots\\ \alpha^{k}  
                  \end{array}\right)\alpha^{i-1} =
                  \left(\begin{array}{ccccc}
                   \alpha^i\\ \alpha^{i+1}\\ \alpha^{i+2}\\\vdots\\ \alpha^{i+k-1}  
                  \end{array}\right).
\]
For $0\leq i \leq k-1$ we have $\psi_k^{-1}(\alpha^i)=e_{i+1}$, where $e_i\in \F_q^k$ is the $i$-th unit vector. For higher values of $i$ we need to compute the  representation of $\alpha^i$ in the basis $\{1,\alpha,\dots, \alpha^{k-1}\}$. 
This representation, if not known, can be computed with a complexity of at most $\mathcal{O}_q(k^3)$ (see Lemma \ref{lem:compbasic}).  

Then we can construct $P^i$ as follows: 
The first row is simply the vector representation $\psi_k^{-1}(\alpha^i) \in \F_q^k$ of 
 $\alpha^i \in \F_{q}[\alpha]$. This can be done with $k$ coefficient transfers (over $\F_q$). 
For $2\leq j\leq k$ the $j$-th row of $P^i$, denoted by $P_{j}^{i}$, is  given by
\[P_{j}^{i} = P^i_{j-1} P = (\:0 \; , \; (P_{j-1}^{i})_{[1,k-1]}\: ) + (P_{j-1}^{i})_{k} \cdot (-p_0, -p_1,\dots, -p_{k-1} )  ,\]
where $(P_{j-1}^{i})_{[1,k-1]}$ denotes the subvector of  $P_{j-1}^{i}$ without the last coordinate and $(P_{j-1}^{i})_{k}$ denotes the last coordinate of $P_{j-1}^{i}$. Hence, for each row of $P^i$ we need to multiply a vector in $\F_q^k$ by a scalar from $\F_q$ and then add two vectors from $\F_q^k$. Both of these computations need $k$ operations in $\F_q$. Since we need to do this for each row of $P^i$, we get an overall complexity order of $\mathcal{O}_q(k^2)$, if  the representation of $\alpha^i$ in the basis $\{1,\alpha,\dots, \alpha^{k-1}\}$ is known. 
If this representation of $\alpha^i$ is not known, the overall complexity becomes $\mathcal{O}_q(k^3)$.
\end{proof}

We can now derive the complexity of computing the map $\mathrm{des}$, for constructing a Desarguesian spread code, as defined in Construction I in Section \ref{sec:preliminaries}:
\begin{thm}\label{thm1}
Consider the map $\mathrm{des}:\mathcal G_{q^k} (1,m) \rightarrow \G$, whose image is a Desarguesian spread $\C \subseteq \G$. 
The map $\mathrm{des} $ and its inverse $\mathrm{des}^{-1} :\C \rightarrow \mathcal G_{q^k} (1,m)$ can be computed with a complexity order in $\mathcal{O}_q (kn)$.
\end{thm}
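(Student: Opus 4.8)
The plan is to trace through the definition of $\mathrm{des}$ and bound each elementary operation using the lemmas already established. Recall from Construction I that $\mathrm{des}$ takes a normalized vector $(v_1,\dots,v_m)\in \F_{q^k}^m$ representing a line in $\mathcal G_{q^k}(1,m)$ and maps it to $\rs(\rho(v_1),\dots,\rho(v_m))$, i.e., it replaces each field element $v_j\in\F_q[\alpha]$ by the corresponding $k\times k$ matrix $\rho(v_j)\in\F_q[P]$ and concatenates these blocks into a $k\times n$ matrix. So computing $\mathrm{des}$ amounts to computing $\rho(v_j)$ for each of the $m$ coordinates.

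First I would handle a single block $\rho(v_j)$. Writing $v_j=\sum_{i=1}^k c_i\alpha^{i-1}$ in the basis $\{1,\alpha,\dots,\alpha^{k-1}\}$, linearity of $\rho$ gives $\rho(v_j)=\sum_{i=1}^k c_i P^{i-1}$. The matrices $P^0,P^1,\dots,P^{k-1}$ are universal (they do not depend on $v_j$), and by Lemma \ref{lemPexp} — or more directly, since $\psi_k(P^i)=(\alpha^i,\alpha^{i+1},\dots,\alpha^{i+k-1})^{\mathsf{T}}$ for $0\le i\le k-1$ these powers are just shift matrices with no field reductions needed — the whole list $P^0,\dots,P^{k-1}$ can be precomputed in $\mathcal O_q(k^2)$ operations once and for all. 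Then forming $\rho(v_j)$ as the $\F_q$-linear combination $\sum c_i P^{i-1}$ costs $\mathcal O_q(k^3)$ per block (it is a sum of $k$ scaled $k\times k$ matrices), so all $m$ blocks cost $\mathcal O_q(mk^3)$. That is worse than the claimed $\mathcal O_q(kn)=\mathcal O_q(mk^2)$, so the key step is a sharper argument: the matrix $\rho(v_j)$ has a special structure. Namely its first row is $(c_1,\dots,c_k)=\psi_k^{-1}(v_j)$, and row $\ell+1$ equals (row $\ell$)$\cdot P$; since $P$ is a companion matrix, right-multiplication by $P$ is a cyclic shift plus one scalar-times-vector correction, costing $\mathcal O_q(k)$ operations per row (exactly the recursion $P^i_j = P^i_{j-1}P$ used in the proof of Lemma \ref{lemPexp}). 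Hence each block $\rho(v_j)$ costs $\mathcal O_q(k)+(k-1)\cdot\mathcal O_q(k)=\mathcal O_q(k^2)$, and the $m$ blocks together cost $\mathcal O_q(mk^2)=\mathcal O_q(kn)$. Reading off $\psi_k^{-1}(v_j)$ from $v_j$ is $\mathcal O_q(k)$ by Lemma \ref{lem:qadic}, which is absorbed.

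For the inverse $\mathrm{des}^{-1}:\C\to\mathcal G_{q^k}(1,m)$, I would argue as follows. A codeword of $\C$ is given by its reduced row echelon form $U\in\F_q^{k\times n}$, which by construction equals $\rho(v)$ applied blockwise to the normalized representative $(v_1,\dots,v_m)$. Since the normalization makes the leading block equal to the identity $I_k$ (the first nonzero $v_j$ is $1$, mapped to $\rho(1)=I_k$), each remaining block of $U$ is exactly $\rho(v_j)$, and we recover $v_j$ by reading its first row and applying $\psi_k$: that is, $v_j=\psi_k\big((U)_{\text{first row of block }j}\big)$, which costs $\mathcal O_q(k)$ per block by Lemma \ref{lem:qadic}, hence $\mathcal O_q(mk)=\mathcal O_q(n)$ in total, well within $\mathcal O_q(kn)$. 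One should also note that given an arbitrary generator matrix rather than the RREF, putting it in RREF costs $\mathcal O_q(k^2 n)$; but as stated the map acts on the canonical representatives, so I would phrase the claim for inputs already in the normalized/RREF form, matching the convention fixed just after Construction I.

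The main obstacle is the middle step: naively, building each $k\times k$ block $\rho(v_j)$ from a linear combination of matrix powers looks like it costs $\mathcal O_q(k^3)$ per block, giving only $\mathcal O_q(k^2 n)$ overall. Getting down to the claimed $\mathcal O_q(kn)$ requires exploiting that the rows of $\rho(v_j)$ form a geometric-type progression under multiplication by the companion matrix $P$, so that each successive row costs only $\mathcal O_q(k)$ rather than $\mathcal O_q(k^2)$ — this is precisely the row-recursion already isolated in the proof of Lemma \ref{lemPexp}, and the proof of the theorem essentially consists of invoking that recursion $m$ times, once per coordinate block, plus the cheap applications of $\psi_k$ and $\psi_k^{-1}$.
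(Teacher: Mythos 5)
Your treatment of the forward map is essentially the paper's own proof: the key point in both is that each block $\rho(v_j)$ is not assembled as a linear combination of the powers $P^0,\dots,P^{k-1}$ (which would indeed cost $\mathcal{O}_q(k^3)$ per block), but row by row via the companion-matrix recursion isolated in Lemma \ref{lemPexp} — first row $\psi_k^{-1}(v_j)$, each further row equal to the previous row times $P$ at cost $\mathcal{O}_q(k)$ — giving $\mathcal{O}_q(k^2)$ per block and $\mathcal{O}_q(mk^2)=\mathcal{O}_q(kn)$ in total. For the inverse map you diverge: you assume the codeword is handed over as its canonical (reduced row echelon form) representative, read off $\psi_k$ of the first row of each block in $\mathcal{O}_q(n)$, and explicitly restrict the claim to such inputs because reducing an arbitrary generator matrix to RREF would cost $\mathcal{O}_q(k^2n)$. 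The paper avoids that restriction with a cheaper trick: take \emph{any} nonzero vector $v$ of the codeword (e.g.\ any row of whatever generator matrix you are given), convert it blockwise to $u\in\F_{q^k}^m$ via $\psi_k$ in $\mathcal{O}_q(n)$, and then normalize $u$ by dividing its coordinates by the first nonzero entry — at most $m$ divisions in $\F_{q^k}$, each $\mathcal{O}_q(k^2)$ by Lemma \ref{lem:compbasic}, so $\mathcal{O}_q(kn)$ overall. This closes exactly the gap you flagged: no RREF computation is needed, because normalization is done on a single vector over the extension field rather than on the whole matrix over $\F_q$. (Your RREF shortcut is still worth keeping; the paper records it as a remark that speeds up the retrieval algorithms when the input happens to be in RREF.) So the proposal is correct as far as it goes, but as stated it proves the inverse-map bound only for canonical-representative inputs, whereas the one-vector-normalization argument yields the theorem for arbitrary representations at the same $\mathcal{O}_q(kn)$ cost.
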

\begin{proof}
As before, let $p(x) \in \F_q[x]$ be monic irreducible of degree $k$, $\alpha$ be a root of $p(x)$, $P$ the corresponding companion matrix and let $\rho$ be the isomorphism from $\F_q[\alpha]$ to $\F_q[P]$.
For the computation of the map $\mathrm{des}$, take the normalized representation of the preimage $(u_1,\dots,u_m)\in\mathcal G_{q^k} (1,m)$ and consider the elements $u_i\in\F_{q^k} \cong \F_q[\alpha]$. 
For each $i\in \{1,\dots,m\}$ such that $u_i\neq 0$, use the construction used in the proof of Lemma \ref{lemPexp} to construct $P_i = \rho( u_i)$ for $i=1,2,\dots,m$. 
By Lemma  \ref{lemPexp}, this can be done with a complexity in $\mathcal{O}_q(k^2)$.
Then 
$$ \rs (\rho(u_1),\rho(u_2),\dots, \rho(u_m)) =  \rs (P_1,P_2, \dots, P_m) \in \G$$
is the respective spread codeword. Since we need to construct at most $m = n/k$ matrices $P_i$, we get an overall complexity order of $\mathcal{O}_q(mk^2)=\mathcal{O}_q(kn)$.

We now consider the inverse map $\mathrm{des}^{-1}$. Choose one vector $v\in \F_q^n$ of the given codeword $\Uvs \in \G$ and represent it as $u\in \F_{q^k}^m$ via
$$ u = (\psi_k(v_1,v_2,\dots, v_k), \psi_k(v_{k+1},v_{k+2},\dots, v_{2k}), \dots,\psi_k(v_{n-k+1},v_{n-k+2},\dots, v_n)) .$$ 
By Lemma \ref{lem:qadic} this representation can be done with $\mathcal{O}_q(mk)=\mathcal{O}_q(n)$ operations. 
Normalize $u$ by dividing all coordinates by the first non-zero entry of $u$. This normalized vector is the representative of the respective element in $\mathcal G_{q^k} (1,m) $. For the normalization, one needs at most $m$ divisions over $\F_{q^k}$. Each such division can be done with $\mathcal{O}_q (k^2)$ operations (see Lemma \ref{lem:compbasic}), i.e., we get an overall complexity of $\mathcal{O}_q ( mk^2)=\mathcal{O}_q (kn)$.
\end{proof}


\section{Message Encoding for Desarguesian Spread Codes}\label{sec:spread}

In this section we derive message encoding and retrieval maps for Desarguesian spread codes. In the first subsection we derive an intuitive encoder for these type of codes, arising from the Grassmannian representation in Construction I. In the second subsection we use the idea of enumerative coding on the Grassmannian to derive an encoder for Desarguesian spread codes. We then show that this second encoder is the same as the encoder from the first subsection with a little twist.


\subsection{Ad Hoc Construction}\label{ssec:adhoc}

We will now derive a message encoding map by concatenating the map $\mathrm{des}$ with an injective map $f$ from $\{0,\dots, (q^n-1)/(q^k-1) -1\}$ to $\mathcal{G}_{q^{k}}(1,m)$. This map is defined as follows:
\begin{align*}
 f : \{0,\dots, (q^n-1)/(q^k-1)-1\} &\longrightarrow \mathcal G_{q^k} (1,m)   \\
  i &\longmapsto  \rs (\underbrace{0, \dots ,0 }_{m-\epsilon(i)-1} ,1,  \phi_{k,\epsilon(i)}^{-1}(i- \sum_{j=0}^{\epsilon(i) -1} q^{jk}) )   .
\end{align*}
where $\epsilon(i) := \min\{\ell \mid \sum_{j=0}^{\ell} q^{jk} \geq i+1\}$ and $\phi_{k,\epsilon(i)} : \F_{q^k}^{\epsilon(i) } \rightarrow\{0,\dots, q^{k\epsilon(i)}-1\} $ is the inverse $q^k$-adic expansion, as explained in Section \ref{sec:preliminaries}.
We defined $\epsilon(i)$ such that $f$ behaves as follows: 
\begin{align*}
0 &\mapsto\; \rs(0,\dots,0,0,0,1) ,\\
\{1,2,\dots,q^k\}\ni i &\mapsto \{\rs(0,\dots,0,0,1, v) \mid v\in \F_{q^k}, v=\phi_{k,1}^{-1}(i-1)\} , \\
\{q^k+1,q^k+2,\dots,q^{2k}+q^k\}\ni i &\mapsto \{\rs(0,\dots,0,1,v) \mid v\in  \F_{q^k}^2, v=\phi_{k,2}^{-1}(i-q^k-1)\} ,  \\
\vdots\\ 
\Big\{\sum_{j=1}^{m-2} q^{jk} +1,\sum_{j=1}^{m-2} q^{jk} +2,\dots,\sum_{j=1}^{m -1} q^{jk}   \Big\} \ni i &\mapsto \Big\{\rs(1, v)  \mid v \in \F_{q^k}^{m-1} ,   v=\phi_{k,m-1}^{-1}(i-\sum_{j=0}^{m-2} q^{jk}) \Big\}.
\end{align*}

\begin{thm}\label{thm:enc2}
The map $f$ is bijective and hence
$$\enc_{1} := \mathrm{des} \circ f$$
is an injective map from $\{0,\dots, (q^n-1)/(q^k-1)-1 \}$ to $\G$, whose image is the Desarguesian spread code $\C\subseteq \G$ from Construction I. Therefore, $\enc_{1}$ is an encoding map for the respective Desarguesian spread code $\C$.
\end{thm}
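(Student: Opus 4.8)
The plan is to establish that $f$ is a bijection, after which the conclusion follows immediately: $\mathrm{des}$ is injective (being essentially the row-wise application of the field isomorphism $\rho$ on normalized representatives, which sends distinct lines of $\mathcal G_{q^k}(1,m)$ to distinct elements of $\G$), and its image is by Construction~I exactly the Desarguesian spread code $\C$. Hence $\enc_1 = \mathrm{des}\circ f$ is an injection from $\{0,\dots,(q^n-1)/(q^k-1)-1\}$ onto $\C$, which is precisely the definition of an encoding map for $\C$. So the whole content is the bijectivity of $f$.

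To prove $f$ is bijective, I would first observe that the domain and codomain have the same (finite) cardinality: $|\mathcal G_{q^k}(1,m)| = (q^{km}-1)/(q^k-1) = q^{k(m-1)} + q^{k(m-2)} + \dots + 1$, which equals $\sum_{j=0}^{m-1} q^{jk}$, matching the size $(q^n-1)/(q^k-1)$ of the domain (recall $n = km$). Therefore it suffices to prove $f$ is \emph{surjective}, or equivalently injective. I would argue surjectivity by partitioning $\mathcal G_{q^k}(1,m)$ according to the position of the leading one in the normalized representative: a normalized vector has the form $(0,\dots,0,1,v)$ with exactly $\ell$ trailing free coordinates $v \in \F_{q^k}^{\ell}$ for some $\ell \in \{0,1,\dots,m-1\}$, and there are exactly $q^{k\ell}$ such lines. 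Correspondingly I would verify that $\epsilon(i)$, defined as $\min\{\ell : \sum_{j=0}^{\ell} q^{jk} \ge i+1\}$, partitions $\{0,\dots,(q^n-1)/(q^k-1)-1\}$ into consecutive blocks $B_\ell = \{\sum_{j=0}^{\ell-1} q^{jk}, \dots, \sum_{j=0}^{\ell} q^{jk} - 1\}$ of size exactly $q^{k\ell}$ (with $B_0 = \{0\}$), which is a routine telescoping check. On the block $B_\ell$, the map $f$ sends $i \mapsto \rs(0,\dots,0,1,\phi_{k,\ell}^{-1}(i - \sum_{j=0}^{\ell-1} q^{jk}))$; since $i - \sum_{j=0}^{\ell-1} q^{jk}$ ranges over $\{0,\dots,q^{k\ell}-1\}$ as $i$ ranges over $B_\ell$, and $\phi_{k,\ell} : \F_{q^k}^{\ell} \to \{0,\dots,q^{k\ell}-1\}$ is a bijection (established in Section~\ref{sec:preliminaries}), $f$ restricts to a bijection from $B_\ell$ onto the set of normalized lines with leading one in position $m-\ell$. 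Taking the union over $\ell$ gives that $f$ is a bijection onto $\mathcal G_{q^k}(1,m)$.

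The only mild subtlety — and the step I expect to need the most care — is making sure the ``normalized representative'' conventions line up: the image vectors $(0,\dots,0,1,\phi_{k,\ell}^{-1}(\cdot))$ produced by $f$ are exactly the normalized representatives (first nonzero entry equal to $1$) that Construction~I uses, so that $\mathrm{des}\circ f$ indeed lands in the reduced row echelon forms of the spread codewords and no two messages collide. This is really just bookkeeping with the index shift in $\epsilon(i)$ and the offset $\sum_{j=0}^{\epsilon(i)-1} q^{jk}$, together with the cardinality identity $(q^{km}-1)/(q^k-1) = \sum_{j=0}^{m-1} q^{jk}$; once these are in place the theorem follows. I would present the block decomposition explicitly (the displayed case analysis in the excerpt already does this) and then invoke bijectivity of each $\phi_{k,\ell}$ to conclude.
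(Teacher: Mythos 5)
Your proposal is correct and follows essentially the same route as the paper: the displayed block behavior of $f$ governed by $\epsilon(i)$, the bijectivity of $\phi_{k,\epsilon(i)}$ on each block, and the cardinality identity $(q^n-1)/(q^k-1)=\sum_{j=0}^{m-1}q^{jk}$, followed by composing with $\mathrm{des}$ whose image is the spread. The paper phrases it as injectivity plus equal cardinalities rather than your blockwise surjectivity, but given the matching cardinalities these are the same argument, just written in more detail on your side.
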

\begin{proof}
By the above shown behavior of $f$ and the fact that $\phi_{k,\epsilon(i)}$ is bijective, it follows that $f$ is injective. Since $(q^n-1)/(q^k-1) =\sum_{j=0}^{m -1} q^{jk}$, the cardinalities of domain and codomain of $f$ are equal. This implies that $f$ is bijective. Since the image of  $\mathrm{des}$ is the Desarguesian spread code $\C\subseteq \G$, the statement follows. 
\end{proof}

\vspace{0.5cm}

We will now give two algorithms, describing how to compute the encoding map $\enc_{1}$ and the respective message retrieval map $\enc_{1}^{-1}$. As before we denote by $\rho$ the isomorphisms from $\F_{q}[\alpha]$ to $\F_{q}[P]$. 
The computational complexity order of these two algorithms is afterwards given in Theorem \ref{thm:DesComp}.


\begin{algorithm}[ht]
\begin{algorithmic}
\REQUIRE{A message $i\in \{0,1,\dots, (q^n-1)/(q^{k}-1)-1\}$.}
\STATE{Compute $\epsilon(i)$.} 
\STATE{Compute $i' = i- \sum_{j=0}^{\epsilon(i) -1} q^{jk}$.}
\STATE{Compute $u'=\phi^{-1}_{k,\epsilon(i)} (i') $.}
\STATE{Set $u:=(\underbrace{0, \dots ,0 }_{m-\epsilon(i)-1} ,1, u')$.}
\STATE{Set $U:=(\rho(u_{1}),\rho(u_{2}), \dots, \rho(u_{m}))$.}
\RETURN $\Uvs=\rs (U)$
\end{algorithmic}
\caption{Message encoding for the Desarguesian spread code  $\C \subseteq \G$  from Construction I.}
\label{alg:DesEnc}
\end{algorithm}

\begin{algorithm}[ht]
\begin{algorithmic}
\REQUIRE{A spread codeword $\Uvs\in \C$.}
\STATE{Choose a non-zero vector $v\in \Uvs$.} 
\STATE{Compute  $u=(\psi_k(v_1,v_2,\dots,v_k),\psi_k(v_{k+1},v_{k+2},\dots,v_{2k}), \dots, \psi_k(v_{n-k+1},v_{n-k+2},\dots,v_n))$.}
\STATE{Normalize $u$.}
\STATE{Set $\epsilon(i):= m -$ (the coordinate of the first non-zero entry of $u$).}
\STATE{Set  $u':= $ the rightmost $\epsilon(i)$ coordinates of $u$.}
\STATE{Compute $i=\phi_{k,\epsilon(i)} (u') + \sum_{j=0}^{\epsilon(i)-1} q^{jk}$.}
\RETURN $i$
\end{algorithmic}
\caption{Message retrieval for the Desarguesian spread code  $\C \subseteq \G$  from Construction I.}
\label{alg:Des}
\end{algorithm}

\begin{thm}
Algorithms \ref{alg:DesEnc} and \ref{alg:Des} compute the images of the maps $\enc_1$ and $\enc_1^{-1}$, respectively, for any valid input.
\end{thm}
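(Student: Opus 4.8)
The plan is to verify that each line of Algorithm~\ref{alg:DesEnc} faithfully implements one stage of the composite map $\enc_1 = \mathrm{des}\circ f$, and that Algorithm~\ref{alg:Des} implements its inverse, using only the definitions of $f$, $\mathrm{des}$, $\phi_{k,\epsilon(i)}$ and $\rho$ already given, together with Theorem~\ref{thm:enc2}. Since $\enc_1$ is already known (from Theorem~\ref{thm:enc2}) to be a well-defined injective map with image $\C$, it suffices to match the algorithms line-by-line to the closed-form expressions.

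For Algorithm~\ref{alg:DesEnc}: given $i$, the first three lines compute $\epsilon(i)$, then $i' = i - \sum_{j=0}^{\epsilon(i)-1} q^{jk}$, then $u' = \phi_{k,\epsilon(i)}^{-1}(i')$; one checks $0\le i' \le q^{k\epsilon(i)}-1$ so that $\phi_{k,\epsilon(i)}^{-1}$ is applicable (this follows from the minimality in the definition of $\epsilon(i)$, exactly the inequality used to establish the behavior of $f$ in the statement before Theorem~\ref{thm:enc2}). The fourth line forms $u = (0,\dots,0,1,u')$ with $m-\epsilon(i)-1$ leading zeros, which is precisely the normalized representative of $f(i)\in\mathcal G_{q^k}(1,m)$. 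The last two lines apply $\rho$ coordinatewise and take the row space, which is exactly the definition of $\mathrm{des}$ evaluated at the normalized representative. Hence the returned $\Uvs$ equals $\mathrm{des}(f(i)) = \enc_1(i)$, and by Theorem~\ref{thm:enc2} this lies in $\C$.

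For Algorithm~\ref{alg:Des}: given $\Uvs\in\C$, I would first argue that the map applied in lines~2--3 (apply $\psi_k$ blockwise, then normalize) recovers the normalized representative $(u_1,\dots,u_m)$ of $\mathrm{des}^{-1}(\Uvs)\in\mathcal G_{q^k}(1,m)$ --- this is essentially the content of the $\mathrm{des}^{-1}$ computation in the proof of Theorem~\ref{thm1}, and one must note the choice of non-zero vector $v$ is irrelevant after normalization since $\mathrm{des}$ maps the normalized representative to the reduced row echelon form, so any non-zero $v\in\Uvs$ lies in the same one-dimensional $\F_{q^k}$-space after the identification. Given that $(u_1,\dots,u_m)$ is normalized, the position of its first non-zero entry determines $\epsilon(i)$ via line~4 (matching the block structure of the definition of $f$), the trailing $\epsilon(i)$ coordinates are the $u'$ of line~5, and line~6 computes $i = \phi_{k,\epsilon(i)}(u') + \sum_{j=0}^{\epsilon(i)-1} q^{jk}$, which by construction inverts the formula for $f$. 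Since $f$ is a bijection (Theorem~\ref{thm:enc2}), this is $f^{-1}(\mathrm{des}^{-1}(\Uvs)) = \enc_1^{-1}(\Uvs)$.

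The main obstacle, such as it is, is the bookkeeping around the index shift $i' = i - \sum_{j=0}^{\epsilon(i)-1} q^{jk}$ and making sure it lands in the valid domain $\{0,\dots,q^{k\epsilon(i)}-1\}$ of $\phi_{k,\epsilon(i)}^{-1}$, i.e.\ reconciling the $m-\epsilon(i)-1$ leading zeros with the block-by-block description of $f$; but this is exactly the inequality $\sum_{j=0}^{\epsilon(i)-1} q^{jk} \le i+1 \le \sum_{j=0}^{\epsilon(i)} q^{jk}$ already used implicitly in the display establishing the behavior of $f$, so no new work is required. Everything else is a direct substitution of definitions, and I expect the proof to be short.
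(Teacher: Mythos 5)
Your proposal is correct and follows essentially the same route as the paper's proof: decompose $\enc_1$ as $\mathrm{des}\circ f$, match the algorithm steps line by line to $f$ and $\mathrm{des}$ (resp.\ their inverses), observe that the choice of the non-zero vector $v$ is immaterial after normalization, and note that $\epsilon(i)$ equals $m$ minus the position of the first non-zero entry of $u$. The only difference is that you spell out the domain check $0\le i'\le q^{k\epsilon(i)}-1$ (via the minimality of $\epsilon(i)$, which in fact gives the strict bound $\sum_{j=0}^{\epsilon(i)-1}q^{jk}\le i$), a detail the paper leaves implicit.
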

\begin{proof}
Algorithm \ref{alg:DesEnc} first computes the image $f(i)$ in the first three steps and then computes the image of this result under the map $\mathrm{des}$. Hence, by Theorem \ref{thm:enc2}, Algorithm \ref{alg:DesEnc} returns the corresponding Desarguesian spread codeword to the input.

In Algorithm \ref{alg:Des} we note that the choice of the non-zero vector $v\in \Cvs$ does not matter, since any choice will result in the same normalized $u\in\F_{q^k}^m$. One can easily check that the last three steps of the algorithm then compute $f^{-1}(u)$. For this note that in the definition of $f$ one can see that $\epsilon(i)$ is equal to $m$ minus the coordinate of the first non-zero entry of $u$. Hence, by Theorem \ref{thm:enc2}, Algorithm \ref{alg:Des} returns the message corresponding to a codeword of the Desarguesian spread $\Cvs$.
\end{proof}

\begin{ex}\label{ex6}
Let $q=k=2$  and $m=3$. Moreover, let $\alpha\in \F_{2^2}$ be a primitive element and let the elements of $\F_{2^2}$ be identified via the map $\varphi$ with $0\mapsto 0, 1\mapsto 1, \alpha\mapsto 2, \alpha^2=\alpha+1 \mapsto 3$. 
Moreover, let $P$ be the companion matrix of $\alpha$ (see Example \ref{ex1}).
We want to encode the message $i=14$. Following Algorithm \ref{alg:DesEnc}, we compute $\epsilon(i)=2$, $i'= 14-(1+4)=9$ and $u'=\phi_{2,2}^{-1}(9)=(1,\alpha)$. Thus we get  $u=(1,1,\alpha)$ and $U=(I_2, I_2, P)$ as the basis matrix of the respective codeword. 

Similarly, we get that the first $10$ non-negative integers are mapped by $f$ to the following elements of $\mathcal G_{2^2}(1,3)$:
$$\begin{array}{|c|c|c|c|c|c|c|c|c|c|c|c|}
\hline
i   & 0&  1&2&3&4&5&6&7&8&9&...\\\hline
f(i)& \rs(0, 0, 1)   &\rs(0, 1, 0)   &\rs(0, 1, 1)   &\rs(0, 1, \alpha)   &\rs(0, 1, \alpha^2)   &\rs(1, 0, 0)   &\rs(1, 1, 0)   &\rs(1, \alpha, 0)   &\rs(1, \alpha^2, 0)   &\rs(1, 0,1)   &... \\\hline
\end{array}
$$
Following Algorithm \ref{alg:DesEnc}, the matrix representation of the spread code elements in $\mathcal G_{2}(2,6)$ are given by replacing the elements of $\F_{2}[\alpha]$ by the respective element of $\F_2[P]$.
\end{ex}

\begin{ex}
Consider the same setting as in Example \ref{ex6}. Let $\Uvs= \rs(\rho(1), \rho(\alpha + 1), \rho(1)) \in \Cvs$ be a codeword, for which we would like to find the corresponding message. Following Algorithm \ref{alg:Des}, we choose some non-zero $v\in \Uvs$, say $v=(0, 1 , 1, 1, 0, 1)$, and compute $u=(\alpha,1,\alpha)$. Then we normalize $u$ to $(1,\alpha+1, 1)$. Since the first non-zero entry is in position $1$, we get $\epsilon(i)=3-1=2$. Then we compute $i= \phi_{2,2}(\alpha+1,1) + \sum_{j=0}^{1} 2^{2j} = (3\cdot 1 +1\cdot 4) +(1+4) =12$.
\end{ex}

In the following we analyze the complexity of computing $\enc_1$ and $\enc_1^{-1}$, i.e., Algorithms \ref{alg:DesEnc} and \ref{alg:Des}. 
For this we need the following lemma, which implies that the computation of $\epsilon(i)$ can be done efficiently in the $q$-adic expansion. 

\begin{lem}\label{lem:epsilon}
For any $\ell<m$, the $q$-adic expansion $ ( u_1, \dots, u_n)\in\F_q^n$ of the integer $\sum_{j=0}^{\ell} q^{jk} $ is given by
$$\phi_{1,n}^{-1} \left(\sum_{j=0}^{\ell} q^{jk}\right)= ( u_1, \dots, u_n)   ,\quad \textnormal{ where }\left\{ \begin{array}{ll} u_{i} = 1 &  \textnormal{if } k| (i-1) \textnormal{ and } i-1\leq  \ell k \\ u_{i}=0 & \textnormal{else } \end{array}\right.   .$$
\end{lem}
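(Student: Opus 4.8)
The plan is to verify directly that the claimed vector $(u_1,\dots,u_n)$ is indeed the $q$-adic expansion of the integer $N:=\sum_{j=0}^{\ell} q^{jk}$, by applying the inverse $q$-adic expansion map $\phi_{1,n}$ (which is a bijection by the remark following its definition) to the claimed vector and checking that the result equals $N$. Recall that $\phi_{1,n}(u_1,\dots,u_n) = \sum_{i=1}^{n} \varphi'(u_i) q^{i-1}$, and that $\varphi'(0)=0$ and $\varphi'(1)=1$ as noted in the preliminaries. So it suffices to evaluate this sum for the prescribed $(u_1,\dots,u_n)$ and match it against $\sum_{j=0}^\ell q^{jk}$.

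First I would observe that in the prescribed vector the only nonzero coordinates are those indices $i$ with $k\mid(i-1)$ and $i-1\le \ell k$; writing $i-1 = jk$ for $j = 0,1,\dots,\ell$, these are exactly the coordinates $i = jk+1$, and each such $u_i$ equals $1$. Note that since $\ell < m = n/k$ we have $\ell k \le (m-1)k = n-k < n$, so every index $jk+1$ with $0\le j\le\ell$ satisfies $1 \le jk+1 \le n$, i.e. these are legitimate coordinates and the vector is well-defined. Then
\[
\phi_{1,n}(u_1,\dots,u_n) = \sum_{i=1}^n \varphi'(u_i)\,q^{i-1} = \sum_{j=0}^{\ell} \varphi'(1)\, q^{(jk+1)-1} = \sum_{j=0}^{\ell} q^{jk} = N,
\]
using $\varphi'(1)=1$ and $\varphi'(0)=0$ to kill all other terms. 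Since $\phi_{1,n}$ is a bijection, the claimed vector is the unique $q$-adic expansion of $N$, i.e. $\phi_{1,n}^{-1}(N) = (u_1,\dots,u_n)$, which is exactly the assertion.

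There is essentially no serious obstacle here; the only point requiring a little care is the bookkeeping on indices — matching the exponent $i-1$ in the definition of $\phi_{1,n}$ with the exponent $jk$ in the geometric-like sum, and confirming the range condition $\ell < m$ guarantees all relevant indices lie within $\{1,\dots,n\}$ so that no carrying occurs and the expansion genuinely has the stated sparse form. One might also remark for clarity that each coordinate value $1$ lies in $\{0,\dots,q-1\}$, so this is a valid $q$-adic digit string, which is automatic but worth a phrase.
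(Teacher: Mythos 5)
Your proof is correct and follows the same route as the paper: evaluate $\phi_{1,n}$ on the prescribed sparse vector, observe that the only surviving terms give $q^0+q^k+\dots+q^{\ell k}$, and conclude by bijectivity of the expansion map. The extra index-range check ($\ell<m$ ensuring all indices $jk+1$ lie in $\{1,\dots,n\}$) is a harmless addition the paper leaves implicit.
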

\begin{proof}
The inverse $q$-adic expansion maps $(u_1,\dots,u_n)$ to $\sum_{i=1}^{n} \varphi(u_{i}) q^{i-1}$. 
With the above values of $u_1,\dots, u_n$ one gets (recall that $\varphi(0)=0$ and $\varphi(1)=1$) 
$$\phi_{1,n}(u_1,u_2,\dots,u_n)= \sum_{i=1}^{n} \varphi(u_{i}) q^{i-1}= q^0+q^k +q^{2k}+\dots+q^{\ell k}=\sum_{j=0}^{\ell} q^{jk}    .$$
\end{proof}

To check if an integer $i$ is greater than an integer $j$, one needs to check if the $q$-adic expansion of $i$ is greater than the $q$-adic expansion of $j$ in reverse lexicographic order.

\begin{ex}
Consider the setting of Example \ref{ex6}. We represent the message set $\{0,1,\dots,20\}$ in their $2$-adic expansion:
$$\begin{array}{l l l l }
 0 \rightarrow (0,0,0,0,0,0)   &   2 \rightarrow (0,1,0,0,0,0) & \quad \dots \quad & 19 \rightarrow (1,1,0,0,1,0)\\ 
 1 \rightarrow (1,0,0,0,0,0)   &  3 \rightarrow (1,1,0,0,0,0) &  & 20 \rightarrow (0,0,1,0,1,0)   .\\ 
\end{array}
$$
By Lemma \ref{lem:epsilon} the $2$-adic expansion of $\sum_{j=0}^{\ell} 2^{2j}$ for $\ell \in \{0,1,2\}$ are given by:
$$\begin{array}{l l l l }
\ell=0: & (1,0,0,0,0,0) \\
\ell=1: & (1,0,1,0,0,0) \\
\ell=2: & (1,0,1,0,1,0)  . 
\end{array} $$
Hence, we can compute $\epsilon(i)=\min\{\ell \mid \sum_{j=0}^{\ell} 2^{jk} \geq i+1\}=\min\{\ell \mid \sum_{j=0}^{\ell} 2^{jk}-1 \geq i\}$ for a given message $i\in \{0,1,\dots,20\}$ in its $2$-adic expansion $u= (u_1,\dots, u_6) \in \F_2^6$ as follows (note that $u_6=0$ for all messages):
\begin{itemize}
 \item If $u=0$, then $\epsilon(i)=0$.
 \item If $u\neq 0$, 
check coordinate-wise (from right to left) if $u$ is less than or equal to $(0,0,1,0,0,0)$. If so, then $\epsilon(i)=1$, otherwise $\epsilon(i)=2$. 
\end{itemize}
\end{ex}

In the proof of the following lemma we describe how to 
compute $\epsilon(i)$ in general.
\begin{lem}\label{lem30}
The complexity of computing $\epsilon(i)$, for $i\in \{0,1,\dots, q^n-1\}$, is in $\mathcal{O}_q(n)$.
\end{lem}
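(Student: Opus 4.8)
The plan is to show that computing $\epsilon(i) = \min\{\ell \mid \sum_{j=0}^{\ell} q^{jk} \geq i+1\}$ reduces to comparing the $q$-adic expansion of $i$ against the $q$-adic expansions of the threshold values $\sum_{j=0}^{\ell} q^{jk}$, each of which has the explicit sparse form given by Lemma~\ref{lem:epsilon}. First I would record that we are given $i$ in its $q$-adic expansion $u = (u_1, \dots, u_n) \in \F_q^n$ (as stipulated for all the complexity analyses in this section). The key observation is that, by Lemma~\ref{lem:epsilon}, the threshold $T_\ell := \sum_{j=0}^{\ell} q^{jk}$ has $q$-adic expansion that is $1$ in exactly the coordinates indexed $1, k+1, 2k+1, \dots, \ell k + 1$ and $0$ elsewhere. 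In particular, the $q$-adic expansions of $T_0, T_1, \dots, T_{m-1}$ are nested: the expansion of $T_{\ell+1}$ differs from that of $T_\ell$ only by flipping a single coordinate (coordinate $(\ell+1)k+1$) from $0$ to $1$.

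The algorithm I would then describe: handle the trivial case $u = 0$ first, returning $\epsilon(i) = 0$ since $T_0 = 1 \geq 1$. Otherwise, to find $\epsilon(i)$ we must find the smallest $\ell$ such that $T_\ell \geq i+1$, equivalently $T_\ell - 1 \geq i$, equivalently (comparing $q$-adic expansions in reverse lexicographic order, as noted just before Lemma~\ref{lem30}) the expansion of $T_\ell$ is $\geq$ the expansion of $i$ in reverse lexicographic order, which by the nested structure can be tested by scanning coordinates from the most significant ($u_n$) down to the least significant and checking that $u$ does not exceed $T_\ell$ at the first coordinate where they differ. Since we never need to materialize the $T_\ell$ explicitly — their coordinates are $1$ precisely at positions $\equiv 1 \pmod k$ up to position $\ell k + 1$ — one pass over the $n$ coordinates of $u$, maintaining the current candidate $\ell$, suffices: one determines the largest-index nonzero coordinate of $u$ and, reading from there, the smallest $\ell$ for which the threshold dominates. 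Each coordinate touched costs $\mathcal{O}_q(1)$ operations over $\F_q$ (a comparison of two elements of $\F_q$ via the fixed bijection $\varphi'$), so the total is $\mathcal{O}_q(n)$.

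The main obstacle — or rather the only point requiring care — is verifying correctness of the reverse-lexicographic comparison: namely that $a \geq b$ as integers in $\{0,\dots,q^n-1\}$ if and only if the $q$-adic expansion of $a$ is $\geq$ that of $b$ in reverse lexicographic order (most significant digit first), together with checking the off-by-one bookkeeping between "$T_\ell \geq i+1$" and "$T_\ell - 1 \geq i$" and the fact that $T_\ell - 1$ need not be compared directly since $T_\ell \geq i+1 \iff T_\ell > i \iff T_\ell \geq i$ fails only when $T_\ell = i$, and $T_\ell = i$ would already be caught. To keep the writeup clean I would simply invoke the standard fact about $q$-adic (positional) representations and the monotonicity of $\varphi'$ with respect to the chosen ordering, state the single-pass scanning procedure explicitly, and conclude the $\mathcal{O}_q(n)$ bound from the fact that the scan visits each of the $n$ coordinates once with constant work per coordinate. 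Everything else is routine, so the proof should be short.
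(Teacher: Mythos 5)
Your proposal is correct and follows essentially the same route as the paper's proof: represent $i$ in its $q$-adic expansion, locate its most significant non-zero coordinate, and decide $\epsilon(i)$ by at most $n$ coordinate-wise comparisons against the sparse threshold expansions from Lemma~\ref{lem:epsilon}, giving $\mathcal{O}_q(n)$. The paper merely makes the scan explicit via a case split on whether $k\mid(j^*-1)$ (closed formula $\lceil(j^*-1)/k\rceil$ in one case, a single threshold comparison in the other), which is the same bookkeeping you describe informally.
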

\begin{proof}
If $i=0$, then $\epsilon(i)=0$. Assume now that $i>0$.
Represent $i$ in its $q$-adic expansion  $\phi_{1,n}^{-1}(i)= (u_1,\dots, u_n)=u \in \F_q^n$.
Find the first non-zero coordinate from the right ${j^*}$ of $u$, i.e., $j^*=\min\{i \mid u_{n-i}=0 \}$. 
\begin{itemize}
\item
If $k\nmid (j^*-1) $, then $\epsilon(i) = \lceil ({j^*-1})/{k}\rceil $. 
\item 
If $k| (j^*-1)$, then compare $u$ coordinate-wise (from right to left) with the $q$-adic expansion of $\sum_{j=0}^{(j^*-1)/k} q^{jk}-1$, as illustrated in Lemma \ref{lem:epsilon}. 
If $u$ is strictly greater than the $q$-adic expansion of $\sum_{j=0}^{(j^*-1)/k} q^{jk}-1$, then $\epsilon(i) = (j^*-1)/k +1$, otherwise $\epsilon(i) = (j^*-1)/k$. 
\end{itemize}
Thus we need at most $n$ coordinate comparisons, and a division over $\F_q$, which implies the statement.
\end{proof}

We can now give the computational complexity order of the message encoding and retrieval for Desarguesian spread codes:

\begin{thm}\label{thm:DesComp}
Algorithms \ref{alg:DesEnc} and \ref{alg:Des} have a computational complexity in $\mathcal O_q(kn)$.
\end{thm}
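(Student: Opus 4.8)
The plan is to analyze each algorithm line by line and bound the cost of each step, then take the maximum. The key ingredients are all the computational lemmas established in Section~\ref{sec:comprelim}: Lemma~\ref{lem:compphi} for the cost of $\phi_{k,m}$ and $\phi_{k,m}^{-1}$, Lemma~\ref{lem30} for the cost of computing $\epsilon(i)$, Lemma~\ref{lemPexp} for the cost of building powers of the companion matrix, Lemma~\ref{lem:compbasic} for the cost of multiplication and division in $\F_{q^k}$, and Lemma~\ref{lem:qadic} for the cost of $\psi_k$ and its inverse.

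First I would treat Algorithm~\ref{alg:DesEnc}. Computing $\epsilon(i)$ costs $\mathcal{O}_q(n)$ by Lemma~\ref{lem30}. Computing $i' = i - \sum_{j=0}^{\epsilon(i)-1} q^{jk}$ is a subtraction of two integers given in their $q$-adic expansion in $\F_q^n$, hence $\mathcal{O}_q(n)$; note the expansion of the subtrahend is read off directly from Lemma~\ref{lem:epsilon}. Computing $u' = \phi_{k,\epsilon(i)}^{-1}(i')$ costs $\mathcal{O}_q(n)$ by Lemma~\ref{lem:compphi} (the bound there is stated for $m$, and $\epsilon(i)\le m$, so it certainly applies). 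Forming $u$ by prepending zeros and a one is $\mathcal{O}_q(n)$. The step $U := (\rho(u_1),\dots,\rho(u_m))$ is the one that dominates: for each $i$ with $u_i\ne 0$ we must write $u_i\in\F_q[\alpha]$ as an element of $\F_q[P]$, which is exactly the construction in the proof of Lemma~\ref{lemPexp} and costs $\mathcal{O}_q(k^2)$ per matrix (the representation of $u_i$ in the basis $\{1,\alpha,\dots,\alpha^{k-1}\}$ is already available, since $u'$ was produced by $\phi_{k,\epsilon(i)}^{-1}$ in that very basis). Doing this for $m = n/k$ blocks gives $\mathcal{O}_q(mk^2) = \mathcal{O}_q(kn)$, and this term absorbs all the others, so Algorithm~\ref{alg:DesEnc} runs in $\mathcal{O}_q(kn)$.

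Next I would treat Algorithm~\ref{alg:Des}. Choosing a non-zero vector $v\in\Uvs$ is $\mathcal{O}_q(n)$. Computing $u$ as the tuple of $\psi_k$-images of the $m$ length-$k$ blocks of $v$ is $\mathcal{O}_q(mk) = \mathcal{O}_q(n)$ by Lemma~\ref{lem:qadic}. Normalizing $u$ requires at most $m$ divisions in $\F_{q^k}$, each costing $\mathcal{O}_q(k^2)$ by Lemma~\ref{lem:compbasic}, for a total of $\mathcal{O}_q(mk^2) = \mathcal{O}_q(kn)$; this is the dominant step. Reading off $\epsilon(i)$ as $m$ minus the index of the first non-zero entry of $u$ is $\mathcal{O}_q(n)$, extracting the rightmost $\epsilon(i)$ coordinates is $\mathcal{O}_q(n)$, and computing $i = \phi_{k,\epsilon(i)}(u') + \sum_{j=0}^{\epsilon(i)-1} q^{jk}$ costs $\mathcal{O}_q(n)$ by Lemma~\ref{lem:compphi} together with an $\mathcal{O}_q(n)$ integer addition (again using Lemma~\ref{lem:epsilon} for the expansion of the summand). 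So Algorithm~\ref{alg:Des} also runs in $\mathcal{O}_q(kn)$, completing the proof.

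I do not expect a genuine obstacle here, since every step is directly covered by a prior lemma; the only point requiring a little care is the bookkeeping that $\epsilon(i)\le m$ so that the $\phi$-complexity bounds phrased in terms of the full length $m$ still apply, and the observation that in Algorithm~\ref{alg:DesEnc} the basis representation of each $u_i$ needed for the cheap $\mathcal{O}_q(k^2)$ version of Lemma~\ref{lemPexp} is already in hand, so the $\mathcal{O}_q(k^3)$ fallback is never invoked. Everything else is a routine "sum the per-step costs and take the max" argument, with the $m$ companion-block reconstructions (encoding) and the $m$ field divisions (retrieval) each contributing the dominant $\mathcal{O}_q(kn)$.
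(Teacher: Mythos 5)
Your proof is correct and follows essentially the same route as the paper: the paper bounds the encoder by noting $f(i)$ costs $\mathcal{O}_q(n)$ and then cites Theorem~\ref{thm1} for the $\mathcal{O}_q(kn)$ cost of $\mathrm{des}$ (which you simply unfold, identifying the $m$ companion-matrix reconstructions at $\mathcal{O}_q(k^2)$ each as the dominant term), and it bounds the retriever exactly as you do, with the $m$ divisions in $\F_{q^k}$ during normalization dominating at $\mathcal{O}_q(k^2 m)=\mathcal{O}_q(kn)$. Your extra bookkeeping (that $\epsilon(i)\le m$ and that the basis representation of each $u_i$ is already available, so only the $\mathcal{O}_q(k^2)$ case of Lemma~\ref{lemPexp} is needed) matches what the paper implicitly relies on via Theorem~\ref{thm1}.
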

\begin{proof}
For the encoder, i.e., Algorithm \ref{alg:DesEnc}, the complexity of computing $f(i)$ is dominated by finding $\epsilon(i)$ and computing $\phi_{k,\epsilon(i)}^{-1}(i- \sum_{j=0}^{\epsilon(i) -1} q^{jk})$. By  Lemmas~\ref{lem:qadic} and \ref{lem30} these task can be done with $\mathcal O_q(n)$ operations. 
Since we know from Theorem \ref{thm1} that $\mathrm{des}$ can be computed with a computational complexity of order $\mathcal O_q(kn)$, the statement for $\enc_1$ follows.

Algorithm~\ref{alg:Des} describes how to compute the retrieval map $\enc_1^{-1}$. By Lemma \ref{lem:qadic}, getting $u\in \F_{q^k}^m$ from $v\in \F_q^n$ needs $\mathcal{O}_q(mk)=\mathcal{O}_q(n)$ operations. Analogously to the normalization in the proof of Theorem \ref{thm1}, the normalization of $u$ requires $\mathcal O_q(k^2 m)=\mathcal O_q(kn)$ operations. Then the complexity of the computation of  $\epsilon(i)$ become negligible, as does the computation of $\phi_{k,\epsilon(i)}(u')$, by Lemma~\ref{lem:compphi}. 
\end{proof}



\subsection{The Enumerative Coding Point of View}\label{sec:enuDes}

In this subsection we want to use the idea of enumerative coding to derive a message encoding and a retrieval map for Desarguesian spread codes.

The method of enumerative coding for block codes $C\subseteq \F_q^n$, as presented by Cover in \cite{co73}, is as follows: Denote by $\mathrm{enu}_C(u_1,\dots,u_j)$ the number of elements of $C$, for which the first $j$ coordinates are given by $(u_1,\dots,u_j)$. Moreover, fix an order $<$ on $\F_q$. Then the lexicographic index of $u=(u_1,\dots,u_n)\in \F_q^n$ is given by $\mathrm{ind}_C(u)=\sum_{j=1}^n \sum_{y<u_j} \mathrm{enu}_C(u_1,\dots,u_{j-1},y) $. This indexing function is then a message retrieval map. 

This method of enumerative coding was adapted to the whole Grassmannian space in \cite{si11}. In their setting, the enumerating function counts all vector spaces in $\G$, whose reduced row echelon forms fulfill certain conditions. 

For our purpose of deriving a message encoder for Desarguesian spread codes in $\G$, we use the idea of \cite{co73} on $\mathcal G_{q^k}(1,m)$. I.e.,  we define the enumerating function 
 $\mathrm{enu}_m(u_1,\dots,u_j)$ to count all elements of $\mathcal G_{q^k}(1,m)$, whose normalized representations have $(u_1,\dots, u_j)$ as their first $j$ entries. Furthermore we need to fix a bijection between $\F_{q^k}$ and the set of integers $\{0,\dots,q^k-1\}$, where we choose the map $\varphi$, as defined in Section \ref{sec:preliminaries}, as this bijection. This map induces an order $<$ on  $\F_{q^k}$.  
The indexing function for enumerative coding on $\mathcal G_{q^k}(1,m)$ is now given by
\begin{align*}
\mathrm{ind}:\quad  \mathcal G_{q^k}(1,m) &\longrightarrow \{0,\dots,(q^n-1)/(q^k-1) -1\}\\
\rs(u_1,\dots,u_m) &\longmapsto \sum_{j=1}^m \sum_{y<u_j} \mathrm{enu}_m(u_1,\dots,u_{j-1},y)   
\end{align*}
where $(u_1,\dots,u_m)$ is the normalized basis vector of the preimage. 
This function can easily be extended to a message retrieval map for Desarguesian spread codes, as shown in the following.

\begin{thm}\label{thm:enuenc}
The map $\mathrm{des}\circ \mathrm{ind}^{-1}$ is an encoding map for the Desarguesian spread code $\C \subseteq \G$ from Construction I. Its inverse  $ \mathrm{ind}\circ \mathrm{des}^{-1}$ is the corresponding message retrieval map.
\end{thm}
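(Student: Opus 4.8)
The plan is to show that $\mathrm{ind}$ is a bijection from $\mathcal{G}_{q^k}(1,m)$ onto $\{0,\dots,(q^n-1)/(q^k-1)-1\}$, and then invoke Theorem~\ref{thm:enc2} (or rather the fact that $\mathrm{des}$ is a bijection onto the Desarguesian spread code $\C$) to conclude that $\mathrm{des}\circ\mathrm{ind}^{-1}$ is an encoding map for $\C$ and that its inverse $\mathrm{ind}\circ\mathrm{des}^{-1}$ is the message retrieval map. The second half of this, once $\mathrm{ind}$ is known to be a bijection, is immediate: a composition of bijections onto the appropriate sets is again such a bijection, and $\C$ is exactly the image of $\mathrm{des}$ by Construction~I, so $\mathrm{des}\circ\mathrm{ind}^{-1}$ has domain $\{0,\dots,|\C|-1\}$ and image $\C$, which is what an encoding map requires.

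So the real content is the bijectivity of $\mathrm{ind}$. First I would note that the codomain has cardinality $(q^n-1)/(q^k-1)=|\mathcal{G}_{q^k}(1,m)|$, so it suffices to prove injectivity. For injectivity I would argue exactly as in the classical enumerative-coding argument of Cover \cite{co73}, transported to the setting of normalized representatives of points of $\mathcal{G}_{q^k}(1,m)$. Concretely: equip $\F_{q^k}$ with the total order $<$ induced by $\varphi$, and order the normalized representatives $(u_1,\dots,u_m)$ lexicographically with respect to this order on each coordinate. The key observation is that $\mathrm{ind}(\rs(u_1,\dots,u_m))$ counts precisely the number of normalized representatives $(w_1,\dots,w_m)$ that are lexicographically strictly smaller than $(u_1,\dots,u_m)$: the inner term $\sum_{y<u_j}\mathrm{enu}_m(u_1,\dots,u_{j-1},y)$ counts those normalized vectors agreeing with $u$ in the first $j-1$ coordinates and having a strictly smaller $j$-th coordinate, and summing over $j$ partitions the set of all lexicographically smaller normalized vectors according to the first coordinate in which they differ from $u$. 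Hence $\mathrm{ind}$ is the rank function of the lexicographic total order on the finite set of normalized representatives, so it is strictly monotone, therefore injective, and being an injection between two finite sets of equal cardinality, it is a bijection whose image is exactly $\{0,\dots,(q^n-1)/(q^k-1)-1\}$.

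The main obstacle, and the point deserving the most care, is verifying that $\sum_{y<u_j}\mathrm{enu}_m(u_1,\dots,u_{j-1},y)$ really does count what I claimed when $(u_1,\dots,u_m)$ is a \emph{normalized} representative rather than an arbitrary vector. One must check two things: that the normalized representatives of $\mathcal{G}_{q^k}(1,m)$ are a well-defined transversal (each line has a unique representative whose first nonzero entry is $1$, as noted after Construction~I), and that $\mathrm{enu}_m(u_1,\dots,u_{j-1},y)$ correctly counts normalized vectors with that prefix — in particular the leading-$1$ normalization constraint interacts with the prefix, since if $u_1=\dots=u_{j-1}=0$ and $y=0$ there is no normalized vector with that prefix unless a later coordinate is forced, etc. Once one checks that $\mathrm{enu}_m$ counts within the set of legitimate normalized representatives, the disjointness of the sets indexed by different $j$ (two normalized vectors lexicographically smaller than $u$ have a well-defined first index of disagreement with $u$) and their union being all strictly-smaller normalized vectors gives the rank-function identity, and the proof closes. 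This is a routine but slightly fiddly bookkeeping argument, and it is essentially the only step where something could go wrong.

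\begin{proof}
Since $|\mathcal{G}_{q^k}(1,m)| = (q^{k(m-1)}+\dots+1) = (q^n-1)/(q^k-1)$ equals the cardinality of $\{0,\dots,(q^n-1)/(q^k-1)-1\}$, it suffices to show that $\mathrm{ind}$ is injective. Order $\F_{q^k}$ by the total order $<$ induced by $\varphi$ and order the normalized representatives of elements of $\mathcal{G}_{q^k}(1,m)$ lexicographically coordinate by coordinate. For a normalized representative $(u_1,\dots,u_m)$, the term $\sum_{y<u_j}\mathrm{enu}_m(u_1,\dots,u_{j-1},y)$ counts exactly those normalized representatives $(w_1,\dots,w_m)$ with $w_1=u_1,\dots,w_{j-1}=u_{j-1}$ and $w_j<u_j$. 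As $j$ ranges over $1,\dots,m$, these sets are pairwise disjoint, and their union is the set of all normalized representatives that are lexicographically strictly smaller than $(u_1,\dots,u_m)$, since any such representative disagrees with $(u_1,\dots,u_m)$ in a first coordinate, say the $j$-th, where it must be strictly smaller. Hence $\mathrm{ind}(\rs(u_1,\dots,u_m))$ is the number of elements of $\mathcal{G}_{q^k}(1,m)$ lexicographically strictly below $\rs(u_1,\dots,u_m)$; that is, $\mathrm{ind}$ is the rank function of a total order on a finite set, so it is strictly monotone and in particular injective. Being an injection between finite sets of equal size, $\mathrm{ind}$ is a bijection onto $\{0,\dots,(q^n-1)/(q^k-1)-1\}$.

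By Construction~I the map $\mathrm{des}:\mathcal{G}_{q^k}(1,m)\to\G$ is a bijection onto the Desarguesian spread code $\C\subseteq\G$. Therefore $\mathrm{des}\circ\mathrm{ind}^{-1}$ is a bijection from $\{0,\dots,|\C|-1\}$ onto $\C$, i.e., an injective map with image $\C$, hence an encoding map for $\C$. Its inverse is $\mathrm{ind}\circ\mathrm{des}^{-1}:\C\to\{0,\dots,|\C|-1\}$, which is by definition the corresponding message retrieval map.
\end{proof}
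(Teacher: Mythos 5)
Your proposal is correct and follows essentially the same route as the paper: both establish bijectivity of $\mathrm{ind}$ via Cover's lexicographic argument (your explicit identification of $\mathrm{ind}$ as the rank function of the lexicographic order on normalized representatives is just a slightly more detailed rendering of the paper's monotonicity-plus-extremes argument, together with the cardinality count $|\mathcal{G}_{q^k}(1,m)|=(q^n-1)/(q^k-1)$), and then compose with the bijection $\mathrm{des}$ onto the Desarguesian spread from Construction I. No gaps.
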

\begin{proof}
Since the Desarguesian spread code in $\G$ from Construction I is isomorphic to $\mathcal G_{q^k}(1,m)$, we simply need to show that $\mathrm{ind}$ is a bijection. This can be done in analogy to Cover's original proof in \cite{co73}: If $y_1 < y_2$, 
 then $\mathrm{ind}(u_1,\dots,u_{j-1},y_1, *,\dots,*) < \mathrm{ind}(u_1,\dots,u_{j-1},y_2, *,\dots,*)$ for any $y_{1},y_{2}\in \F_{q^{k}}$. Thus $\mathrm{ind}$ is injective. We now show that the image is $\{0,\dots,(q^n-1)/(q^k-1) -1\}$. The element in the preimage of lowest index is $\mathrm{ind}(0,0,\dots,0,1)$, which has index $0$. Let $y_{\max}$ be the largest element of $\F_{q^{k}}$ with respect to $<$. Then the preimage with the highest index is $(1,y_{\max},\dots,,y_{\max})$, for which $\mathrm{ind}$ counts all other elements of $\mathcal G_{q^k}(1,m)$. Hence $\mathrm{ind}$ takes on every value in $\{0,\dots,(q^n-1)/(q^k-1) -1\}$. I.e.,  $\mathrm{ind}$ is bijective, and since $\mathrm{des}: \mathcal G_{q^k}(1,m) \rightarrow \G$ is an injective map, whose image is the spread code, the statement follows.
\end{proof}

It remains to investigate how the map $\mathrm{enu}_m$ can be efficiently computed.

\begin{lem}\label{lem:enu}
Let $j\leq m$ and  $(u_1,\dots, u_j)\in \F_{q^k}^j$ such that the first non-zero entry, if existent, is equal to $1$. 
Then
$$\mathrm{enu}_m(u_1,\dots,u_j)= \left\{ \begin{array}{ll} 
\frac{q^{k(m-j)}-1}{q^k-1} & \textnormal{ if } u_1=\dots=u_{j}=0 \\ 
q^{k(m-j)}& \textnormal{ else }  \end{array} \right.  .$$
\end{lem}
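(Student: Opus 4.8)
The plan is to count, for a fixed normalized prefix $(u_1,\dots,u_j)$, how many normalized vectors in $\F_{q^k}^m$ extend it, splitting into the two cases of the statement. Recall that ``normalized'' means the first non-zero coordinate equals $1$, so $\mathcal{G}_{q^k}(1,m)$ is in bijection with the set of such normalized vectors. The prefix is assumed to already be in normalized form (its first non-zero entry, if any, is $1$), so the enumeration reduces to counting admissible completions $(u_{j+1},\dots,u_m)\in\F_{q^k}^{m-j}$.

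First I would treat the case $u_1=\dots=u_j=0$. Here the prefix contains no pivot, so the first non-zero entry of the whole vector — which must exist, since the zero vector is not in $\mathcal{G}_{q^k}(1,m)$ — lies among the last $m-j$ coordinates, and by normalization it must be $1$. Thus the number of completions equals the number of normalized non-zero vectors in $\F_{q^k}^{m-j}$, which is exactly $|\mathcal{G}_{q^k}(1,m-j)| = (q^{k(m-j)}-1)/(q^k-1)$, giving the first branch. Second, in the case where some $u_i\neq 0$ (equivalently the prefix already contains the pivot $1$), the normalization condition is already satisfied by the prefix and imposes no constraint on the tail; hence every one of the $q^{k(m-j)}$ vectors in $\F_{q^k}^{m-j}$ is an admissible completion, giving the second branch. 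One should also note the boundary case $j=m$: the first branch would require $u_1=\dots=u_m=0$, impossible for an element of $\mathcal{G}_{q^k}(1,m)$, so only the second branch applies and yields $q^0=1$, consistent with a single fully-specified normalized vector.

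I do not expect a serious obstacle here; the only subtlety is being careful about what ``normalized'' forces in each case, in particular that in the all-zero-prefix case the pivot is \emph{guaranteed} to appear later (because the zero vector is excluded) and is \emph{forced} to be $1$, whereas in the other case the pivot is already used up and the tail is completely free. Writing these two observations cleanly, together with the count $(q^{k(m-j)}-1)/(q^k-1)$ for normalized non-zero vectors in dimension $m-j$, completes the proof.
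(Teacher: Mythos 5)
Your proof is correct and follows essentially the same route as the paper's: split on whether the prefix is all-zero, identify the completions with $\mathcal{G}_{q^k}(1,m-j)$ in the first case and with all of $\F_{q^k}^{m-j}$ in the second. Your extra remark on the boundary case $j=m$ is a harmless addition; nothing further is needed.
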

\begin{proof}
If  $(u_1,\dots, u_j)$ is all-zero, then the remaining $m-j$ entries of any completion in $\mathcal G_{q^k} (1,m)$ need to be elements of $\mathcal G_{q^k}(1,m-j)$. The number of such elements is exactly $(q^{k(m-j)}-1)/(q^k-1)$. On the other hand, if $(u_1,\dots, u_j)$ is not all-zero, then any completion is already normalized (since we assume that  $(u_1,\dots, u_j)$ itself is normalized) and the remaining entries can be any element of $\F_{q^k}$. This implies the formula.
\end{proof}

We can now simplify the indexing function as follows.

\begin{cor}\label{cor10}
Let $\rs(u_1,\dots,u_m) \in \mathcal{G}_{q^k}(1,m)$. Define $j_1:=\min\{j\mid u_j\neq 0\}$ as the coordinate of the first non-zero entry of $(u_1,\dots,u_m)$. Then
\begin{align*} 
\mathrm{ind}(\rs(u_1,\dots,u_m)) 
&= \frac{q^{k(m-j_1)}-1}{q^k-1}  +  \sum_{j=j_1+1}^m \varphi(u_j) q^{k(m-j)} \\
&= \sum_{j=0}^{m-j_1-1} \left( \varphi(u_{m-j}) +1\right) q^{kj} .
\end{align*}
\end{cor}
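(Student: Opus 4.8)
The plan is to start from the definition of $\mathrm{ind}$ given just before Theorem~\ref{thm:enuenc} and plug in Lemma~\ref{lem:enu}. Recall
$$\mathrm{ind}(\rs(u_1,\dots,u_m)) = \sum_{j=1}^m \sum_{y<u_j} \mathrm{enu}_m(u_1,\dots,u_{j-1},y),$$
where $(u_1,\dots,u_m)$ is the normalized representative, so that $u_1=\dots=u_{j_1-1}=0$ and $u_{j_1}=1$. I would split the outer sum into three ranges: $j<j_1$, $j=j_1$, and $j>j_1$.

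For $j<j_1$ we have $u_j=0$, which is the smallest element of $\F_{q^k}$ under the order $<$ induced by $\varphi$ (since $\varphi(0)=0$); hence there is no $y<u_j$ and these terms vanish. For $j=j_1$ we have $u_{j_1}=1$, and the only $y$ with $y<1$ is $y=0$ (again because $\varphi(0)=0$ and $\varphi(1)=1$ are the two smallest values, and $1$ is the second smallest); the single contributing term is $\mathrm{enu}_m(0,\dots,0,0)$ with $j_1$ zeros, which by the first case of Lemma~\ref{lem:enu} equals $(q^{k(m-j_1)}-1)/(q^k-1)$. For $j>j_1$, the prefix $(u_1,\dots,u_{j-1},y)$ already contains the nonzero entry $u_{j_1}=1$, so it is not all-zero regardless of $y$, and Lemma~\ref{lem:enu} gives $\mathrm{enu}_m(u_1,\dots,u_{j-1},y)=q^{k(m-j)}$ for every such $y$; summing over the $\varphi(u_j)$ choices of $y<u_j$ yields $\varphi(u_j)\,q^{k(m-j)}$. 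Adding the three contributions gives the first displayed equality of the corollary:
$$\mathrm{ind}(\rs(u_1,\dots,u_m)) = \frac{q^{k(m-j_1)}-1}{q^k-1} + \sum_{j=j_1+1}^m \varphi(u_j)\,q^{k(m-j)}.$$

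For the second equality I would reindex the sum with $j\mapsto m-j$, so $\sum_{j=j_1+1}^m \varphi(u_j)q^{k(m-j)} = \sum_{j=0}^{m-j_1-1}\varphi(u_{m-j})q^{kj}$, and expand the geometric term as $\frac{q^{k(m-j_1)}-1}{q^k-1} = \sum_{j=0}^{m-j_1-1} q^{kj}$; combining the two sums termwise gives $\sum_{j=0}^{m-j_1-1}(\varphi(u_{m-j})+1)q^{kj}$. There is no real obstacle here — the only point requiring a little care is the handedness of the order argument for the $j=j_1$ term (checking that $1$ has exactly one predecessor under $<$, namely $0$), which follows immediately from $\varphi(0)=0$, $\varphi(1)=1$ noted in Section~\ref{sec:preliminaries}; everything else is bookkeeping with the geometric series.
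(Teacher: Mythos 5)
Your proof is correct and follows essentially the same route as the paper: split the defining sum at $j_1$, observe that only $y=0$ contributes at $j=j_1$ (giving the all-zero enumeration term from Lemma~\ref{lem:enu}) while the terms for $j<j_1$ vanish and those for $j>j_1$ give $\varphi(u_j)q^{k(m-j)}$, then reindex and absorb the geometric series. No gaps; the order argument at $j=j_1$ is handled with the same $\varphi(0)=0$, $\varphi(1)=1$ observation the paper relies on.
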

\begin{proof}
Since we assume that all vectors are normalized, we have $u_{j_1}=1$ and $u_1=u_2= \dots = u_{j_1-1}=0$.
We can use Lemma \ref{lem:enu} and rewrite
\begin{align*}
 \sum_{j=1}^m \sum_{y<u_j} \mathrm{enu}_m(u_1,\dots,u_{j-1},y) &=  \sum_{j=1}^{j_1 } \sum_{y<u_j} \mathrm{enu}_m(u_1,\dots,u_{j-1},y) +  \sum_{j=j_1+1}^m \sum_{y<u_j} \mathrm{enu}_m(u_1,\dots,u_{j-1},y) \\
 &= \mathrm{enu}_m(\underbrace{0,0, \dots, 0,0}_{j_1}) +  \sum_{j=j_1+1}^m \sum_{y<u_j} \mathrm{enu}_m(u_1,\dots,u_{j-1},y)\\
 &= \frac{q^{k(m-j_1)}-1}{q^k-1}  +  \sum_{j=j_1+1}^m \varphi(u_j) q^{k(m-j)} \\
 &=  \sum_{j=0}^{m-j_1-1} q^{kj} +  \sum_{j=0}^{m-j_1-1}  \varphi(u_{m-j}) q^{kj} \\
&= \sum_{j=0}^{m-j_1-1} \left( \varphi(u_{m-j}) +1\right) q^{kj} .
\end{align*}
\end{proof}

\begin{ex}
Let $q=k=2$ and $m=3$. Moreover, let $\alpha\in \F_{2^2}$ be a primitive element and let the elements of $\F_{2^2}$ be identified via the map $\varphi$ with $0\mapsto 0, 1\mapsto 1, \alpha\mapsto 2, \alpha^2\mapsto 3$. 
Then
\begin{align*}
\mathrm{ind}( \rs(0, 0, 1)  ) = 0+0 =0    &\hspace{1cm} \mathrm{ind}( \rs(1, 0, 1)  ) = 5+1 =6\\ 
\mathrm{ind}( \rs(0, 1, \alpha)  ) = 1+2=3    &\hspace{1cm} \mathrm{ind}(\rs( 1, \alpha^2, 1)  ) = 5+12+1= 18   .
\end{align*}
Note that the two results on the right differ from those seen in Example \ref{ex6}.
\end{ex}

The following theorem and corollary show that the enumerative encoder is equal to the previously described encoder $\enc_1$, if we replace $\phi_{k,\epsilon(i)}$ with $\bar\phi_{k,\epsilon(i)}$ in the definition of $f$, where
\begin{align*}
\bar\phi_{k,\epsilon(i)} :  \F_{q^k}^{\epsilon(i)} & \longrightarrow  \{0,1, \dots, q^{k\epsilon(i) -1}\}  \\
(u_1, u_2, \dots, u_{\epsilon(i)}) & \longmapsto \sum_{j=0}^{\epsilon(i)-1} \varphi(u_{m-j})q^{kj} 
\end{align*}
is another bijection from $\F_{q^k}^{\epsilon(i)} $ to $\{0,1, \dots, q^{k\epsilon(i) -1}\}$. 
We call the new map, i.e., $f$ after replacing $\phi_{k,\epsilon(i)}$ with $\bar\phi_{k,\epsilon(i)}$, $\bar f$.

\begin{thm}
The map $\bar f^{-1}: \mathcal{G}_{q^k}(1,m) \rightarrow \{0,1,\dots, \frac{q^n-1}{q^k-1}\}$ is equal to the map $\mathrm{ind}: \mathcal{G}_{q^k}(1,m) \rightarrow \{0,1,\dots, \frac{q^n-1}{q^k-1}\}$. \end{thm}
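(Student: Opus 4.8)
The plan is to show that $\bar f$ and $\mathrm{ind}^{-1}$ agree as maps, equivalently that $\bar f^{-1}(\rs(u_1,\dots,u_m)) = \mathrm{ind}(\rs(u_1,\dots,u_m))$ for every normalized representative $(u_1,\dots,u_m)$. Since both sides are computed by splitting on the position $j_1$ of the first non-zero entry of the normalized vector, I would carry out the comparison in exactly that case split. First I would recall, from Corollary~\ref{cor10}, that
\[
\mathrm{ind}(\rs(u_1,\dots,u_m)) = \sum_{j=0}^{m-j_1-1} \bigl(\varphi(u_{m-j})+1\bigr) q^{kj}.
\]
Then I would compute $\bar f^{-1}$ of the same subspace directly from the definition of $\bar f$.

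The key observation is that the position $j_1$ of the first non-zero entry determines $\epsilon$: in the definition of $f$ (hence $\bar f$), the normalized representative of a subspace with first non-zero entry in coordinate $j_1$ has the shape $(\underbrace{0,\dots,0}_{j_1-1},1,u')$ with $u'\in\F_{q^k}^{m-j_1}$, so $\epsilon = m-j_1$ and the integer preimage is $i = \bar\phi_{k,\epsilon}(u') + \sum_{j=0}^{\epsilon-1} q^{jk}$, where by the displayed definition of $\bar\phi_{k,\epsilon}$ (with $\epsilon = m-j_1$, and noting the indices $m-j$ for $j=0,\dots,\epsilon-1$ range over exactly the last $\epsilon$ coordinates $u_{j_1+1},\dots,u_m$) we get $\bar\phi_{k,\epsilon}(u') = \sum_{j=0}^{m-j_1-1}\varphi(u_{m-j})q^{kj}$. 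Combining with $\sum_{j=0}^{\epsilon-1} q^{jk} = \sum_{j=0}^{m-j_1-1} q^{kj}$ yields
\[
\bar f^{-1}(\rs(u_1,\dots,u_m)) = \sum_{j=0}^{m-j_1-1}\varphi(u_{m-j})q^{kj} + \sum_{j=0}^{m-j_1-1} q^{kj} = \sum_{j=0}^{m-j_1-1}\bigl(\varphi(u_{m-j})+1\bigr)q^{kj},
\]
which is precisely the expression from Corollary~\ref{cor10}. I would also handle the degenerate subspace $\rs(0,\dots,0,1)$ separately (or note it is the $j_1 = m$ case), where both maps give $0$.

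The only genuine subtlety — and the step I would be most careful with — is bookkeeping of the index ranges: making sure that "the rightmost $\epsilon(i)$ coordinates of $u$" in the definition of $f$ line up coordinate-by-coordinate with the summation index $m-j$ appearing in $\bar\phi_{k,\epsilon(i)}$, and that $\epsilon = m - j_1$ exactly (the remark after Theorem~\ref{thm:enc2}, used in Algorithm~\ref{alg:Des}, already records $\epsilon(i) = m$ minus the coordinate of the first non-zero entry, so I would cite that). Everything else is the routine substitution above. Since $\bar f$ and $\mathrm{ind}^{-1}$ are both bijections from $\{0,\dots,\frac{q^n-1}{q^k-1}-1\}$ to $\mathcal{G}_{q^k}(1,m)$ and their inverses agree pointwise, the maps are equal, which is the claim; as an immediate corollary $\mathrm{des}\circ\bar f = \mathrm{des}\circ\mathrm{ind}^{-1}$, i.e.\ the enumerative encoder coincides with the twisted ad hoc encoder $\mathrm{enc}_1$.
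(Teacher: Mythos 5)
Your proposal is correct and follows essentially the same route as the paper's proof: both rest on the formula of Corollary \ref{cor10}, the identification $\epsilon(i)=m-j_1$, and the geometric-sum identity $\frac{q^{k(m-j_1)}-1}{q^k-1}=\sum_{j=0}^{\epsilon(i)-1}q^{kj}$, with your computation of $\bar f^{-1}$ from the definition of $\bar f$ being the same manipulation the paper performs in the reverse direction on $\mathrm{ind}$. The index bookkeeping you flag (that $\bar\phi_{k,\epsilon(i)}$ acts on the last $\epsilon(i)$ coordinates $u_{j_1+1},\dots,u_m$) is exactly the point the paper's proof also relies on, so nothing is missing.
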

\begin{proof}
Let $\rs(u_1,\dots, u_m)\in \mathcal{G}_{q^k}(1,m)$ and let $(u_1,\dots, u_m)\in\F_{q^k}^m$ be its normalized representation. Furthermore let $\epsilon(i)$ be defined as in the definition of $f$, and let $j_1$ be defined as in Corollary \ref{cor10}. Then $\epsilon(i) = m-j_1$ and 
$$ \frac{q^{k(m-j_1)}-1}{q^k-1} = \frac{q^{k\epsilon(i)}-1}{q^k-1} =  \sum_{j=0}^{\epsilon(i)-1} q^{kj} .$$ 
One knows that $\varphi(u_{m-\epsilon(i)})=\varphi(u_{j_1})=1$ and $\varphi(u_{m-j})=0$ for $j>\epsilon(i)$. 
Hence, together with Corollary \ref{cor10}, we get
$$\mathrm{ind}(\rs(u_1,\dots,u_m) )
= \sum_{j=0}^{m-j_1-1} \left(\varphi(u_{m-j})+1\right) q^{kj} =  \sum_{j=0}^{\epsilon(i)-1} q^{kj}  +  \sum_{j=0}^{\epsilon(i) -1} \varphi(u_{m-j}) q^{kj}  $$
$$=   \sum_{j=0}^{\epsilon(i)-1} q^{kj}  +  \bar \phi_{k,\epsilon(i)}(u_{m-\epsilon(i)+1},\dots,u_m) 
= \bar f^{-1}(\rs(u_1,\dots,u_m) ) .$$
\end{proof}

The previous theorem straightforwardly implies the following corollary, that the two encoders (and hence also the respective message retrieval maps) for Desarguesian spread codes from Subsections \ref{ssec:adhoc} and \ref{sec:enuDes} are equal, if we replace $f$ with $\bar f$ in the definition of $\enc_1$. We denote this second encoder by $\overline{ \enc_1} := \mathrm{des} \circ \bar f$.

\begin{cor}
Let $\C \subseteq \G$ be the Desarguesian spread code from Construction I. 
\begin{itemize}
\item
The encoding map $\overline{\mathrm{enc_1}} : \{0,1,\dots, \frac{q^n-1}{q^k-1}\} \rightarrow \G$ is equal to the encoding map $\mathrm{des}\circ \mathrm{ind}^{-1} : \{0,1,\dots, \frac{q^n-1}{q^k-1}\} \rightarrow \G$. 
\item 
The message retrieval map $\overline{\mathrm{enc_1}}^{-1} : \C   \rightarrow \{0,1,\dots, \frac{q^n-1}{q^k-1}\}$ is equal to the message retrieval map $\mathrm{ind}\circ \mathrm{des}^{-1} : \C   \rightarrow \{0,1,\dots, \frac{q^n-1}{q^k-1}\}$. 
\end{itemize}
\end{cor}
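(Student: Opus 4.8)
The plan is to derive the corollary directly from the immediately preceding theorem, which already establishes the crucial equality $\bar f^{-1} = \mathrm{ind}$ as maps $\mathcal G_{q^k}(1,m) \to \{0,1,\dots,\frac{q^n-1}{q^k-1}\}$. The entire work of comparing the two encoding philosophies was done there; the corollary is merely a matter of composing this equality with the map $\mathrm{des}$ on the appropriate side and then inverting. So the proof should be short: first recall the definitions $\overline{\enc_1} := \mathrm{des}\circ\bar f$ and $\mathrm{des}\circ\mathrm{ind}^{-1}$, then observe that since $\bar f^{-1} = \mathrm{ind}$ we have $\bar f = \mathrm{ind}^{-1}$, and composing on the left with $\mathrm{des}$ yields $\overline{\enc_1} = \mathrm{des}\circ\bar f = \mathrm{des}\circ\mathrm{ind}^{-1}$, which is exactly the first bullet.

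For the second bullet, I would take inverses of both sides of the first equality. Since $\mathrm{des}$ is injective with image $\C$ (Construction I / Theorem~\ref{thm:enc2}), and $\bar f$ and $\mathrm{ind}^{-1}$ are bijections onto $\mathcal G_{q^k}(1,m)$, both $\overline{\enc_1}$ and $\mathrm{des}\circ\mathrm{ind}^{-1}$ are bijections from the integer message set onto $\C$. Two equal bijections have equal inverses, so $\overline{\enc_1}^{-1} = (\mathrm{des}\circ\mathrm{ind}^{-1})^{-1} = \mathrm{ind}\circ\mathrm{des}^{-1}$, which is the second bullet. One small point to be careful about: the domains must genuinely match — in particular $\mathrm{des}^{-1}$ is only defined on $\C$, and the composition $\mathrm{ind}\circ\mathrm{des}^{-1}$ makes sense precisely because $\mathrm{des}^{-1}$ lands in $\mathcal G_{q^k}(1,m)$, the domain of $\mathrm{ind}$.

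I do not anticipate a genuine obstacle here; the only thing to watch is bookkeeping of domains and codomains and the direction of composition (left versus right composition with $\mathrm{des}$ versus $\mathrm{des}^{-1}$), since $\mathrm{des}$ acts on the Grassmannian side, not on the integer side. The "hard part," such as it is, was entirely absorbed into the preceding theorem via Corollary~\ref{cor10} and the explicit bijection $\bar\phi_{k,\epsilon(i)}$; the corollary itself is a one-line formal consequence, which is presumably why the authors phrase it as "straightforwardly implies."

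\begin{proof}
By the previous theorem, $\bar f^{-1} = \mathrm{ind}$ as maps from $\mathcal G_{q^k}(1,m)$ to $\{0,1,\dots,\frac{q^n-1}{q^k-1}\}$, and hence $\bar f = \mathrm{ind}^{-1}$. Composing on the left with $\mathrm{des}$ gives
$$\overline{\enc_1} = \mathrm{des}\circ \bar f = \mathrm{des}\circ \mathrm{ind}^{-1} ,$$
which is the first claim. Both sides are bijections onto $\C$: indeed $\mathrm{des}$ is injective with image $\C$ by Construction~I, while $\bar f$ and $\mathrm{ind}^{-1}$ are bijections onto $\mathcal G_{q^k}(1,m)$. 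Taking inverses of the displayed equality therefore yields
$$\overline{\enc_1}^{-1} = \left(\mathrm{des}\circ \mathrm{ind}^{-1}\right)^{-1} = \mathrm{ind}\circ \mathrm{des}^{-1} ,$$
where $\mathrm{des}^{-1}: \C \rightarrow \mathcal G_{q^k}(1,m)$ and the composition is well-defined since the image of $\mathrm{des}^{-1}$ is the domain of $\mathrm{ind}$. This is the second claim.
\end{proof}
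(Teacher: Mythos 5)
Your proposal is correct and matches the paper's intent: the paper gives no separate proof, stating only that the preceding theorem ($\bar f^{-1}=\mathrm{ind}$) ``straightforwardly implies'' the corollary, and your argument simply spells out that implication by composing with $\mathrm{des}$ and taking inverses. The bookkeeping of domains and bijectivity you include is accurate and consistent with Theorem~\ref{thm:enc2} and Construction~I.
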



Even though the two encoders $\overline{\mathrm{enc_1}}$ and $\mathrm{des}\circ \mathrm{ind}^{-1}$ are equal as a map, they give rise to different ways of computing the encoding and the corresponding message retrieval map. Algorithm \ref{alg:Des2} describes an alternative way of computing the message retrieval for the Desarguesian spread code from Construction I, based on the idea of enumerative coding. The correctness of it follows from Theorem \ref{thm:enuenc} and Corollary \ref{cor10}. We will not give an alternative algorithm for the encoder, since computing the inverse of the indexing function, $\mathrm{ind}^{-1}$, does not give rise to an easier algorithm than Algorithm \ref{alg:DesEnc}.

\begin{algorithm}[ht]
\begin{algorithmic}
\REQUIRE{A spread codeword $\Uvs\in \C$.}
\STATE{Choose a non-zero vector $v\in \Uvs$.} 
\STATE{Compute  $u=(\psi_k(v_1,v_2,\dots,v_k),\psi_k(v_{k+1},v_{k+2},\dots,v_{2k}), \dots, \psi_k(v_{n-k+1},v_{n-k+2},\dots,v_n))$.}\STATE{Normalize $u$.}
\STATE{Set $j_1:= \min\{j\mid u_j\neq 0\}$.}
\STATE{Compute $i=\sum_{j=0}^{m-j_1-1} \left( \varphi(u_{m-j}) +1\right) q^{kj}$.}
\RETURN $i$
\end{algorithmic}
\caption{Message retrieval based on enumerative coding for the Desarguesian spread code  $\C \subseteq \G$ from Construction I.}
\label{alg:Des2}
\end{algorithm}

\begin{thm}
The computational complexity of Algorithm \ref{alg:Des2} is in $\mathcal{O}_q(kn)$.
\end{thm}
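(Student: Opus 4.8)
The plan is to track the cost of Algorithm~\ref{alg:Des2} line by line, reusing the complexity bounds already established for the building blocks. The algorithm differs from Algorithm~\ref{alg:Des} only in its last two steps, so the first three steps -- choosing a non-zero vector $v\in\Uvs$, assembling $u=(\psi_k(v_1,\dots,v_k),\dots,\psi_k(v_{n-k+1},\dots,v_n))\in\F_{q^k}^m$, and normalizing $u$ -- have exactly the same cost as in the proof of Theorem~\ref{thm:DesComp}. First I would recall that forming $u$ from $v$ is $m$ applications of $\psi_k$, hence $\mathcal{O}_q(mk)=\mathcal{O}_q(n)$ by Lemma~\ref{lem:qadic}, and that normalizing $u$ requires at most $m$ divisions in $\F_{q^k}$, each costing $\mathcal{O}_q(k^2)$ by Lemma~\ref{lem:compbasic}, for a total of $\mathcal{O}_q(mk^2)=\mathcal{O}_q(kn)$. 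This already accounts for the claimed order, so what remains is to check that the two new steps do not exceed it.

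Next I would bound the step computing $j_1=\min\{j\mid u_j\neq 0\}$: scanning the $m$ coordinates of $u$ (each an element of $\F_{q^k}$, i.e.\ $k$ symbols over $\F_q$) to find the first non-zero one costs $\mathcal{O}_q(mk)=\mathcal{O}_q(n)$, which is negligible compared to $\mathcal{O}_q(kn)$. Finally I would bound the Horner-style evaluation of $i=\sum_{j=0}^{m-j_1-1}(\varphi(u_{m-j})+1)q^{kj}$: there are at most $m$ terms; each requires one evaluation of $\varphi$ on an element of $\F_{q^k}$ (which by the definition of $\varphi$ in Section~\ref{sec:preliminaries} is a coordinate-wise transfer, $\mathcal{O}_q(k)$ operations), an increment by $1$, and -- working in the $q$-adic expansion in $\F_q^n$ as agreed in Section~\ref{sec:comprelim} -- a shift by $k$ positions and an addition of two integers of at most $n$ $q$-ary digits, i.e.\ $\mathcal{O}_q(n)$ per term. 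Summing over the $m$ terms gives $\mathcal{O}_q(mn)$ in the crudest count, but since the partial sums at step $j$ have only $\mathcal{O}(kj)$ nonzero digits the additions telescope to a total of $\mathcal{O}_q(n)$; in any case this is dominated by the $\mathcal{O}_q(kn)$ cost of the normalization. Collecting the three contributions, the overall complexity of Algorithm~\ref{alg:Des2} is $\mathcal{O}_q(kn)$.

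The only mildly delicate point -- and the one I would state carefully rather than wave through -- is the bookkeeping for the final summation: one must make clear that the integers involved have $\mathcal{O}(n)$ $q$-ary digits (so that each elementary operation on them is $\mathcal{O}_q(n)$, not more), and that the number of such operations is $\mathcal{O}(m)$, so the summation stays within $\mathcal{O}_q(mn)$ and hence within $\mathcal{O}_q(kn)$ when $m\le n/k$ — indeed $mn = n^2/k$, but $kn$ is the bound claimed because the normalization step already dominates everything else. I expect no real obstacle here; the proof is essentially a restatement of the proof of Theorem~\ref{thm:DesComp} with the last two steps replaced, and the conclusion is that, just like Algorithm~\ref{alg:Des}, Algorithm~\ref{alg:Des2} runs in $\mathcal{O}_q(kn)$.
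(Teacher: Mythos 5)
Your proof follows essentially the same route as the paper: the first steps are costed exactly as in Theorem~\ref{thm:DesComp} (forming $u$ in $\mathcal{O}_q(n)$, normalizing in $\mathcal{O}_q(kn)$), and the final summation is shown to be cheap because, in the $q$-adic representation, each term only touches a block of $k$ digits. One remark in your write-up is wrong, though, and you should delete it rather than lean on it: the fallback claim that the crude bound $\mathcal{O}_q(mn)$ is ``in any case dominated by'' the $\mathcal{O}_q(kn)$ normalization cost is false in general, since $mn=n^2/k>kn$ whenever $n>k^2$, so the crude term-by-term count does not suffice. What actually carries the argument is precisely your refinement that the terms $(\varphi(u_{m-j})+1)q^{kj}$ occupy essentially disjoint digit windows, so the whole sum costs $\mathcal{O}_q(n)$; this is the paper's observation, phrased there as $\varphi(u_j)=\psi_k^{-1}(u_j)$ in the $q$-adic representation, so that computing $i$ amounts to at most $m$ coordinate transfers of cost $\mathcal{O}_q(k)$ each plus $n$ digit additions. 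With the incorrect domination remark removed and the disjoint-window argument stated as the actual justification, your proof matches the paper's and is correct.
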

\begin{proof}
By Lemma \ref{lem:qadic}, getting $u\in\F_{q^k}^m$ from $v\in \F_{q}^n$ needs $\mathcal{O}_q(mk)=\mathcal{O}_q(n)$ operations. Analogously to the normalization in the proof of Theorem \ref{thm1}, the normalization of $u$ requires $\mathcal O_q(k^2 m)=\mathcal O_q(kn)$ operations. Since we represent $i$ in its $q$-adic expansion, we have $\varphi(u_i)=\psi^{-1}_k(u_i)$ for $i=1,2,\dots,m$. From Corollary \ref{cor10} we know that the ${q}$-adic expansion of $i$ is given by
\begin{align*}
& (\overbrace{\underbrace{10\dots 0}_k \underbrace{10 \dots 0}_ k \dots   \underbrace{10 \dots 0}_ k }^{(m-j_1)k}  \underbrace{00 \dots 0}_{kj_1})
+  (  \psi^{-1}_k(u_{m}) \, \psi^{-1}_k(u_{m-1}) \, \dots \, \psi^{-1}_k(u_{j_1+1}) \, \underbrace{00 \dots 0}_{kj_1} )  
\end{align*} 
Since we need at most $m$ computations of $ \psi^{-1}_k(u_j)$ and $n$ additions, this last step does not increase the overall complexity.
\end{proof}

Thus Algorithm \ref{alg:Des2} is an alternative to Algorithm \ref{alg:Des}, with the same complexity order.

\begin{rem}
Both, Algorithm \ref{alg:Des} and Algorithm \ref{alg:Des2}, can be improved if the input codeword is represented in row reduced echelon form. Then $v$ should not be a random element, but the first row of the input matrix. This implies that $u$ is already normalized, which improves the complexity of both message retrieval algorithms.
\end{rem}


\section{Message Encoding for Cyclic Orbit Codes}\label{sec:orbit}

Recall that an orbit code $\C\subseteq \G$ is defined as the orbit of a given $\Uvs\in \G$ under the action of a subgroup $G$ of $\GL_n$. In general it holds that $|\Cvs | \leq |G|$, and not necessarily $|\Cvs | = |G|$, i.e., some elements of $G$ might generate the same codewords. Denote by 
$$\stab_{\GL_n}(\Uvs) := \{A\in \GL_n \mid \Uvs A = \Uvs\}$$ 
the stabilizer of $\Uvs$ in $\GL_n$, and by $G/\stab_{\GL_n}(\Uvs)$ the set of all right cosets $\stab_{\GL_n}(\Uvs) A$ for $A \in \GL_n$.  
We define the map 
\begin{align*}
g  : \quad G/\stab_{\GL_n}(\Uvs)  &\longrightarrow \G \\
[A] &\longmapsto \Uvs A  .
\end{align*}
where $[A]$ denotes the coset of $A$.

\begin{thm}
The map $g$ is injective.
\end{thm}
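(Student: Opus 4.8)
The statement to prove is that the map $g: G/\stab_{\GL_n}(\Uvs) \to \G$ sending $[A] \mapsto \Uvs A$ is injective. The plan is to argue directly from the definitions of well-definedness and injectivity of a map induced on cosets by a group action, a standard orbit-stabilizer style argument. First I would check that $g$ is well-defined: if $[A] = [B]$, i.e., $\stab_{\GL_n}(\Uvs) A = \stab_{\GL_n}(\Uvs) B$, then $BA^{-1} \in \stab_{\GL_n}(\Uvs)$, so $\Uvs (BA^{-1}) = \Uvs$, and multiplying on the right by $A$ gives $\Uvs B = \Uvs A$; hence $g([A])$ does not depend on the chosen representative.

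Next I would establish injectivity. Suppose $g([A]) = g([B])$, that is, $\Uvs A = \Uvs B$ for some $A, B \in \GL_n$ (with the understanding that $G$ is a group, so inverses lie in $G$; note the excerpt's phrasing treats $G$ as a subgroup of $\GL_n$, so cosets and inverses make sense within $G$, or one may just work in $\GL_n$ as written). Multiplying both sides on the right by $A^{-1}$ yields $\Uvs = \Uvs B A^{-1}$, which says precisely that $BA^{-1} \in \stab_{\GL_n}(\Uvs)$. Therefore $\stab_{\GL_n}(\Uvs) B = \stab_{\GL_n}(\Uvs) (BA^{-1}) A = \stab_{\GL_n}(\Uvs) A$, i.e., $[A] = [B]$. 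This shows $g$ is injective.

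There is essentially no main obstacle here: the result is the familiar bijection between a group orbit and the coset space of the point stabilizer, specialized to the action of $\GL_n$ (or the subgroup $G$) on the Grassmannian $\G$ by $\Uvs \mapsto \Uvs A$. The only mild care needed is to be consistent about whether one works with right cosets $\stab_{\GL_n}(\Uvs) A$ and right multiplication $\Uvs A$ — the excerpt's conventions already match — and to recall that $\stab_{\GL_n}(\Uvs)$ is genuinely a subgroup (closed under products and inverses), which is immediate since $\Uvs A = \Uvs$ and $\Uvs B = \Uvs$ give $\Uvs AB = \Uvs$ and $\Uvs A^{-1} = \Uvs$. No earlier result from the excerpt is actually needed for this lemma; it is purely a group-action fact. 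One could additionally remark that surjectivity onto the orbit $\Uvs G$ is clear, so $g$ restricts to a bijection $G/\stab_{\GL_n}(\Uvs) \to \Uvs G$, which is the fact that will be used later to count codewords, but the injectivity claim as stated follows from the two short steps above.
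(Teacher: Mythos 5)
Your proof is correct and follows essentially the same orbit--stabilizer argument as the paper: from $\Uvs A=\Uvs B$ you conclude $BA^{-1}\in\stab_{\GL_n}(\Uvs)$ (the paper equivalently uses $AB^{-1}$) and hence equality of right cosets. The extra check of well-definedness is a harmless addition not present in the paper's proof.
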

\begin{proof}
Let $A,B \in G$. Assume that  $g(A)=g(B) \iff \Uvs A = \Uvs B$, then 
$$AB^{-1} \in \stab_{\GL_n}(\Uvs) ,$$
and thus $A = AB^{-1}B \in \stab_{\GL_n}(\Uvs) B$. Hence, $A$ and $B$ are in the same right cosets of $\stab_{\GL_n}(\Uvs) $, which implies the statement.
\end{proof}

For the remainder of this paper we will restrict ourselves to cyclic orbit codes, since these have simpler message encoders. Moreover, they have more useful structure than other orbit codes and are therefore  better understood from a construction and error decoding point of view.

Cyclic orbit codes are those codes that can be defined by the action of a cyclic subgroup $G$, i.e., $G=\langle P\rangle$ for some matrix $P\in \GL_n$. This notion is not to be mistaken with the definition of cyclic subspace codes from \cite{be14,et11}, which are unions of special cyclic orbit codes with different initial subspaces. For cyclic orbit codes one clearly has a bijection from $\cM =\{0,1,\dots, \ord(P)-1 \}$ to $G$, namely
\begin{align*}
 h' : \quad  \{0,1,\dots, \ord(P)-1 \} &\longrightarrow G  \\
i &\longmapsto P^i .
\end{align*}
From group theory (see e.g.\ \cite{ke99}) one knows that $|G/\stab_{\GL_n}(\Uvs)|$ is a divisor of $|G| = \ord(P)$ and that if $\ord_{\Uvs}(P):= |G/\stab_{\GL_n}(\Uvs)| < |G|$, then $\Uvs P^i = \Uvs P^{i+ \ord_{\Uvs}(P)}$. Thus it follows:

\begin{lem}
The map
\begin{align*}
 h : \quad  \{0,1,\dots, \ord_{\Uvs}(P)-1 \} &\longrightarrow G/\stab_{\GL_n}(\Uvs)  \\
i &\longmapsto [P^i] .
\end{align*}
is a bijection for any $\Uvs \in \G$.
\end{lem}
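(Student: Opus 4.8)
The plan is to show the map $h$ is well-defined, injective, and surjective, drawing on the two facts cited just before the statement from group theory: first, that $\ord_{\Uvs}(P) := |G/\stab_{\GL_n}(\Uvs)|$ divides $|G| = \ord(P)$; and second, that $\Uvs P^i = \Uvs P^{i + \ord_{\Uvs}(P)}$ whenever $\ord_{\Uvs}(P) < |G|$. I would begin by noting that $h$ is manifestly well-defined, since it simply sends an integer $i$ in the stated range to the coset $[P^i]$, and every power of $P$ lies in $G$, so $[P^i] \in G/\stab_{\GL_n}(\Uvs)$.

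For surjectivity, I would argue as follows. Since $h'$ is a bijection onto $G$ and the canonical projection $G \to G/\stab_{\GL_n}(\Uvs)$ is surjective, every coset is of the form $[P^j]$ for some $j \in \{0, \dots, \ord(P)-1\}$. If $\ord_{\Uvs}(P) = \ord(P)$ there is nothing more to do. Otherwise, applying $\Uvs P^i = \Uvs P^{i+\ord_{\Uvs}(P)}$ repeatedly — equivalently, $[P^i] = [P^{i + \ord_{\Uvs}(P)}]$ — lets me reduce the exponent $j$ modulo $\ord_{\Uvs}(P)$, so $[P^j] = [P^{j \bmod \ord_{\Uvs}(P)}]$ is in the image of $h$. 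Hence $h$ is surjective onto a set which, by the cited divisibility fact, has exactly $\ord_{\Uvs}(P)$ elements.

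For injectivity, since domain and codomain are both finite of the same cardinality $\ord_{\Uvs}(P)$, a surjective map between them is automatically bijective, so injectivity follows for free. Alternatively — and this is the more self-contained route — I would use a counting/pigeonhole argument directly: the list $[P^0], [P^1], \dots, [P^{\ord_{\Uvs}(P)-1}]$ consists of $\ord_{\Uvs}(P)$ cosets; were two of them equal, say $[P^a] = [P^b]$ with $0 \le a < b < \ord_{\Uvs}(P)$, then the periodicity relation would force the orbit $\{\Uvs P^i\}$ to have period dividing $b-a < \ord_{\Uvs}(P)$, contradicting the minimality inherent in the definition of $\ord_{\Uvs}(P)$ as the size of $G/\stab_{\GL_n}(\Uvs)$.

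The argument is essentially bookkeeping with the two quoted group-theoretic facts, so there is no serious obstacle; the only point requiring a little care is making precise the claim that $\ord_{\Uvs}(P)$ is genuinely the \emph{minimal} positive period of the sequence $i \mapsto \Uvs P^i$ — this is exactly what guarantees no collisions occur below that threshold, and it is what upgrades ``$\Uvs P^i = \Uvs P^{i + \ord_{\Uvs}(P)}$'' from a sufficient condition for equality to a characterization of it. Given the clean citation to \cite{ke99}, I expect the author simply invokes this and concludes.
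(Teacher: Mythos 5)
Your proof is correct and takes essentially the route the paper intends: the paper gives no explicit proof, presenting the lemma as an immediate consequence of the two cited group-theoretic facts (divisibility of $\ord_{\Uvs}(P)$ and the periodicity $\Uvs P^i = \Uvs P^{i+\ord_{\Uvs}(P)}$), and your argument — periodicity yields surjectivity, equal finite cardinalities of domain and codomain yield injectivity — is exactly that bookkeeping made explicit. The alternative ``minimality of the period'' justification you sketch is looser as phrased (minimality is not literally the definition of $\ord_{\Uvs}(P)$ and would need the coset-counting argument to back it up), but your primary argument does not rely on it.
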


\begin{cor}\label{cor:enc4}
The map $\enc_2 := g \circ h$ is injective and hence is an encoding map from the message set $\cM =\{0,\dots, \ord_{\Uvs}(P)-1 \}$ to the cyclic orbit code $\Cvs = \Uvs \langle P \rangle \subseteq \Gr$.
\end{cor}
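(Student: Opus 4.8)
The plan is to observe that Corollary \ref{cor:enc4} is an immediate consequence of the two facts already assembled just before it, namely that $g$ is injective (the preceding theorem) and that $h$ is a bijection (the preceding lemma). So the proof is essentially a one-line composition argument, and the only thing to be careful about is spelling out why the composition lands where it should and why its image is exactly $\Cvs$.

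First I would note that $\enc_2 = g \circ h$ is a map from $\cM = \{0,\dots,\ord_{\Uvs}(P)-1\}$ to $\G$, being the composition of $h: \cM \to G/\stab_{\GL_n}(\Uvs)$ with $g: G/\stab_{\GL_n}(\Uvs) \to \G$. Since $h$ is a bijection and $g$ is injective, the composition $g\circ h$ is injective, which is the first claim. For the image: $h$ is onto $G/\stab_{\GL_n}(\Uvs)$, so the image of $\enc_2$ equals the image of $g$, which is $\{\Uvs A \mid [A] \in G/\stab_{\GL_n}(\Uvs)\} = \{\Uvs A \mid A \in G\} = \Uvs\langle P\rangle = \Cvs$. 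Hence $\enc_2$ is an injective map from $\cM$ onto $\Cvs$, i.e.\ an encoding map for the cyclic orbit code.

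I expect there to be no real obstacle here — the corollary is purely formal once the lemma and theorem are in place. The only minor point worth making explicit is that $g$ is well-defined on cosets (which was implicitly used already in the statement of the preceding theorem, where $g([A]) = \Uvs A$): if $[A]=[B]$ then $A = SB$ for some $S \in \stab_{\GL_n}(\Uvs)$, so $\Uvs A = \Uvs S B = \Uvs B$. With that, $\enc_2$ is a genuine map and the injectivity and image statements follow as above. One could also remark that, concretely, $\enc_2(i) = \Uvs P^i$, so the encoder is just $i \mapsto \Uvs P^i$ with the domain truncated to the true size $\ord_{\Uvs}(P)$ of the orbit; this is the remark that connects the corollary back to the map $h'$ and motivates the subsequent algorithmic discussion.
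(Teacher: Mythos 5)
Your argument is correct and matches the paper's (implicit) reasoning exactly: the corollary is stated as an immediate consequence of the injectivity of $g$ and the bijectivity of $h$, with the image of $g\circ h$ being the orbit $\Uvs\langle P\rangle$. Your extra remark that $g$ is well-defined on cosets is a sensible addition but does not change the route.
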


Note that $\enc_2$ can be computed straightforwardly, with matrix multiplications. Its inverse, i.e., the message retrieval map, is based on a discrete logarithm problem (DLP), which is in general known to be a hard problem. There are many results on when the DLP is hard and when it is not; for a survey of various algorithms and their complexities see e.g.\ \cite{od85}. In the following we will investigate some special types of cyclic orbit codes with respect to the computation of $\enc_2$ and $\enc_2^{-1}$.


\subsection{Primitive Cyclic Orbit Codes}

For this subsection let $\alpha$ be a primitive element of $\F_{q^n}$, $p(x) \in \F_q[x]$ its minimal polynomial and $P$ the corresponding companion matrix. Denote by $G = \langle P \rangle $ the group generated by it. Because of the primitivity it holds that
$$ \ord(\alpha) = \ord (P) = |G| = q^n-1 .$$
We call $\C = \Uvs G $ a \emph{primitive cyclic orbit code} for any $\Uvs\in \G$. For more information on the cardinality and minimum distance of different primitive cyclic orbit codes the interested reader is referred to \cite{gl14,tr11a}.
We can now state the message encoding algorithm for primitive cyclic orbit codes. For this let $U\in\F_q^{k\times n}$ be a matrix, such that $\rs(U) =\Uvs$.

\begin{algorithm}
\begin{algorithmic}
\REQUIRE{Message $i\in \{0,1,\dots, \ord_{\Uvs}(P) -1\}$.}
\STATE{Compute $P^i$.}
\STATE{Compute $V=UP^i$.}
\RETURN{$\Vvs =\rs(V)$}
\end{algorithmic}
\caption{Message encoding for a primitive cyclic orbit code $\mathcal C = \Uvs \langle P \rangle \subseteq \G$. }
\label{alg:cocEnc}
\end{algorithm}

\begin{thm}
The computational complexity of Algorithm \ref{alg:cocEnc} is in $\mathcal O_q(n^3)$.
\end{thm}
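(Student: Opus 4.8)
The plan is to bound the cost of the two steps of Algorithm \ref{alg:cocEnc} separately and take the dominant term. The first step computes $P^i$ for some $i \le \ord_\Uvs(P) - 1 \le q^n - 1$; since $P$ is the $n \times n$ companion matrix of an irreducible (indeed primitive) polynomial of degree $n$, I would invoke Lemma \ref{lemPexp} with $k$ replaced by $n$: computing $P^i$ costs at most $\mathcal O_q(n^3)$ operations over $\F_q$ (this is the case where the representation of $\alpha^i$ in the basis $\{1,\alpha,\dots,\alpha^{n-1}\}$ is not assumed known, so the $\mathcal O_q(n^3)$ bound applies, coming from one modular exponentiation in $\F_{q^n}$ plus the $\mathcal O_q(n^2)$ row-by-row reconstruction of the matrix).

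The second step computes $V = U P^i$, a product of a $k \times n$ matrix by an $n \times n$ matrix. Using schoolbook matrix multiplication this requires $\mathcal O_q(knn) = \mathcal O_q(kn^2)$ operations over $\F_q$; since we assume $k \le n/2 \le n$ throughout the paper, this is $\mathcal O_q(n^3)$. The returned object $\Vvs = \rs(V)$ requires no further arithmetic (we may simply output the matrix $V$, or optionally put it in reduced row echelon form, which is again $\mathcal O_q(kn^2) \subseteq \mathcal O_q(n^3)$ via Gaussian elimination and does not change the order).

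Adding the two contributions, the total complexity is $\mathcal O_q(n^3) + \mathcal O_q(n^3) = \mathcal O_q(n^3)$, which proves the claim. The only step that genuinely needs care — and the one I would treat as the main point — is the first: justifying the $\mathcal O_q(n^3)$ bound on computing $P^i$, which rests entirely on Lemma \ref{lemPexp} (and, underneath it, on the fast modular exponentiation result of Lemma \ref{lem:compbasic}). Everything else is routine bookkeeping about matrix-vector sizes together with the standing assumption $k \le n/2$.
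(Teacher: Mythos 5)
Your argument matches the paper's own proof: both invoke Lemma \ref{lemPexp} (with $k$ replaced by $n$, since $P$ is the companion matrix of a degree-$n$ primitive polynomial) to bound the computation of $P^i$ by $\mathcal O_q(n^3)$, and both bound the multiplication $UP^i$ by $\mathcal O_q(kn^2)\subseteq\mathcal O_q(n^3)$. The proposal is correct and essentially identical to the paper's reasoning, with only a bit more explicit bookkeeping.
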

\begin{proof}
Since $P$ is a companion matrix of a primitive polynomial, we can compute $P^i$ as described in Lemma \ref{lemPexp}. Hence, this can be done with $ \mathcal O_q(n^3)$ operations. The multiplication with $U\in \F_q^{k\times n}$ can be done with $kn^2$ operations, thus the overall complexity is in $\mathcal O_q(n^3)$.
\end{proof}

\vspace{0.5cm}

For the message retrieval map we assume that an error correcting decoding algorithm has already found $P^i$ such that $\Uvs P^i$ is the respective codeword. This is a realistic assumption, as can be seen in the decoding algorithms of \cite{tr11a}. 
The retrieval map then needs to solve a discrete logarithm in the group $\langle P \rangle$ to find the exponent $i$, which corresponds to the message. The group $\langle P \rangle$ has order $q^n-1$. Since $\F_q[P]\cong\F_q[\alpha]$, we can equivalently compute the discrete logarithm in the group $\langle \alpha \rangle$.

There are several known algorithms to compute discrete logarithms. In this paper we will work with the well-known Pohlig-Hellmann algorithm, see e.g.\  \cite[Sec. 3.6.3]{me97b}.

\begin{lem}\label{lemPH}\cite{me97b}
The Pohlig-Hellman algorithm for computing a solution for the discrete logarithm in a group of order $q^n-1$ has a computational complexity in $\mathcal{O}_{q^n}(\sum_{i=1}^r e_i(\log_2 q^n+\sqrt{p_i}))$, where $\prod_{i=1}^r p_i^{e_i} $ is the prime factorization of $q^n-1$. 
\end{lem}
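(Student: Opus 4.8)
The plan is to recall the Pohlig--Hellman reduction and then carry out the complexity accounting; once the algorithm is in place, the statement is essentially a bookkeeping exercise. Write $N := q^n-1 = \prod_{i=1}^r p_i^{e_i}$, let $g$ be a generator of the cyclic group of order $N$ (here $\alpha$, or equivalently $P$), and let $h = g^x$ be the element whose logarithm $x$ we want to recover. First I would use the Chinese Remainder Theorem to reduce the computation of $x \bmod N$ to the computation of the residues $x \bmod p_i^{e_i}$ for $i = 1,\dots,r$; the final recombination of the $r$ residues costs only $\mathcal{O}((\log_2 N)^2)$ bit operations, which is negligible next to the group operations and can be absorbed into the claimed bound.

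Next, for a fixed prime power $p_i^{e_i}$ I would recall the digit-by-digit sub-reduction. Writing $x \equiv \sum_{j=0}^{e_i-1} d_j p_i^j \pmod{p_i^{e_i}}$ with $0 \le d_j < p_i$, one determines $d_0, d_1, \dots, d_{e_i-1}$ in turn: at step $j$ one forms the appropriate quotient of $h$ by the partial product already recovered, raises it to the power $N/p_i^{j+1}$ so that the result lies in the unique subgroup of order $p_i$ generated by $\gamma := g^{N/p_i}$, and reads off $d_j$ as the discrete logarithm of that element with respect to $\gamma$. Each such exponentiation costs $\mathcal{O}(\log_2 N)$ group operations by square-and-multiply, and there are $e_i$ of them, which produces the $e_i \log_2 q^n$ part of the bound for the prime $p_i$.

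It then remains to bound a single discrete logarithm in a group of prime order $p_i$: baby-step/giant-step (or Pollard's rho) does this in $\mathcal{O}(\sqrt{p_i})$ group operations and comparable storage, and doing it for each of the $e_i$ digits contributes $e_i \sqrt{p_i}$. Summing the two contributions over $i = 1, \dots, r$ yields the complexity $\mathcal{O}_{q^n}(\sum_{i=1}^r e_i(\log_2 q^n + \sqrt{p_i}))$, where the subscript $q^n$ records that one ``group operation'' is one multiplication in $\F_{q^n} \cong \F_q[P]$.

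The only genuinely delicate points are the bookkeeping ones: one must check that the projection exponent $N/p_i^{j+1}$ is applied within budget (it is, being at most $N$, hence $\mathcal{O}(\log_2 N)$ multiplications via square-and-multiply) and that the storage and comparison overhead of baby-step/giant-step does not dominate the group-operation count (it does not). Both are treated in detail in \cite[Sec.\ 3.6.3]{me97b}, so for the purposes of this paper it suffices to invoke that analysis.
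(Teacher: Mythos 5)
Your proposal is correct and follows essentially the same route as the paper, which does not prove the lemma itself but cites \cite[Sec.\ 3.6.3]{me97b}: the standard Pohlig--Hellman analysis there is exactly your CRT reduction, digit-by-digit extraction with square-and-multiply exponentiations of cost $\mathcal{O}(\log_2 q^n)$ each, and baby-step/giant-step for the prime-order subproblems in $\mathcal{O}(\sqrt{p_i})$ group operations. Your accounting, including the negligible CRT recombination and the interpretation of a group operation as a multiplication in $\F_{q^n}$, matches the cited bound.
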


We describe a message retrieval algorithm for a primitive cyclic orbit code $\mathcal C = \Uvs \langle P \rangle \subseteq \G$ using the Pohlig-Hellman algorithm in Algorithm~\ref{alg:coc}. In the algorithm, $\rho: \F_q [\alpha] \rightarrow\F_q [P]$ denotes the isomorphism introduced in Section \ref{sec:preliminaries}.


\begin{algorithm}
\begin{algorithmic}
\REQUIRE{
A codeword $\Vvs \in \C$ and $ P^i$ such that $\Uvs P^i = \Vvs$. 
}
\STATE{Compute $\beta = \rho^{-1}(P^i)$.}
\STATE{Use the Pohlig-Hellman algorithm to find $i=\log_\alpha \beta$.}
\RETURN{$i$}
\end{algorithmic}
\caption{Message retrieval 
for a primitive cyclic orbit code $\mathcal C = \Uvs \langle P \rangle \subseteq \G$. }
\label{alg:coc}
\end{algorithm}

\begin{thm}\label{thm:Pohlig}
Let $\prod_{i=1}^r p_i^{e_i} $ be the prime factorization of $q^n-1$. 
The computational complexity of Algorithm~\ref{alg:coc} is in
\[\mathcal{O}_{q}( n^3\log_2 q  \sum_{i=1}^r e_i +  n^2 \sum_{i=1}^r  e_i\sqrt{p_i}     ) \]
\end{thm}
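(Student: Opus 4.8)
The plan is to analyze the two steps of Algorithm~\ref{alg:coc} separately and then add their costs. The first step computes $\beta = \rho^{-1}(P^i)$, i.e.\ reads off the element of $\F_q[\alpha] \cong \F_{q^n}$ corresponding to a given power of the companion matrix $P$. Since $\rho^{-1}$ just transfers the first row of the matrix (or any row, using the structure exhibited in the proof of Lemma~\ref{lemPexp}) into a coordinate vector over $\F_q$, this costs only $\mathcal{O}_q(n)$ operations and will be negligible compared to the discrete logarithm step. The second step invokes the Pohlig--Hellman algorithm in $\langle \alpha \rangle \leq \F_{q^n}^*$, whose cost was recalled in Lemma~\ref{lemPH} as $\mathcal{O}_{q^n}\big(\sum_{i=1}^r e_i(\log_2 q^n + \sqrt{p_i})\big)$ operations \emph{in $\F_{q^n}$}, where $\prod_{i=1}^r p_i^{e_i}$ is the prime factorization of $q^n - 1$.

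The main point is therefore a change of base field: convert the $\mathcal{O}_{q^n}$ count into an $\mathcal{O}_q$ count. By Lemma~\ref{lem:compbasic}(1), each multiplication or division in $\F_{q^n}$ costs $\mathcal{O}_q(n^2)$ operations in $\F_q$; I also need that the small number of modular exponentiations Pohlig--Hellman performs (baby-step/giant-step or Pollard rho inside each prime-order subgroup, plus the CRT recombination) are dominated by these multiplication counts, which follows from Lemma~\ref{lem:compbasic}(2) giving $\mathcal{O}_q(n^3)$ per exponentiation and noting there are $\mathcal{O}(\sum_i e_i)$ of them. Multiplying the $\F_{q^n}$-operation count through by $\mathcal{O}_q(n^2)$ gives
\[
\mathcal{O}_q\!\left( n^2 \sum_{i=1}^r e_i\big(\log_2 q^n + \sqrt{p_i}\big)\right)
= \mathcal{O}_q\!\left( n^2 \sum_{i=1}^r e_i\, n\log_2 q + n^2 \sum_{i=1}^r e_i \sqrt{p_i}\right)
= \mathcal{O}_q\!\left( n^3 \log_2 q \sum_{i=1}^r e_i + n^2 \sum_{i=1}^r e_i \sqrt{p_i}\right),
\]
using $\log_2 q^n = n \log_2 q$. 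Adding the negligible $\mathcal{O}_q(n)$ from the first step leaves this expression unchanged, which is exactly the claimed bound.

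The step I expect to require the most care is not any single estimate but the bookkeeping of which operations inside Pohlig--Hellman are ``multiplications in $\F_{q^n}$'' versus ``modular exponentiations in $\F_{q^n}$'': the clean statement of Lemma~\ref{lemPH} hides the exponentiations used to project into each $p_i$-primary component and the baby-step--giant-step table lookups. One must check that replacing the generic group-operation cost by $\mathcal{O}_q(n^2)$ (via Lemma~\ref{lem:compbasic}(1)) together with $\mathcal{O}_q(n^3)$ for the $\mathcal{O}(\sum_i e_i)$ exponentiations (via Lemma~\ref{lem:compbasic}(2)) still yields a total dominated by the two displayed terms; since $n^3\log_2 q\sum_i e_i$ already absorbs $n^3\sum_i e_i$, this causes no increase. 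With that accounting settled, the theorem follows by simply multiplying the Lemma~\ref{lemPH} bound by the per-operation cost $\mathcal{O}_q(n^2)$ and substituting $\log_2 q^n = n\log_2 q$.
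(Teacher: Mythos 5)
Your proposal is correct and follows essentially the same route as the paper: the first step is the negligible $\mathcal{O}_q(n)$ read-off of the first row of $P^i$, and the main bound comes from the Pohlig--Hellman count of Lemma~\ref{lemPH} converted from operations in $\F_{q^n}$ to operations in $\F_q$ at cost $\mathcal{O}_q(n^2)$ each, together with $\log_2 q^n = n\log_2 q$. Your extra bookkeeping about the exponentiations inside Pohlig--Hellman (each $\mathcal{O}_q(n^3)$, absorbed by the $n^3\log_2 q\sum_i e_i$ term) is a welcome refinement of a point the paper states only implicitly, but it does not change the argument.
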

\begin{proof}
Since $P$ is the companion matrix of $\alpha$, $\beta$ is simply $\phi_{1,n}$ of the first row of $P^i$, hence, by Lemma \ref{lem:compphi}, the computation of $\beta$ can be done with at most $n$ operations over $\F_q$. 
Then it follows from Lemma \ref{lemPH} that the discrete logarithm can be computed with a complexity in $\mathcal{O}_{q}( n^3\log_2 q  \sum_{i=1}^r e_i +  n^2 \sum_{i=1}^r  e_i\sqrt{p_i}     ) $. 
Since any operation in $\F_{q^n}$ can be done with at most $\mathcal{O}_q(n^2)$ operations over $\F_q$, the statement follows.
\end{proof}

We can upper bound this complexity for cyclic orbit codes in $\G$, in the case where $q^n-1$ is $n^2$-smooth as follows.

\begin{cor}
If $q^n-1$ is $n^2$-smooth (i.e., if all prime factors of $q^n-1$ are less than or equal to $n^2$) and all $e_i$ are less than or equal to $k$, then the complexity order of Algorithm~\ref{alg:coc}
is upper bounded by $\mathcal{O}_q(n^3 k r \log_2 q )$, where $r$ is the number of distinct prime factors of $q^n-1$.
\end{cor}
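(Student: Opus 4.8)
The plan is to start from the complexity bound established in Theorem~\ref{thm:Pohlig}, namely
\[
\mathcal{O}_{q}\Big( n^3\log_2 q  \sum_{i=1}^r e_i +  n^2 \sum_{i=1}^r  e_i\sqrt{p_i}     \Big),
\]
and simply bound each of the two summands under the two hypotheses. First I would handle the first summand: since by assumption every exponent $e_i$ in the prime factorization $\prod_{i=1}^r p_i^{e_i}=q^n-1$ satisfies $e_i\leq k$, and there are $r$ distinct primes, we get $\sum_{i=1}^r e_i \leq rk$, so the first term is in $\mathcal{O}_q(n^3 k r \log_2 q)$.

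Next I would handle the second summand $n^2 \sum_{i=1}^r e_i\sqrt{p_i}$. The $n^2$-smoothness hypothesis says $p_i\leq n^2$ for all $i$, hence $\sqrt{p_i}\leq n$. Combining this with $e_i\leq k$ gives $\sum_{i=1}^r e_i\sqrt{p_i} \leq \sum_{i=1}^r k\,n = knr$, so the second term is in $\mathcal{O}_q(n^2\cdot knr) = \mathcal{O}_q(n^3 k r)$. Adding the two contributions, the dominant one is $\mathcal{O}_q(n^3 k r \log_2 q)$, which absorbs $\mathcal{O}_q(n^3 k r)$, yielding the claimed bound.

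There is essentially no obstacle here; the corollary is a direct specialization of Theorem~\ref{thm:Pohlig}, and the only thing to be a little careful about is making sure that under the stated hypotheses the first summand indeed dominates (it does, because of the extra $\log_2 q$ factor, and $\log_2 q \geq 1$), so that the sum collapses to a single term rather than having to carry both. One might also remark in passing that, since $r$ is at most logarithmic in $q^n-1$ (indeed $r = \mathcal{O}(n\log q / \log\log(q^n))$ crudely, or even just $r\leq \log_2(q^n-1) = \mathcal{O}(n\log q)$), the bound is genuinely polynomial in $n$ and $\log q$ under the smoothness assumption, but that refinement is not needed for the statement as phrased.

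\begin{proof}
By Theorem~\ref{thm:Pohlig}, the computational complexity of Algorithm~\ref{alg:coc} is in
\[
\mathcal{O}_{q}\Big( n^3\log_2 q  \sum_{i=1}^r e_i +  n^2 \sum_{i=1}^r  e_i\sqrt{p_i}     \Big),
\]
where $\prod_{i=1}^r p_i^{e_i}$ is the prime factorization of $q^n-1$ and $r$ is the number of distinct prime factors. By assumption $e_i\leq k$ for all $i$, hence $\sum_{i=1}^r e_i \leq rk$, so the first summand is in $\mathcal{O}_q(n^3 k r \log_2 q)$. Since $q^n-1$ is $n^2$-smooth, we have $p_i\leq n^2$, i.e.\ $\sqrt{p_i}\leq n$, for every $i$; together with $e_i\leq k$ this gives
\[
\sum_{i=1}^r e_i\sqrt{p_i} \leq \sum_{i=1}^r k n = knr ,
\]
so the second summand is in $\mathcal{O}_q(n^2 \cdot knr) = \mathcal{O}_q(n^3 k r)$. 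As $\log_2 q \geq 1$, the term $\mathcal{O}_q(n^3 k r)$ is absorbed by $\mathcal{O}_q(n^3 k r \log_2 q)$, and the overall complexity order is $\mathcal{O}_q(n^3 k r \log_2 q)$.
\end{proof}
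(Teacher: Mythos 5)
Your proof is correct and follows exactly the route the paper intends: the corollary is stated as a direct specialization of Theorem~\ref{thm:Pohlig}, bounding the first summand via $e_i\leq k$ and the second via $\sqrt{p_i}\leq n$, with the $\log_2 q\geq 1$ factor absorbing the second term. Nothing is missing.
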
 

From an application point of view a complexity order of at most $\mathcal{O}_q(n^3 k r \log_2 q)$ is reasonable if $r$ is upper bounded by $n$. If we assume e.g.\ that $q\leq 2^k$, then we can simplify the above complexity order to $\mathcal{O}_q(n^3  k^2 r)$. For comparison, the complexities of the decoders in \cite{ko08,si08j} are at least cubic in $n$ and the decoding complexity of the rank-based error decoding algorithm for primitive cyclic orbit codes in \cite{tr11a} is of order $\mathcal{O}_q(q^{2k}(n^2 + k^2 n))$.

The following question remains: for which values of $q$ and $n$ is  $q^n-1$ $n^2$-smooth? For $q=2$ and $q=3$ 
Tables \ref{table1} and \ref{table2} show values of $n \leq 60$ for which $q^n-1$ is $n^2$-smooth. 
As explained before, the Pohlig-Hellman algorithm for these cases has a complexity of order at most $\mathcal O_q(n^3 k^2 r)$ if $k$ is at least $\max\{e_i\mid i=1,\dots,r\}$. As one can see, the largest respective exponents $e_i$ for the values presented in Table \ref{table1} are less than or equal to $3$, hence $k$ should be at least $3$. This is not much of a restriction, since for $k\leq 2$, any constant dimension code in $\Gr$ is no-error-correcting. In Table \ref{table2} the values for $e_i$ are larger, hence the restriction on $k$ is stricter. But since we are only considering complexity orders, we can allow $e_i$ to be slightly greater than $k$, without impairing the overall complexity order. Furthermore, we can see that $r$ is reasonably small for these parameter sets, which is is necessary for an efficient performance of the Pohlig-Hellman algorithm.

\begin{table}[ht]
\begin{center}
\begin{tabular}{|c|c|c|c|c|c|c|}
\hline
$n$ & 
$\max p_i$ & $\max e_i$ &  $ \max(e_i n, e_i p_i)$ & $r$ & $n^2$ \\
\hline
$6$ 
& $7  $ & 2 & 18 & 2& 36\\
$8$ 
& 17 & 1 & 17 & 3 & 64\\
$9$ 
& 73 &1& 73 & 2 & 81\\
$10$ 
& 31 &1& 31 & 3 & 100\\
$11$ 
& 89 &1 & 89 & 2 & 121\\
$12$ & 13 & 2& 24 & 4 &144 
\\
14 & 127& 1& 127 & 3 & 196
\\
15& 151 & 1 & 151  & 3 & 225
\\
18& 73 & 3 & 73 & 4 & 324
\\
20& 41 & 2 & 41 & 5 & 400
\\
21& 337 & 2 & 337& 3 & 441 
\\
24& 241 & 2 &  241 & 6 & 576
\\
28& 127 & 1 & 127& 6 &  784
\\
30& 331 & 2 & 331 & 6 & 900
\\
36& 109 & 3 & 109 & 8 & 1296
\\
48 & 673 & 2 & 673 & 9 & 2304
\\
60 & 1321 & 2 & 1321 & 11 & 3600
\\
\hline
\end{tabular}
\caption{Values of $n\leq 60$ for which $2^n-1 = \prod_{i=1}^r p_i^{e_i}$ is $n^2$-smooth.}\label{table1}
\end{center}
\end{table}

\begin{table}[ht]
\begin{center}
\begin{tabular}{|c|c|c|c|c|c|}
\hline
$n$ & 
$\max p_i$ & $\max e_i$ &  $ \max(e_i n, e_i p_i)$ &$ r$ & $n^2$ \\
\hline
6 &13 & 3 & 18 & 3 & 36\\
8&41&5&41& 3 & 64\\
10&61&3&61& 3 & 100\\
12&73&4&73& 5 &144\\
16&193&6&193& 5 & 256\\
\hline
\end{tabular}
\caption{Values of $n\leq 60$ for which $3^n-1 = \prod_{i=1}^r p_i^{e_i}$ $n^2$-smooth.}\label{table2}
\end{center}
\end{table}

\begin{rem}
There are values for $q$ and $n$, where $q^n-1$ is not $n^2$-smooth but the largest prime factor of $q^n-1$ is close to $n^2$. Then the complexity order of Algorithm~\ref{alg:coc} will still be $\mathcal O_q(n^3 k^2 r)$ and the message retrieval algorithm will thus still not increase the overall decoding complexity for many parameters. 
\end{rem}


\subsection{Unions of Primitive Cyclic Orbit Codes}

First we want to generalize the previously discussed decoding algorithm to unions of primitive cyclic orbit codes. 
As seen in \cite{et11,gl14,ko08p}, unions of primitive cyclic orbit codes are among the best known constructions for constant dimension codes. Decoding such codes can be done by using a decoding algorithm for single orbits for each of the orbits that constitute the code. In the error correction decoding process the algorithm needs to decide which orbit the respective closest codeword is on. This information can then be passed on to the message retrieval algorithm, which simply applies Algorithm~\ref{alg:coc} on that chosen orbit.

As before let $\alpha\in \F_{q^n}$ be a primitive element and $P$ its companion matrix. The various orbits that form our code as a union are all generated by the action of $P$ (respectively $\alpha$). They are given by the initial points $\Uvs_1,\dots,\Uvs_z\in \G$. For simplicity we assume that all orbits have the same cardinality $c^*$. Then the following map is an encoding map for the code $\C = \bigcup_{i=1}^z \Uvs_i \langle P \rangle \subseteq \G$: 
\begin{align*} 
\enc_3 : \{0,\dots,  z  c^*-1\} &\longrightarrow \G \\
i &\longmapsto  \Uvs_j P^i \quad, \quad j=\left\lceil \frac{i+1}{c^*}\right\rceil
\end{align*}
Note that, because of the cardinality of the orbits, $\Uvs_j P^i= \Uvs_j P^{i+c^*}$. Therefore we can also compute $P^\ell$, where $\ell\equiv i \mod c^*$, instead of $P^i$ in the above encoding map. For the inverse map, i.e., the message retrieval, we require the information $j$ (which orbit the word is on) from the error decoder. Then we use Algorithm~\ref{alg:coc} to retrieve the respective exponent $\ell$ of $P$. With both these pieces of information we can easily reconstruct the message $i=\ell + (j-1)c^*$.

\begin{ex}
Let $q=2, n=4, k=2$ and $\alpha\in \F_{2^4}$ be a root of the irreducible polynomial $x^4+x+1$ and denote by $P$ its companion matrix. Moreover, $\rho: \F_2[\alpha] \rightarrow \F_2[P]$ is the isomorphism, as previously described. 
One has $\ord(\alpha)=15$, i.e., $\alpha$ is primitive. Let the extension field representations of the initial points $\Uvs_1, \Uvs_2\in \mathcal G_2(2,4)$ be $\{0,1,\alpha,\alpha+1\}$ and $\{0,1,\alpha^2,\alpha^2+1\}$, respectively. Both orbits, $\Uvs_1\langle P\rangle$ and $\Uvs_2\langle P\rangle$, have cardinality $c^*=15$. Our code consists of the union of these two orbits, i.e., our message set is $\mathcal M = \{0,\dots,29\}$. Assume we received some $\Rvs \in \mathcal G_2(2,4)$ that is error decoded to the codeword $\Uvs_2 \rho(\beta)$ with $\beta= \alpha^3 + \alpha + 1$. We use Algorithm \ref{alg:coc} to compute $\ell = \log_\alpha \beta = 7 $ and retrieve the message $i= \ell + (j-1)c^* = 7 + 15 =22$.
\end{ex}


\subsection{Non-Primitive Irreducible Cyclic Orbit Codes}

The second generalization that we want to consider is the message retrieval of cyclic orbit codes $\Uvs \langle P \rangle \subseteq \G$  that are not generated by a companion matrix of a primitive polynomial, but rather of a non-primitive irreducible one. For more information on the distinctions of primitive, irreducible non-primitive and completely reducible cyclic orbit codes the interested reader is referred to \cite{gl14,tr11a}. 
For the irreducible (but not primitive) case we assume that $\alpha\in \F_{q^n}$ is irreducible of order less than (and a divisor of) $q^n-1$. Then $\F_{q^n}^*$ is partitioned into different orbits under the action of $\alpha$. The various vectors of a codeword possibly lie on several of these orbits, which the error correction algorithm needs to take into consideration. The message retrieval algorithm then only needs to compute the discrete logarithm in $\langle \alpha \rangle$, just as in the primitive case. The computation of the discrete logarithm is possibly easier than in the primitive case, since the group order is smaller.

\begin{ex}
Let $q=2, n=4, k=2$ and 
 $\alpha\in \F_{2^4}$ be a root of the irreducible polynomial $x^4+x^3+x^2+x+1$. As before, $\rho: \F_2[\alpha] \rightarrow \F_2[P]$ denotes an isomorphism. 
We have $\ord(\alpha)=5$, thus our message set is $\mathcal M = \{0,\dots,4\}$. $\F_{2^4}^*$ is partitioned into the orbits $\langle \alpha \rangle, (\alpha+1) \langle \alpha \rangle$ and $(\alpha^2 +1)\langle \alpha \rangle$. Let $\Uvs\in \mathcal G_2(2,4)$, $P$ be the companion matrix of $\alpha$ and $\C = \Uvs \langle P\rangle$ be the cyclic orbit we want to consider. 
Assume that we received some word $\Rvs \in \mathcal G_2(2,4)$, which was error decoded to the codeword 
$ \Uvs \rho(\beta)$ with $ \beta = \alpha^3 + \alpha^2 + \alpha + 1$. 
We use Algorithm~\ref{alg:coc} to get the message $i=\log_\alpha \beta = 4$.
\end{ex}

In the next example we illustrate two cases where the order of a primitive element is not $n^2$-smooth, but the order of some non-primitive irreducible element is $n^2$-smooth. This shows that there are more irreducible cyclic orbit codes with an efficiently computable  message retrieval map than only the primitive ones.
\begin{ex}
\begin{enumerate}
\item
Let $q=3,n=18,k=9$. Then any primitive element of $\F_{3^{18}}$ has order $3^{18}-1=2^3\cdot 7\cdot 13\cdot 19\cdot 37\cdot 757$, which is not $n^2$-smooth. But there also exists an irreducible element in $\F_{3^{18}}$ of order $(3^{18}-1)/(3^{9}-1) = 2^2\cdot 7\cdot 19\cdot 37$, which is $n^2$-smooth.
\item
Let $q=3,n=48,k=24$. Then any primitive element of $\F_{3^{48}}$ has order $3^{48}-1$, whose largest prime power is $6481$. Hence this order is not $n^2$-smooth. But there also exists an irreducible element in $\F_{3^{48}}$ of order $(3^{48}-1)/(3^{24}-1) = 2\cdot 17\cdot 97\cdot 193\cdot 577\cdot 769$, which is $n^2$-smooth.
\end{enumerate}
\end{ex}

\subsection{Completely Reducible Cyclic Orbit Codes}

Lastly we briefly explain how the previous results can be generalized to completely reducible cyclic orbit codes. For these codes the generating matrix $P$ is not a companion matrix of some irreducible polynomial, but rather a completely reducible matrix. I.e.,   $P$ can be brought into block diagonal form, where each block is again a companion matrix of some irreducible polynomial. For simplicity we assume that $P\in \F_{q}^{n\times n}$ is of the form
$$ P = \left( \begin{array}{ccccccccc}  P_1 &&&& \\ & P_2&&& \\ &&&\ddots& \\ &&&&P_t  \end{array}\right)    ,$$
where $P_i\in \F_q^{n_i\times n_i}$ is a companion matrix of a primitive element in $\F_{q^{n_i}}$. One can easily see that $n=n_1+n_2+\dots +n_t$. 
Then any element $u\in \F_q^n$ can be represented as an element of $\F_{q^{n_1}}\times \F_{q^{n_2}} \times \dots \times \F_{q^{n_t}}$. Furthermore, we can represent any element of the code this way, thus each of these blocks of length $n_i$ can be seen as a primitive cyclic orbit code. 
For a more detailed explanation of completely reducible orbit codes see \cite{tr11a}. The message encoding and retrieval can now be done in each of the blocks of length $n_i$. For the retrieval one gets $t$ integer solutions $i_1,\dots, i_t$. The final solution $i$, such that $\Uvs P^i$ is the received word, is then given by solving the system of simultaneous congruences $i\equiv i_j \mod n_j$ for $j=1,\dots,t$.

\begin{rem}
The linkage construction for cyclic orbit codes from \cite{gl14} is closely related to unions of completely reducible orbit codes, see \cite[Proposition 5.3]{gl14}. The message retrieval map explained before can be extended to work for the linkage construction as well.
\end{rem}


\section{A Hybrid Encoder for Semi-Linearly Isometric Codes}\label{sec:hybrid}

We have seen in the previous section that there exist parameters for which $\enc_2$ is a message encoding function for orbit codes, that has an efficient inverse map, i.e., an efficient corresponding retrieval map. For many parameters though, the procedures described in Section \ref{sec:orbit} are not efficiently computable. In this section we show how semi-linear isometry can be useful for message encoding and retrieval purposes. We will describe the results in general, for any pair of semi-linearly isometric codes, and then explain how this can be applied to the special class of spreads constructed as primitive cyclic orbit codes, since these are always (semi-linearly isometric to) Desarguesian spreads (cf.\ e.g.\ \cite[Theorem 14]{ba11a}).

Let $A\in \GL_n$ and $\sigma \in \mathrm{Aut}(\F_{q})$ a field automorphism. If $\C_1,\C_2 \subseteq \G$ are two constant dimension codes, such that $\sigma(\C_1 A)  = \C_2$ (as sets of vector spaces), where $\sigma$ is applied element-wise on the codewords of $\C_1 A$, then we call $\C_1$ and $C_2$ \emph{semi-linearly isometric}. If $\sigma = \mathrm{id}$, then we call the codes \emph{linearly isometric}. This name \emph{isometric} arises because the codes have the same cardinality and distance distribution. For more information on semi-linear isometry of subspace codes the interested reader is referred to \cite{tr13phd,tr12}. 

Assume that there exists an encoder $\mathrm{enc}$ 
for the code $\C_1$ (for a message set $\mathcal M$). Then
\begin{align*}
 \enc' : \quad \cM &\longrightarrow  \G \\
 i & \longmapsto \sigma(\enc(i) A) 
 \end{align*}
is an encoder for $\C_2$. We call this a $\emph{hybrid encoder}$ for $\C_1$ and $\C_2$. 

\begin{thm}
 Let $\enc$ and $\enc'$ be as above. Denote the complexity order of $\enc$ by $\omega(\enc)$. Then the complexity of the hybrid encoder $\enc'$ is in $\mathcal O_q(\omega(\enc)+kn^2).$ 
\end{thm}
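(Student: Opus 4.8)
The plan is to analyze the three elementary operations composing $\enc'$ and add up their complexities. By definition, $\enc'(i) = \sigma(\enc(i)A)$, so computing it requires: (i) computing $\enc(i)$, which by hypothesis costs $\omega(\enc)$ operations over $\F_q$ and produces a matrix representative $U \in \F_q^{k \times n}$ of the codeword; (ii) performing the matrix multiplication $UA$ with $A \in \GL_n$; and (iii) applying the field automorphism $\sigma$ entrywise to the resulting $k \times n$ matrix. The result $\rs(\sigma(UA))$ is the desired codeword of $\C_2$.

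First I would note that step (ii), multiplying a $k \times n$ matrix by an $n \times n$ matrix, requires $knn = kn^2$ multiplications and a comparable number of additions in $\F_q$, hence has complexity in $\mathcal{O}_q(kn^2)$. Next, for step (iii), applying $\sigma \in \mathrm{Aut}(\F_q)$ to a single element of $\F_q$ is a constant (or at worst bounded) number of $\F_q$-operations — for instance, if $q = p^r$ then $\sigma$ is a power of the Frobenius $x \mapsto x^p$, and evaluating it on one field element is negligible compared to the other costs; applying it to all $kn$ entries of the matrix thus costs $\mathcal{O}_q(kn)$, which is dominated by $\mathcal{O}_q(kn^2)$. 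Finally, step (i) contributes exactly $\omega(\enc)$ by assumption. Summing, the total is $\mathcal{O}_q(\omega(\enc) + kn^2 + kn) = \mathcal{O}_q(\omega(\enc) + kn^2)$.

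The only subtle point — and the main thing to get right rather than a genuine obstacle — is justifying that applying the automorphism $\sigma$ entrywise does not dominate; this is why one should spell out that $\sigma$ acts on each coordinate independently and that a single evaluation of $\sigma$ on $\F_q$ is cheap (bounded by a fixed function of the fixed field, hence $\mathcal{O}_q(1)$ per entry, or at worst $\mathcal{O}_q(\log q)$, still absorbed). One should also remark that the output of $\enc$ is a $k \times n$ generator matrix (as is standard in this paper's representation of subspaces), so that the matrix product $UA$ is well-defined and the above dimension count applies. With these observations in place the estimate follows immediately, and no intricate calculation is needed.
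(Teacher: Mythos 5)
Your proposal is correct and follows exactly the paper's argument: the paper likewise notes that multiplying the $k\times n$ generator matrix by $A$ costs $\mathcal O_q(kn^2)$ and that the entrywise field automorphism is negligible, so the total is $\mathcal O_q(\omega(\enc)+kn^2)$. Your write-up merely makes the same decomposition more explicit.
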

\begin{proof}
Follows straightforwardly from the fact that the computational complexity order of the multiplication with $A$ is in $\mathcal O_q(kn^2)$. The complexity of the field automorphism is negligible.
\end{proof}

As an example we want to show how the idea of a hybrid encoder can be used for Desarguesian spread codes. As mentioned above, any spread constructed as a primitive cyclic orbit code is a Desarguesian spread, in the more general definition of Desarguesian spread. It follows that such a primitive cyclic orbit code is semi-linearly isometric to a code from Construction~I (see also  \cite[Corollary 16]{ba11a}).
With this knowledge we can use $\enc_1$ to efficiently encode and retrieve messages, but use the orbit code structure for an error correcting decoding algorithm, e.g. the coset leader decoding algorithm from \cite{tr11a}.

 \begin{ex}
 Let $\C_1$ be the spread constructed in Example \ref{ex1} and let $\C_2$ be the orbit spread code constructed in Example \ref{ex2}, both subsets of $\mathcal G_2(2,4)$ with five elements. 
 Then we can use the algorithm of Feulner from \cite{thomas}\footnote{This algorithm requires two codes in $\G$ as input and then computes if they are linearly isometric; and if so, finds the linear transformation from one code into the other.} to find a linear transformation $A\in \GL_4$, such that $\C_1 A= \C_2$. One such linear transformation is given by
 $$ A=\left(\begin{array}{cccc} 1&0&0&0 \\ 0&1&1&0 \\ 1&1&0&0 \\ 0&1&0&1 \end{array}\right) .$$
 Let $\beta$ be a primitive element of $\F_{2^4}$. 
 In the isomorphic extension field representation, $A$ maps the basis $\{1,\beta, \beta^2, \beta^3\}$ of $\F_{2^4}$ over $\F_2$ to the new basis $\{1, \beta^2+ \beta, \beta+1, \beta^3+ \beta\}$. We can now use $\C_1$ for message encoding, say we encode a given message to the codeword $\Uvs=\psi^{-1}_4\{0,1,\beta,\beta+1 \} \in \C_1$, then we send $\Uvs A= \psi^{-1}_4\{0,1,\beta^2+\beta,\beta^2+\beta+1 \}\in \C_2$ over the channel. We can then do error correction decoding in the code $\C_2$ with e.g.\ the coset leader decoder. Say we decoded the received word to the sent codeword $\Uvs A \in \C_2$. Then it suffices to apply $A^{-1}$ on only one of the non-zero elements of $\Uvs A$, e.g.\ $ \psi_4^{-1}(\beta^2+\beta)A^{-1} =  \psi_4^{-1}(\beta) = ( 0 , 1 , 0 , 0 )$ to identify the corresponding codeword $\Uvs \in \C_1$, from which we can then easily get the message as explained in Section \ref{sec:spread}.
 \end{ex}

\section{Conclusion}\label{conclusion}

In this work we investigate how message encoding can be done for spread and cyclic orbit codes, two families of subspace codes that have been well-studied for error correction in random network coding. 

We show that for Desarguesian spread codes in $\G$ there exists an encoding map such that the map itself and the inverse map are efficiently computable with a computational complexity of order at most $\mathcal O_q(kn)$. In addition, we study the method of enumerative coding for this family of codes and show that the first message retrieval map, with a little twist, is equal to the indexing function of enumerative coding.

Furthermore, we develop an encoder for general cyclic orbit codes. This map is efficiently computable, but the inverse, i.e., the message retrieval map, involves the computation of a discrete logarithm. This is known to be computationally hard in general, but we show for which parameters the Pohlig-Hellman algorithm computes the discrete logarithm in complexity of order at most $\mathcal O_q(n^3 k r \log_{2}q)$. This is done in detail for primitive cyclic orbit codes. Moreover, some remarks on how to generalize these results to unions of cyclic orbit codes and completely reducible cyclic orbit codes are given.

In the end we propose a hybrid encoder for semi-linearly isometric codes, which is useful if one knows an efficient message retrieval map for a linearly isometric code to a given one. We show how this can be realized for cyclic orbit codes that are  Desarguesian spreads, such that one can use the orbit structure for error correction, but avoid the discrete logarithm problem in the message retrieval part.

An open question for further research is, if there are other, for certain parameter sets more efficient, ways to solve the discrete logarithm problem in the message retrieval of orbit codes. Moreover, it would be interesting to find 
other families of semi-linearly isometric codes where a hybrid encoder can be helpful to combine efficient error correction decoders with efficient message retrieval maps.

\section*{Acknowledgment}
The author would like to thank Yuval Cassuto for his reference to enumerative coding, John Sheekey for his advice on Desarguesian spreads, and  Margreta Kuijper for fruitful discussions and comments on this work. She would furthermore like to thank the anonymous reviewers for their valuable comments.

\bibliographystyle{plain}

\bibliography{network_coding_stuff}

\begin{thebibliography}{10}

\bibitem{ba11a}
L.~Bader and G.~Lunardon.
\newblock Desarguesian spreads.
\newblock {\em Ricerche di Matematica}, 60(1):15--37, 2011.

\bibitem{be14}
E.~Ben-Sasson, T.~Etzion, A.~Gabizon, and N.~Raviv.
\newblock Subspace polynomials and cyclic subspace codes.
\newblock {\em arXiv:1404.7739 [cs.IT]}, 2014.

\bibitem{bo09}
M.~Bossert and E.M. Gabidulin.
\newblock One family of algebraic codes for network coding.
\newblock In {\em Proceedings of the 2009 IEEE International Symposium on
  Information Theory (ISIT)}, pages 2863--2866, 2009.

\bibitem{co73}
T.~M. Cover.
\newblock Enumerative source encoding.
\newblock {\em IEEE Transactions on Information Theory}, 19(1):73--77, Jan
  1973.

\bibitem{de73}
P.~Delsarte.
\newblock An algebraic approach to the association schemes of coding theory.
\newblock {\em Philips Journal of Research}, (10):vi+97, 1973.

\bibitem{et09}
T.~Etzion and N.~Silberstein.
\newblock Error-correcting codes in projective spaces via rank-metric codes and
  {F}errers diagrams.
\newblock {\em IEEE Transactions on Information Theory}, 55(7):2909--2919,
  March 2009.

\bibitem{et12}
T.~Etzion and N.~Silberstein.
\newblock Codes and designs related to lifted {MRD} codes.
\newblock {\em IEEE Transactions on Information Theory}, 59(2):1004 --1017,
  2013.

\bibitem{et11}
T.~Etzion and A.~Vardy.
\newblock Error-correcting codes in projective space.
\newblock {\em IEEE Transactions on Information Theory}, 57(2):1165--1173,
  2011.

\bibitem{thomas}
T.~Feulner.
\newblock Canonical forms and automorphisms in the projective space.
\newblock {\em arXiv:1305.1193 [cs.IT]}, 2013.

\bibitem{ga85a}
E.~M. Gabidulin.
\newblock Theory of codes with maximum rank distance.
\newblock {\em Problemy Peredachi Informatsii}, 21(1):3--16, 1985.

\bibitem{ga11}
E.~M. Gabidulin and N.~I. Pilipchuk.
\newblock Multicomponent network coding.
\newblock In {\em Proceedings of the Seventh International Workshop on Coding
  and Cryptography (WCC) 2011}, pages 443--452, Paris, France, 2011.

\bibitem{ga10}
M.~Gadouleau and Z.~Yan.
\newblock Constant-rank codes and their connection to constant-dimension codes.
\newblock {\em IEEE Transactions on Information Theory}, 56(7):3207--3216,
  2010.

\bibitem{ga91}
J.~{von zur} Gathen.
\newblock Efficient and optimal exponentiation in finite fields.
\newblock {\em Comput. Complexity}, 1(4):360--394, 1991.

\bibitem{ga03}
J.~{von zur} Gathen and J.~Gerhard.
\newblock {\em Modern computer algebra}.
\newblock Cambridge University Press, Cambridge, second edition, 2003.

\bibitem{gl14}
H.~Gluesing-Luerssen, K.~Morrison, and C.~Troha.
\newblock Cyclic orbit codes and stabilizer subfields.
\newblock {\em arXiv:1403.1218 [cs.IT]}, 2014.

\bibitem{go12}
E.~Gorla, F.~Manganiello, and J.~Rosenthal.
\newblock An algebraic approach for decoding spread codes.
\newblock {\em Advances in Mathematics of Communications (AMC)}, 6(4):443 --
  466, 2012.

\bibitem{go14}
E.~Gorla and A.~Ravagnani.
\newblock Partial spreads in random network coding.
\newblock {\em Finite Fields and Applications}, 26:104--115, 2014.

\bibitem{hi98}
J.~W.~P. Hirschfeld.
\newblock {\em Projective Geometries over Finite Fields}.
\newblock Oxford Mathematical Monographs. The Clarendon Press Oxford University
  Press, New York, second edition, 1998.

\bibitem{ke99}
A.~Kerber.
\newblock {\em Applied finite group actions}, volume~19 of {\em Algorithms and
  Combinatorics}.
\newblock Springer-Verlag, Berlin, second edition, 1999.

\bibitem{ko08p}
A.~Kohnert and S.~Kurz.
\newblock Construction of large constant dimension codes with a prescribed
  minimum distance.
\newblock In J.~Calmet, W.~Geiselmann, and J.~M{\"u}ller-Quade, editors, {\em
  MMICS}, volume 5393 of {\em Lecture Notes in Computer Science}, pages 31--42.
  Springer, 2008.

\bibitem{ko08}
R.~K{\"o}tter and F.~R. Kschischang.
\newblock Coding for errors and erasures in random network coding.
\newblock {\em IEEE Transactions on Information Theory}, 54(8):3579--3591,
  2008.

\bibitem{li94}
R.~Lidl and H.~Niederreiter.
\newblock {\em Introduction to Finite Fields and their Applications}.
\newblock Cambridge University Press, Cambridge, London, 1994.
\newblock Revised edition.

\bibitem{ma08p}
F.~Manganiello, E.~Gorla, and J.~Rosenthal.
\newblock Spread codes and spread decoding in network coding.
\newblock In {\em Proceedings of the 2008 IEEE International Symposium on
  Information Theory (ISIT)}, pages 851--855, Toronto, Canada, 2008.

\bibitem{ma11j}
F.~Manganiello and A.-L. Trautmann.
\newblock Spread decoding in extension fields.
\newblock {\em Finite Fields and Applications}, 25:94--105, jan 2014.

\bibitem{ma11p}
F.~Manganiello, A.-L. Trautmann, and J.~Rosenthal.
\newblock On conjugacy classes of subgroups of the general linear group and
  cyclic orbit codes.
\newblock In {\em Proceedings of the 2011 IEEE International Symposium on
  Information Theory (ISIT)}, pages 1916--1920, St. Petersburg, Russia, 2011.

\bibitem{me97b}
A.~J. Menezes, P.~C. van Oorschot, and S.~A. Vanstone.
\newblock {\em Handbook of applied cryptography}.
\newblock CRC Press Series on Discrete Mathematics and its Applications. CRC
  Press, Boca Raton, FL, 1997.
\newblock With a foreword by Ronald L. Rivest.

\bibitem{od85}
A.~M. Odlyzko.
\newblock Discrete logarithms in finite fields and their cryptographic
  significance.
\newblock In Thomas Beth, Norbert Cot, and Ingemar Ingemarsson, editors, {\em
  Advances in Cryptology}, volume 209 of {\em Lecture Notes in Computer
  Science}, pages 224--314. Springer Berlin Heidelberg, 1985.

\bibitem{ro12j}
J.~Rosenthal and A.-L. Trautmann.
\newblock A complete characterization of irreducible cyclic orbit codes and
  their {P}l\"ucker embedding.
\newblock {\em Designs, Codes and Cryptography}, 66:275--289, 2013.

\bibitem{sc14}
M.~Schwartz.
\newblock Gray codes and enumerative coding for vector spaces.
\newblock {\em IEEE Transactions on Information Theory}, 60(1):271--281, Jan
  2014.

\bibitem{si11}
N.~Silberstein and T.~Etzion.
\newblock Enumerative coding for {G}rassmannian space.
\newblock {\em IEEE Transactions on Information Theory}, 57(1):365--374, Jan
  2011.

\bibitem{si15}
N.~Silberstein and A.-L. Trautmann.
\newblock Subspace codes based on graph matchings, ferrers diagrams, and
  pending blocks.
\newblock {\em Information Theory, IEEE Transactions on}, 61(7):3937--3953,
  July 2015.

\bibitem{si08j}
D.~Silva, F.~R. Kschischang, and R.~K\"otter.
\newblock A rank-metric approach to error control in random network coding.
\newblock {\em IEEE Transactions on Information Theory}, 54(9):3951 --3967,
  2008.

\bibitem{sk10}
V.~Skachek.
\newblock Recursive code construction for random networks.
\newblock {\em IEEE Transactions on Information Theory}, 56(3):1378--1382,
  2010.

\bibitem{tr13phd}
A.-L. Trautmann.
\newblock {\em Constructions, Decoding and Automorphisms of Subspace Codes}.
\newblock PhD thesis, University of Zurich, Switzerland, 2013.

\bibitem{tr12}
A.-L. Trautmann.
\newblock Isometry and automorphisms of constant dimension codes.
\newblock {\em Advances in Mathematics of Communications (AMC)}, 7(2):147--160,
  2013.

\bibitem{tr14p}
A.-L. Trautmann.
\newblock Message encoding for spread and orbit codes.
\newblock In {\em Proceedings of the 2014 IEEE International Symposium on
  Information Theory (ISIT)}, pages 2594--2598, 2014.

\bibitem{tr11a}
A.-L. Trautmann, F.~Manganiello, M.~Braun, and J.~Rosenthal.
\newblock Cyclic orbit codes.
\newblock {\em IEEE Transactions on Information Theory}, 59(11):7386--7404,
  2013.

\bibitem{tr10p}
A.-L. Trautmann, F.~Manganiello, and J.~Rosenthal.
\newblock Orbit codes - a new concept in the area of network coding.
\newblock In {\em IEEE Information Theory Workshop (ITW)}, pages 1--4, Dublin,
  Ireland, 2010.

\bibitem{tr10}
A.-L. Trautmann and J.~Rosenthal.
\newblock New improvements on the echelon-{F}errers construction.
\newblock In {\em Proceedings of the 19th International Symposium on
  Mathematical Theory of Networks and Systems -- MTNS}, pages 405--408,
  Budapest, Hungary, 2010.

\end{thebibliography}

\end{document}